\crefname{algocf}{algorithm}{algorithms}
\Crefname{algocf}{Algorithm}{Algorithms}
\newcommand{\leader}{\mathsf{Leader}}
\newcommand{\nonleader}{\mathsf{Non\text{-}Leader}}
\newcommand{\isleader}{\mathsf{isLeader}}
\newcommand{\counter}{\mathsf{Count}}
\newcommand{\id}{\mathsf{ID}}
\newcommand{\idmax}{\mathsf{ID_{\max}}}
\newcommand{\idmin}{\mathsf{ID_{\min}}}
\newcommand{\diff}{\mathsf{Diff}}
\newcommand{\length}{\mathsf{length}}
\newcommand{\notify}{\mathsf{Notify}}
\newcommand{\parentport}{\mathsf{Parent}}
\newcommand{\state}{\mathsf{State}}
\newcommand{\start}{\mathsf{StartDFS}}
\newcommand{\sendall}{\mathsf{SendAll}}
\newcommand{\sendpulsesuntil}{\mathsf{SendPulsesUntil}}
\newcommand{\sendpulses}{\mathsf{SendPulses}}
\newcommand{\rcvpulse}{\mathsf{RcvPulse}}
\newcommand{\sendexplore}{\mathsf{SendExplore}}
\newcommand{\senddone}{\mathsf{SendDone}}
\newcommand{\receiveexplore}{\mathsf{ReceiveExplore}}
\newcommand{\receivedone}{\mathsf{ReceiveDone}}
\newcommand{\send}{\mathsf{Send}}
\newcommand{\receive}{\mathsf{Receive}}
\newcommand{\msg}{\texttt{p}\xspace}
\newcommand{\IH}{\mathsf{IH}}
\newtheorem{theorem}{Theorem}
\newtheorem{lemma}[theorem]{Lemma}
\newtheorem{observation}[theorem]{Observation}
\newtheorem{corollary}[theorem]{Corollary}
\newtheorem{proposition}[theorem]{Proposition}
\newtheorem{question}{Question}
\newtheorem{definition}[theorem]{Definition}
\title{Content-Oblivious Leader Election \\ in 2-Edge-Connected Networks}
\author{\hspace{1cm}
Jérémie {Chalopin}\footnote{Aix-Marseille University, CNRS, LIS, Marseille, France. ORCID: 0000-0002-2988-8969. Email: jeremie.chalopin@lis-lab.fr} \and
 Yi-Jun Chang\footnote{National University of Singapore, Singapore. ORCID: 0000-0002-0109-2432. Email: cyijun@nus.edu.sg} 
 \and Lyuting Chen\footnote{National University of Singapore, Singapore. ORCID: 0009-0002-8836-6607. Email: e0726582@u.nus.edu}\hspace{1cm}
 \and Giuseppe A. {Di Luna}\footnote{DIAG, Sapienza University of Rome, Italy. 
 Email: diluna@diag.uniroma1.it}
 \and Haoran Zhou\footnote{National University of Singapore, Singapore. ORCID: 0009-0001-2458-5344. Email: haoranz@u.nus.edu}
}
\date{}
\begin{document}

\maketitle

\begin{abstract}
Censor-Hillel, Cohen, Gelles, and Sela (PODC 2022 \& Distributed Computing 2023) studied \emph{fully-defective} asynchronous networks, where communication channels may suffer an extreme form of alteration errors, rendering messages completely corrupted. The model is equivalent to \emph{content-oblivious} computation, where nodes communicate solely via pulses. They showed that if the network is 2-edge-connected, then any algorithm for a noiseless setting can be simulated in the fully-defective setting; otherwise, no non-trivial computation is possible in the fully-defective setting. However, their simulation requires a predesignated leader, which they conjectured to be necessary for any non-trivial content-oblivious task.

Recently, Frei, Gelles, Ghazy, and Nolin (DISC 2024) refuted this conjecture for the special case of oriented ring topology. They designed two asynchronous  content-oblivious leader election algorithms with message complexity $O(n \cdot \idmax)$, where $n$ is the number of nodes and $\idmax$ is the maximum $\id$. The first algorithm stabilizes in unoriented rings without termination detection. The second algorithm quiescently terminates in oriented rings, thus enabling the execution of the simulation algorithm after leader election. 

In this work, we present two results:
\begin{description}
    \item[General 2-edge-connected topologies:] First, we show an asynchronous content-oblivious leader election algorithm that quiescently terminates in any 2-edge-connected network with message complexity $O(m \cdot N \cdot \idmin)$, where $m$ is the number of edges,  $N$ is a known upper bound on the number of nodes, and $\idmin$ is the smallest $\id$.
    
    Combined with the above simulation, this result shows that whenever a size bound $N$ is known, any noiseless algorithm can be simulated in the fully-defective model without a preselected leader, fully refuting the conjecture.

    \item[Unoriented rings:] We then show that the knowledge of $N$ can be dropped in unoriented ring topologies by presenting a quiescently terminating election algorithm with message complexity $O(n \cdot \id_{\max})$ that matches the previous bound.  
    
    Consequently, this result constitutes a \emph{strict} improvement over the previous state of the art and shows that, on rings, fully-defective and noiseless communication are computationally equivalent, with no additional assumptions.
\end{description}

\end{abstract}
\thispagestyle{empty}
\newpage
\tableofcontents
\thispagestyle{empty}
\newpage
\pagenumbering{arabic}

\section{Introduction}

Fault tolerance is a cornerstone of distributed computing, enabling systems to remain operational despite various failures, such as node crashes or channel noise~\cite{Barborak1993,Dubrova2013,Koren2020,Raynal2018}. 
One important category of faults, known as \emph{alteration errors}, changes the content of messages but does not create new messages or eliminate existing ones. Traditional methods for handling such errors often rely on adding redundancy through coding techniques and assume known bounds on the frequency or severity of faults. However, in many practical settings, such assumptions may not hold, limiting the effectiveness of these approaches.

\paragraph{Fully-defective networks.}  Censor-Hillel, Cohen, Gelles, and Sela~\cite{fully-defective-22} introduced the model of \emph{fully-defective} asynchronous networks, in which all links are subject to a severe form of alteration errors: The content of every message can be arbitrarily modified, although the adversary cannot insert new messages or remove existing messages. Since messages can no longer carry meaningful information, the model is equivalent to \emph{content-oblivious} computation, where communication is reduced to sending and receiving \emph{pulses}. Algorithms in this setting operate entirely based on the patterns of pulse arrivals.

\paragraph{An impossibility result.} A natural approach to designing a content-oblivious algorithm is to use unary encoding, representing messages as sequences of pulses. However, in the \emph{asynchronous} setting, where no upper bound on message delivery time is known, there is no reliable way to detect the end of a sequence. Censor-Hillel, Cohen, Gelles, and Sela~\cite{fully-defective-22} established a strong \emph{impossibility} result: Let \( f(x, y) \) be any non-constant function, then any two-party deterministic asynchronous content-oblivious communication protocol, where one party holds \( x \) and the other holds \( y \), must fail to compute \( f \) correctly.

 The impossibility result implies that, in a certain sense, no non-trivial computation is possible in any network that is \emph{not} 2-edge-connected---that is, any network containing an edge $e$ whose removal disconnects the graph into two components. By assigning control of each component to a different party, the problem reduces to the two-party setting to which the impossibility result applies.

\paragraph{Algorithm simulation over 2-edge-connected networks.} 
Interestingly, Censor-Hillel, Cohen, Gelles, and Sela~\cite{fully-defective-22} showed that in 2-edge-connected networks, any algorithm designed for the noiseless setting can be simulated in a fully defective network. We briefly sketch the main ideas underlying their approach below.

For the special case of an oriented ring, the challenge of termination detection can be addressed. Suppose a node wishes to broadcast a message. It can send a sequence of pulses that represents the unary encoding of the message in one direction and use the opposite direction to signal termination. To allow all nodes the opportunity to communicate, a token can be circulated in the ring, enabling each node to speak in turn.

To extend this approach to general 2-edge-connected networks, they leverage a result by Robbins~\cite{robbins1939theorem}, which states that any 2-edge-connected undirected graph can be oriented to form a strongly connected directed graph. This guarantees the existence of a \emph{Robbins cycle}, which is an oriented ring that traverses all nodes, possibly revisiting some nodes multiple times. They designed a content-oblivious algorithm that uses a depth-first search (DFS) to construct such a cycle.

\paragraph{Preselected leader assumption.} 
A key limitation of their result is that it inherently relies on a \emph{preselected leader}. For instance, a leader is needed to generate a unique token in a ring or to select the root for the DFS. Censor-Hillel, Cohen, Gelles, and Sela~\cite{fully-defective-22} conjectured that the preselected leader assumption is essential for any non-trivial content-oblivious computation, leaving the following question open.

\begin{question}\label{Q1}
In 2-edge-connected networks, is it possible to simulate any algorithm designed for a noiseless setting in the fully defective model without relying on a preselected leader?
\end{question}

\paragraph{Content-oblivious leader election on rings.} 
To answer the above question affirmatively, it suffices to design a content-oblivious leader election algorithm that possesses certain desirable properties, enabling it to be composed with other algorithms. Recently,  Frei, Gelles, Ghazy, and Nolin~\cite{content-oblivious-leader-election-24} answered this question for the special case of \emph{oriented rings}. They designed two asynchronous content-oblivious leader election algorithms, both with message complexity \( O(n \cdot \idmax) \), where \( n \) is the number of nodes and \( \idmax \) is the maximum identifier. The first algorithm \emph{stabilizes} in unoriented rings, without termination detection. The second algorithm \emph{quiescently} terminates in oriented rings, ensuring that no further pulses are sent to a node \( v \) after the algorithm on \( v \) declares termination.

The quiescent termination property enables the execution of the general simulation algorithm discussed earlier, following their leader election algorithm, as each node can correctly identify which pulses belong to which algorithm. As a result, this finding refutes the conjecture of Censor-Hillel, Cohen, Gelles, and Sela~\cite{fully-defective-22} for the special case of oriented ring topologies. Given the work by  Frei, Gelles, Ghazy, and Nolin~\cite{content-oblivious-leader-election-24}, the next question to address is whether content-oblivious leader election is feasible for general 2-edge-connected networks.

\begin{question}\label{Q2} In 2-edge-connected networks, is it possible to design a quiescently terminating content-oblivious leader election algorithm?
 \end{question}

Existing approaches for content-oblivious leader election heavily rely on the \emph{oriented}  ring structure, making  use of its two orientations to handle two types of messages. We briefly outline the key ideas behind the algorithms of  Frei, Gelles, Ghazy, and Nolin~\cite{content-oblivious-leader-election-24} as follows.

A central building block of their algorithms is a stabilizing leader election algorithm on oriented rings, which operates as follows. Each node generates a token, and all tokens travel in the same direction. Each node \( v \) maintains a counter to track the number of tokens it has passed. Once the counter reaches \( \id(v) \), node \( v \) destroys one token and temporarily elects itself as leader. Any incoming token will cause it to relinquish leadership. All counters eventually reach \( \idmax \) and stabilize. The node \( r \) with \( \id(r) = \idmax \) becomes the unique leader. This algorithm is not terminating, as nodes cannot determine whether additional tokens will arrive.

To achieve a quiescently terminating algorithm on oriented rings, they run the procedure in both directions sequentially, using the moment when the two counter values match as the termination condition. A node \( v \) begins the second run when its counter value from the first run reaches \( \id(v) \).

We stress that orientation is a key assumption of the above algorithm. In fact, they conjectured the impossibility of terminating leader election in unoriented rings~\cite{content-oblivious-leader-election-24}.

To achieve a stabilizing algorithm on unoriented rings, they run the procedure in both directions concurrently. The correctness follows from the observation that the same node is selected as the leader in both executions.

\subsection{Contributions}

We answer \Cref{Q1} affirmatively under the assumption that an upper bound \( N \geq n\) on the number of nodes $n=|V|$ is known, by presenting a content-oblivious leader election algorithm that quiescently terminates on any 2-edge-connected network.

  
\begin{restatable}{theorem}{mainthm}
    There is a quiescently terminating leader election algorithm with message complexity $O(m\cdot N\cdot \idmin)$ in any 2-edge-connected network $G=(V,E)$, where $m=|E|$ is the number of edges, $N \geq n$ is a known upper bound on the number of nodes $n=|V|$,  and $\idmin$ is the smallest $\id$. Moreover, the leader is the last node to terminate.
    \label{thm:main}
\end{restatable}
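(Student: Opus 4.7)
The plan is to reduce, in essence, to the oriented-ring setting already handled in the excerpt. Concretely, I would let every node $v$ attempt in parallel to execute the DFS-based Robbins-cycle construction of Censor-Hillel, Cohen, Gelles, and Sela~\cite{fully-defective-22} rooted at itself. A second ingredient, layered on top, is a collision (or ``duel'') sub-protocol that resolves any encounter between two attempts in favour of the candidate with the smaller $\id$; the duel is implemented by pulse counting, analogous to the counter-based tie-breaking of Frei, Gelles, Ghazy, and Nolin~\cite{content-oblivious-leader-election-24} but tuned so that the minimum $\id$ wins. The known upper bound $N$ then gives each attempt a safe way to recognize global completion: the surviving attempt certifies leadership once its DFS has incorporated at most $N$ distinct nodes into its Robbins cycle.

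The proof would then decompose into four steps. For \emph{safety}, I would argue by induction over the collisions in an execution that the node of minimum $\id$, call it $v^{*}$, wins every duel it participates in, so its DFS attempt is the only one that survives to completion. For \emph{liveness}, 2-edge-connectedness together with Robbins' theorem ensures that $v^{*}$'s DFS can expand along a strongly connected orientation of $G$ to reach every node; combined with safety, this yields that $v^{*}$ is eventually the unique node to declare leadership. For \emph{quiescent termination}, I would argue that, once the losing attempts have been absorbed, the residual pulses of the losers drain before $v^{*}$ enters its termination phase; I would enforce this by a final acknowledgement wave travelling backwards along the constructed Robbins cycle, so that $v^{*}$ terminates only after every other node has done so, which in turn guarantees that the leader is the last to terminate. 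For \emph{complexity}, each of the $m$ edges can be contested at most $O(N)$ times (at most once per candidate absorption, and there are at most $n\le N$ candidates), and each contest exchanges $O(\idmin)$ pulses, giving the claimed $O(m\cdot N\cdot \idmin)$ bound.

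The hardest step is the design and analysis of the collision sub-protocol in the fully-defective asynchronous model. Pulse-counting duels are inherently delicate: the protocol must (i) terminate in finite time using only pulses, (ii) yield a deterministic winner equal to the smaller-$\id$ candidate in spite of arbitrary pulse interleavings, and (iii) leave no residual pulses that could be misinterpreted by subsequent phases. Integrating this cleanly with the DFS back-tracking of the Censor-Hillel \emph{et al.} construction, where a losing attempt may have to be absorbed mid-traversal, will likely require an auxiliary ``cancellation'' primitive whose correctness itself needs a careful invariant argument. The remaining obstacles---verifying that the amortized message count over concurrent absorptions really is $O(m N \idmin)$, and making the leader provably last to terminate---should be comparatively routine once the duel primitive is in hand.
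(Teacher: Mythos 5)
There is a genuine gap, and it sits exactly where you place it yourself: the ``duel'' primitive. In the content-oblivious asynchronous model, pulses carry no content and pulses belonging to different concurrent DFS attempts arriving at a node are indistinguishable; deciding a duel ``by pulse counting, tuned so that the minimum $\id$ wins'' requires the two candidates to effectively communicate their identifiers in unary and to detect the end of each other's sequences, which is precisely the obstacle behind the two-party impossibility result of Censor-Hillel et al.\ that motivates this whole line of work. Moreover, a losing attempt must be cancelled mid-traversal while its pulses are still in transit, and a node has no way to tell whether further pulses from a dead attempt are still coming; ``draining'' residual pulses before $v^{*}$ terminates is therefore not something you can simply enforce with an acknowledgement wave --- it is the crux of quiescent termination. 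Your use of $N$ is also not sound as stated: certifying leadership ``once the DFS has incorporated at most $N$ distinct nodes'' is vacuous (every attempt incorporates at most $N$ nodes), and since only an upper bound $N\geq n$ is known, counting visited nodes cannot certify that all of $V$ has been reached; a DFS root detects completion by exhausting its ports, not via $N$. Finally, the complexity accounting is off: a duel decided by counting between two candidates costs pulses proportional to the smaller of \emph{their} identifiers, not to the global $\idmin$, so the claimed $O(m\cdot N\cdot \idmin)$ bound does not follow from the sketch.

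The paper's proof avoids concurrent competing traversals altogether, which is why it never needs a duel. Phase one is a global $\alpha$-synchronizer-style counting process: every node increments a counter upon hearing from all neighbors, identifiers are first multiplied by $N$, and a node freezes and self-elects when its counter equals its $\id$; since counters of adjacent nodes differ by at most one (a ``chain-and-anchor'' argument), only the minimum-$\id$ node $r$ can ever reach its own identifier, so the election is implicit and requires no comparison protocol. This is the actual role of $N$: spacing identifiers so that $\counter(v)\leq \id(r)+N-1<\id(v)$ for all $v\neq r$. Phase two is a single DFS rooted at $r$ in which each edge traversal consists of sending pulses until exactly $\id(r)+N+2$ have crossed that port; a receiver detects the notification because this count exceeds what synchronized counting could ever produce ($\rho_i-\sigma_i>1$), and because the per-edge, per-direction pulse total is fixed in advance, every node can locally verify that no further pulses will arrive, which is what yields quiescence and makes $r$ (to which the DFS returns last) the final node to terminate. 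If you want to salvage your concurrent-attempts plan, you would have to design and prove the collision/cancellation primitive from scratch --- a task essentially as hard as the theorem itself --- whereas the paper's two-phase structure sidesteps it entirely.
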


Combined with the simulation result of Censor-Hillel, Cohen, Gelles, and Sela~\cite{fully-defective-22}, our finding implies that in any 2-edge-connected network, any algorithm designed for the noiseless setting can be simulated in the fully-defective setting, without assuming a preselected leader. More precisely, as established in the prior work~\cite{content-oblivious-leader-election-24}, a sufficient condition for achieving this objective consists of (1) quiescent termination and (2) the requirement that the leader is the final node to terminate. This is because the algorithmic simulation in~\cite{fully-defective-22} is initiated by the leader.
Hence we answer \Cref{Q2} affirmatively, again under the assumption that an upper bound \( N \geq n\) is known. This refutes the original conjecture posed by Censor-Hillel, Cohen, Gelles, and Sela~\cite{fully-defective-22}.

For the special case of ring topologies, we can drop the assumption of knowing $N$ without requiring orientation.

\begin{restatable}{theorem}{secondmainthm}
There exists a quiescently terminating leader election algorithm with message complexity \(O(n \cdot \idmax)\) in any unoriented ring, where \(\idmax\) is the largest identifier. Moreover, the leader is the last node to terminate.
\label{thm:main2}
\end{restatable}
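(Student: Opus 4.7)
}
The plan is to show that, contrary to the conjecture of Frei, Gelles, Ghazy, and Nolin~\cite{content-oblivious-leader-election-24}, terminating content-oblivious leader election is possible on unoriented rings. My approach is to extend their sequential two-run construction---which works on oriented rings by performing one stabilizing run in each global direction---into a construction that instead unfolds concurrently and symmetrically across each node's two (locally labeled) ports, relying only on locally computable counter invariants in place of a globally agreed orientation.

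I would begin from Frei et al.'s stabilizing election on unoriented rings as a base: each node sends initial tokens through both of its ports and maintains a counter $c_p(v)$ per port $p$. When $c_p(v) = \id(v)$, node $v$ destroys the arriving token on port $p$; after sufficient time both counters saturate at $\idmax$, uniquely electing the node $r$ with $\id(r) = \idmax$ (in both directional views). To add termination, I would augment this with a ``reverse'' sub-protocol that mirrors the second run of the oriented case: whenever $v$ destroys a token on port $p$, it injects a fresh token on the opposite port $\bar{p}$. These reverse tokens form a secondary flow around the ring, producing secondary counters $c'_0, c'_1$ at each node. The termination condition would then be a locally checkable equality between primary and reverse counters, designed to hold precisely when both sub-protocols have globally quiesced. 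Since $r$ has the largest ID, it is naturally the last node to destroy tokens in either direction and hence the last to complete its reverse sub-protocols, which would guarantee that $r$ terminates last.

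The main obstacle is that, because the protocol is content-oblivious, primary and reverse pulses on a given port are indistinguishable; the termination condition therefore must depend solely on cumulative pulse counts. The key invariant I would aim to establish is that each node sends a deterministic number of pulses through each port in each sub-protocol---in the primary phase, precisely $\idmax$ pulses per port---so that each node, using only its own ID and its local counters, can unambiguously recognize the moment at which quiescence is reached on each port. Properly formalizing this invariant, verifying that the termination condition admits no false positives despite the interleaving of primary and reverse pulses at each port, and designing the reverse sub-protocol so that its pulse pattern can be cleanly disentangled from the primary stream constitute the technical heart of the proof. Once these are in place, the $O(n\cdot \idmax)$ message complexity follows by summing the pulses of both sub-protocols over all nodes.
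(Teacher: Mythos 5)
Your plan leaves unresolved exactly the step that makes the unoriented case hard, and you say so yourself (calling it ``the technical heart of the proof''), so as it stands this is a gap rather than a proof. Two concrete problems. First, on an unoriented ring the local labels $0/1$ are set adversarially and need not be globally consistent, so ``inject a fresh token on the opposite port'' does not define a coherent secondary flow around the ring: at different nodes the opposite port points in different global directions, the reverse tokens interleave with primary tokens travelling both ways, and since pulses carry no content there is no stream to disentangle---cumulative per-port counts are all a node ever sees. Second, your key invariant, that in the primary phase every node sends exactly $\idmax$ pulses per port and can ``unambiguously recognize the moment at which quiescence is reached,'' is not locally checkable: the algorithm must be uniform, nodes other than the maximum do not know $\idmax$, and recognizing that a counter has stabilized at $\idmax$ is precisely the termination-detection problem that the stabilizing algorithm cannot solve (in the oriented-ring algorithm of Frei et al.\ this is circumvented by running the two passes \emph{sequentially} in the two globally consistent directions, which is exactly what orientation buys and what you no longer have). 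Without a concrete invariant ruling out false positives---a node observing equal primary/reverse counts at an intermediate moment and terminating prematurely, thereby also breaking quiescence for its neighbours---none of the claimed properties (quiescent termination, leader last to terminate, $O(n\cdot\idmax)$ messages) is actually derived.

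For contrast, the paper's argument avoids any attempt to separate two pulse streams. Each node first doubles its identifier (so no two identifiers are consecutive) and competes for a number of pseudo-synchronous rounds equal to its own identifier; it then performs a \emph{solitude check} by sending two extra pulses on port $0$ and observing on which port the next pulses arrive. A node that fails the check rebalances and becomes a pure relay, after which the execution is provably indistinguishable from one on a smaller ring with that node removed; iterating this reduction, only the maximum-identifier node passes the check. It then emits a third pulse, and every non-leader terminates when the difference between pulses received on its two ports reaches $3$, while the leader terminates last once its pulse returns. The identifier doubling and the local difference counter replace any need to count up to $\idmax$ or to distinguish ``primary'' from ``reverse'' traffic, which is the mechanism your proposal is missing.
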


This result refutes the conjecture of~\citet{content-oblivious-leader-election-24}, demonstrating that orientation is not necessary for quiescently terminating leader election in rings.

\begin{table}[ht!]
\begin{center}
\caption{New and old results on content-oblivious leader election. }\label{main-table}
\begin{tabular}{lllll}
\textbf{Topology} & \textbf{Guarantee} & \textbf{Messages} & \textbf{Condition} & \textbf{Reference} \\ \hline
\multicolumn{1}{|l|}{Unoriented rings} & \multicolumn{1}{l|}{Stabilizing} & \multicolumn{1}{l|}{$O\left(n\cdot \idmax\right)$} & \multicolumn{1}{l|}{$\times$} & \multicolumn{1}{l|}{\cite{content-oblivious-leader-election-24}} \\ \hline
\multicolumn{1}{|l|}{Oriented rings} & \multicolumn{1}{l|}{Quiescently terminating} & \multicolumn{1}{l|}{$O\left(n\cdot \idmax\right)$} & \multicolumn{1}{l|}{$\times$} & \multicolumn{1}{l|}{\cite{content-oblivious-leader-election-24}} \\ \hline
\multicolumn{1}{|l|}{Oriented rings} & \multicolumn{1}{l|}{Stabilizing} & \multicolumn{1}{l|}{$\Omega\left(n \cdot \log \frac{\idmax}{n}\right)$} & \multicolumn{1}{l|}{$\times$} & \multicolumn{1}{l|}{\cite{content-oblivious-leader-election-24}} \\ \hline
\multicolumn{1}{|l|}{2-edge-connected} & \multicolumn{1}{l|}{Quiescently terminating} & \multicolumn{1}{l|}{$O\left(m\cdot N\cdot \idmin\right)$} & \multicolumn{1}{l|}{Known $N$} & \multicolumn{1}{l|}{\Cref{thm:main}} \\ \hline

\multicolumn{1}{|l|}{Unoriented rings} & \multicolumn{1}{l|}{Quiescently terminating} & \multicolumn{1}{l|}{$O\left(n\cdot \idmax\right)$} & \multicolumn{1}{l|}{$\times$} & \multicolumn{1}{l|}{\Cref{thm:main2}} \\ \hline

\end{tabular}
\end{center}
\end{table}

See \Cref{main-table} for a comparison of our results  and the results from prior work~\cite{content-oblivious-leader-election-24}. 

Our algorithm for general 2-edge-connected topologies has a higher message complexity of $O(m \cdot N \cdot \idmin)$ compared to the previous $O(n \cdot \idmax)$ bound in terms of network size, but it reduces the dependency on identifiers from $\idmax$ to $\idmin$. 

At first glance, the upper bound of $O(m \cdot N \cdot \idmin)$  may seem to contradict the lower bound of $\Omega\left(n \cdot \log \frac{\idmax}{n}\right)$ established in previous work~\cite{content-oblivious-leader-election-24}. 

However, the two results are consistent, as the lower bound does not make any assumptions about the value of $\idmin$. Precisely, their bound is of the form $\Omega\left(n \cdot \log \frac{k}{n}\right)$, where $k$ is the number of distinct, \emph{assignable} identifiers in the network. In other words, $k$ is the size of the $\id$ space.

While our algorithm for \Cref{thm:main} solves leader election in any 2-edge-connected topology, 

it does require a mild assumption: All nodes must \emph{a~priori} agree on a known upper bound $N$ on the number of nodes $n = |V|$ in the network, making it \emph{non-uniform}.  This assumption is widely adopted in many existing leader election and distributed graph algorithms, as upper bounds on network size are often available or can be estimated.  In contrast, the terminating algorithm from prior work~\cite{content-oblivious-leader-election-24} is \emph{uniform}, requiring no prior knowledge of the network, but it only works on oriented rings. 

Our leader election algorithm on rings requires neither orientation nor knowledge of $N$, while matching the message complexity of the algorithms of~\citet{content-oblivious-leader-election-24}. It therefore constitutes a \emph{strict} improvement over the previous state of the art. 

\subsection{Additional Related Work}

Fault-tolerant distributed computing involves two main challenges. The first is message corruption, which can occur due to factors such as channel noise. The second is the presence of faulty nodes and edges, which can arise from issues like unreliable hardware or malicious attacks.

To address message corruption, a common approach is to introduce additional redundancy using coding techniques, known as \emph{interactive coding}. The noise affecting the messages can be either random or adversarial, and there is usually an upper bound on the level of randomness or adversarial behavior. The study of interactive coding was initiated by Schulman~\cite{schulman1992communication,schulman1993deterministic,556671} in the two-party setting and later extended to the multi-party case by Rajagopalan and Schulman~\cite{rajagopalan1994coding}. For a comprehensive overview of interactive coding, see the survey by Gelles~\cite{gelles2017coding}.

 For networks with $n$ nodes, the maximum fraction of corrupted messages that a distributed protocol can tolerate is $\Theta(1/n)$~\cite{jain2015interactive}. If more than this fraction of messages are corrupted, the adversary can fully disrupt the communication of the node that communicates the fewest messages. Censor-Hillel, Gelles, and Haeupler~\cite{censor2019making} presented a distributed interactive coding scheme that simulates any asynchronous distributed protocol while tolerating an optimal corruption of $\Theta(1/n)$ of all messages. A key technique underlying their algorithm is a content-oblivious BFS algorithm.

The celebrated impossibility theorem of Fischer, Lynch, and Paterson~\cite{fischer1985impossibility} states that achieving consensus in an asynchronous distributed system is impossible for a deterministic algorithm when one or more nodes may crash.
 Similar impossibility results on the solvability of consensus have been found when processes are correct but may experience communication failures, such as message omissions, insertions, and corruptions \cite{santoro1989time}.

If the number of Byzantine edges is bounded by $f$, reliable communication can only be achieved if the graph is at least $2f$-edge-connected~\cite{dolev1982byzantine,pelc1992reliable}.

 A series of recent studies has focused on resilient and secure distributed graph algorithms in the {synchronous} setting~\cite{DBLP:conf/podc/ParterY19,DBLP:conf/soda/ParterY19a,DBLP:conf/wdag/HitronP21a,DBLP:conf/wdag/HitronP21,DBLP:conf/wdag/HitronPY22,DBLP:conf/podc/Parter22,DBLP:conf/podc/FischerP23,DBLP:conf/innovations/HitronPY23}. These works have developed compilation schemes that transform standard distributed algorithms into resilient and secure versions. For instance, in a $(2f+1)$-edge-connected network, any distributed algorithm in the $\mathsf{CONGEST}$ model can be adapted to remain resilient against up to $f$ adversarial edges~\cite{DBLP:conf/wdag/HitronP21a}.

 Content-oblivious computation has also been studied in the synchronous setting. In the \emph{beeping model} introduced by Cornejo and Kuhn~\cite{beeping-model-10}, during each communication round, a node can either beep or listen. If the node listens, it can distinguish between silence or the presence of at least one beep. Several leader election algorithms have been developed for the beeping model and its variations~\cite{beeping-multi-hop-radio-networks-19,beeping-time-optimal-18,forster2014deterministic,ghaffari2013near,beeping-minimal-weak-communication-25}.

Several additional distributed models have been designed with limited communication capabilities to capture various real-world constraints. These include radio networks~\cite{chlamtac2003broadcasting}, population protocols~\cite{angluin2006computation}, and stone-age distributed computing~\cite{emek2013stone}. Leader election has been extensively studied in these models~\cite{alistarh2022near, berenbrink2020optimal, beeping-multi-hop-radio-networks-19, czumaj2021exploiting, ChangKPWZ17, doty2018stable, emek2021thin,  ghaffari2013near, sudo2020leader, beeping-minimal-weak-communication-25}.
\section{Preliminaries}

We present the model and problem considered in the paper and overview some basic terminology. 

\subsection{Content-Oblivious Model} The communication network is abstracted as a graph $G = (V,E)$, where each node $v \in V$ is a computing device and each edge $e \in E$ is a bidirectional communication link. We write $n = |V|$ and $m = |E|$. Each node $v$ has a unique identifier $\id(v)$, and we define $\idmax = \max_{v \in V} \id(v)$ and $\idmin = \min_{v \in V} \id(v)$. Throughout the paper, we assume that $G$ is 2-edge-connected, meaning that the removal of any single edge does not disconnect the graph.

\paragraph{General 2-edge-connected topologies.}
 For our result on general 2-edge-connected topologies,  we assume no prior knowledge of the network topology, except that each node is given an upper bound $N \geq n$ on the total number of nodes. That is, the algorithm is non-uniform.

The degree of a node $v$, denoted by $\deg(v)$, is the number of edges incident to $v$. Although a node does not initially know the identifiers of its neighbors, it can distinguish among its incident edges using \emph{port numbering}---a bijection between the set of incident edges and the set $\{1, 2, \ldots, \deg(v)\}$. Therefore, we use the term port to refer to the local endpoint of an edge, uniquely identified by its port number, which is used to send and receive messages to and from a specific neighbour.

\paragraph{Unoriented rings.}
For our result on rings, we assume that the topology is an unoriented ring.
In this case, each node $v_i$ is connected to nodes $v_{i-1}$ and $v_{i+1}$ (indices taken modulo $n$) by two local communication ports: port $0$ and port $1$.   
These local labels do not induce a consistent orientation of the ring; that is, the global assignment of labels $0$ and $1$ to ports is arbitrarily set at the beginning by an adversary. It may be the case that a message sent by $v_1$ on port $0$ is delivered to node $v_0$ on its local port $1$, and a message sent by $v_1$ on port $1$ is delivered to $v_2$ on its local port $1$. A correct algorithm must work for any possible assignment of port labels.

\paragraph{Content-oblivious communication.}
In the content-oblivious setting, nodes communicate by sending \emph{pulses}, which are messages without any content. We consider the \emph{asynchronous} model, where node behavior is \emph{event-driven}: A node can act only upon initialization or upon receiving a pulse. Based on the pattern of previously received pulses, each node decides its actions, which may include sending any number of pulses through any subset of its ports. We emphasize that when a node receives a pulse, it knows the port through which the pulse arrived. Likewise, a node can specify how many pulses to send through each port. We restrict our attention to the \emph{deterministic} setting, in which nodes do not use randomness in their decision-making.

\paragraph{Time.} There is no upper bound on how long a pulse may take to traverse an edge, but every pulse is guaranteed to be delivered after a finite delay. An equivalent way to model this behavior is by assuming the presence of a scheduler. At each time step, as long as there are pulses in transit, the scheduler arbitrarily selects one and delivers it.

We assume that upon receiving a pulse, all actions triggered by that pulse are performed \emph{instantaneously}, so that the system can be analyzed in \emph{discrete} time steps. We focus only on specific, well-defined moments in the execution of the algorithm---namely, the moments just before the scheduler selects a pulse to deliver. At these times, all the actions resulting from earlier pulse deliveries have been completed, and the system is in a stable state, awaiting the next step. We do not analyze the system during any intermediate period in which a node is still performing the actions triggered by a pulse arrival.

\paragraph{Counting the number of pulses.}
We write $[s]=\{1,2,\ldots,s\}$.
In the algorithm description and analysis for \Cref{thm:main}, for each node $v \in V$ and each port number $i \in [\deg(v)]$, we use the following notation to track the number of pulses sent and received so far:

\begin{itemize}
\item $\sigma_i(v)$ denotes the number of pulses sent on port $i$ of node $v$.
\item $\rho_i(v)$ denotes the number of pulses received on port $i$ of node $v$.
\end{itemize}

\subsection{Leader Election}
We now define the \emph{leader election} problem.

\begin{definition} [Leader election] \label{def:LE}
The task of leader election requires that each node outputs exactly one of $\leader$ and $\nonleader$, with the additional requirement that exactly one node outputs $\leader$.
\end{definition}

More precisely, each node maintains a binary internal variable $\isleader$ to represent its current output, taking the value $\leader$ or $\nonleader$. The value of $\isleader$ may change arbitrarily many times during the execution of the algorithm, but it must eventually stabilize to a final value that remains unchanged thereafter. The final outputs of all nodes must collectively satisfy the conditions of \Cref{def:LE}.

Due to the asynchronous nature of our model, there are various ways in which an algorithm can \emph{finish}. Based on this, we classify algorithms into three types, ordered by increasing strength of termination guarantees. 

   \begin{definition} [Stabilization]
 A node stabilizes at time $t$ if its output does not change after time $t$. An algorithm is {stabilizing} if it guarantees that all nodes eventually stabilize.
    \end{definition}

    \begin{definition} [Termination]
    A node terminates at time $t$ if it explicitly declares termination at time $t$. After declaring termination, the node ignores all incoming pulses and can no longer change its output. An algorithm is {terminating} if it guarantees that all nodes eventually terminate.
    \end{definition}

    \begin{definition} [Quiescent termination]
    A node quiescently terminates at time $t$ if it terminates at time $t$ and no further pulses arrive at the node thereafter. An algorithm is {quiescently terminating} if it guarantees that all nodes eventually quiescently terminate.
    \end{definition}

Ideally, we aim to design leader election algorithms that are quiescently terminating, as this property allows a second algorithm to be executed safely after leader election. In contrast, if the algorithm guarantees only stabilization or termination, nodes may be unable to distinguish whether an incoming pulse belongs to the leader election process or the subsequent algorithm. This ambiguity can affect the correctness of both algorithms. For this reason, prior work~\cite{content-oblivious-leader-election-24} establishes a general algorithm simulation result only for oriented rings, but not for unoriented rings.

\subsection{DFS and Strongly Connected Orientation}\label{sect:dfs_orientation}

We describe a DFS tree $T_G$ of the graph $G = (V, E)$ that is \emph{uniquely determined} by the port numbering and node identifiers. Based on this DFS tree, we describe an edge orientation of $G$ that transforms it into a strongly connected directed graph $\vec{G}$. Both the DFS tree and the orientation play a critical role in the analysis of our algorithm for \Cref{thm:main}.

\paragraph{DFS.} 
Throughout this paper, we write $r$ to denote the node with the smallest identifier.
Consider a DFS traversal rooted at $r$ such that when a node selects the next edge to explore among its unexplored incident edges, it chooses the one with the \emph{smallest} port number. Let $T_G = (V, E_T)$ be the resulting DFS tree. The edges in $E_T$ are called \emph{tree edges}, and the remaining edges $E_B = E \setminus E_T$ are called \emph{back edges}. The following observation is folklore.

\begin{observation} \label{obs: back edge}
    For any edge $\{u, v\} \in E_B$, either $u$ is an ancestor of $v$, or $v$ is an ancestor of $u$.
\end{observation}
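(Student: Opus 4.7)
The plan is to give the standard folklore argument that every non-tree edge in a DFS traversal links a node to one of its ancestors. Fix an edge $\{u,v\} \in E_B$. Since the DFS is sequential and deterministic, one of $u$, $v$ is discovered strictly before the other; without loss of generality assume $u$ is discovered first. The goal is then to show $u$ is an ancestor of $v$ in $T_G$.

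The key step is to inspect the moment the DFS first examines the edge $\{u,v\}$ from $u$'s side. This inspection must occur at some point during $u$'s processing, because the DFS iterates through every incident edge of $u$ (in order of port number) before backtracking from $u$. At the moment this edge is examined, $v$ must already have been discovered: if $v$ were still undiscovered, the DFS would follow $\{u,v\}$ into $v$, making $\{u,v\}$ a tree edge and contradicting $\{u,v\} \in E_B$.

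Since we assumed $u$ was discovered before $v$, the discovery of $v$ cannot have happened before $u$'s own processing began. Hence $v$ must have been discovered after $u$'s processing started but before the edge $\{u,v\}$ was examined from $u$'s side. The only way $v$ can become discovered during $u$'s processing is via a chain of recursive calls initiated from $u$, which places $v$ inside the subtree rooted at $u$ in $T_G$. Therefore $u$ is an ancestor of $v$, as required.

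The argument is entirely routine, and I do not anticipate any real obstacle. The only subtle points are the deterministic tie-breaking that justifies the "discovered strictly before" convention, and the clean use of the assumption $\{u,v\} \notin E_T$ to rule out the tree-edge branch when $u$ examines the edge for the first time. Neither requires more than a sentence of justification.
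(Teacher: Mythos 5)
Your proof is correct and takes essentially the same route as the paper's: both inspect the moment the edge $\{u,v\}$ is scanned by the DFS and use the fact that it cannot lead to an unvisited node, since that would make it a tree edge rather than a member of $E_B$. The only difference is presentational -- the paper directly asserts the dichotomy that the other endpoint is either unvisited or an ancestor, whereas you derive the ancestry via the discovery-order convention and the nesting of recursive calls (the parenthesis property), which just makes the same folklore argument slightly more self-contained.
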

\begin{proof}
When a node $u$ explores an incident edge $e = \{u, v\}$ during the DFS, there are two possibilities: Either $v$ is an ancestor of $u$, or $v$ has not been visited yet. In the first case, $\{u, v\} \in E_B$. In the second case, $v$ is discovered for the first time, so $\{u, v\} \in E_T$ is added as a tree edge in $E_T$. 
\end{proof}

\begin{definition}[Strongly connected orientation]\label{def-directed}
    The directed graph $\vec{G} = (V, \vec{E}_T \cup \vec{E}_B)$ is defined by
\begin{align*}
   \vec{E}_T &= \{ (u,v) \, | \, \{u,v\} \in E_T \text{ and } u \text{ is the parent of } v \text{ in } T_G\},\\
   \vec{E}_B &= \{(u,v) \, | \, \{u,v\} \in E_B \text{ and } v \text{ is an ancestor of } u \text{ in } T_G\}.
\end{align*}
\end{definition}

A directed graph is \emph{strongly connected} if for any two nodes $u$ and $v$ in the graph, there is a directed path from $u$ to $v$.
We have the following observation.

\begin{observation} \label{obs:sc}
    If $G$ is 2-edge-connected, then $\vec{G}$ is strongly connected.
\end{observation}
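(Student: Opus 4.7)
The plan is to reduce strong connectivity of $\vec{G}$ to showing that every node can both reach, and be reached from, the root $r$ (the node of smallest identifier around which $T_G$ is rooted). The easy direction is reachability \emph{from} $r$: since $\vec{E}_T$ is oriented from parent to child, the root-to-$v$ path in $T_G$ already yields a directed $\vec{G}$-path from $r$ to any $v \in V$. So the substantive task is to produce, for every non-root $v$, a directed path in $\vec{G}$ from $v$ back to $r$.

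I would establish this by a ``climbing'' argument combined with induction on depth in $T_G$. Concretely, I will prove: \emph{for every non-root $v$, there exists a directed path in $\vec{G}$ from $v$ to some proper ancestor of $v$ in $T_G$}. Iterating this step strictly decreases depth and hence eventually reaches $r$; combined with the easy direction, this yields strong connectivity by concatenating $u \to r \to v$ for any $u,v \in V$.

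To prove the climbing step, fix a non-root $v$, let $T_v \subseteq V$ be the set of nodes in the subtree of $T_G$ rooted at $v$, and consider the cut $(T_v, V \setminus T_v)$. The only tree edge crossing this cut is $\{p(v),v\}$, because any other tree edge incident to $T_v$ joins a node in $T_v$ to one of its children and therefore stays inside $T_v$. Since $G$ is 2-edge-connected, $\{p(v),v\}$ is not a bridge, so there must exist a second edge $\{u,w\} \in E$ crossing the cut, with $u \in T_v$ and $w \notin T_v$; by the previous sentence it must be a back edge. By \Cref{obs: back edge}, one endpoint of a back edge is an ancestor of the other, and since $w \notin T_v$ the case ``$u$ is an ancestor of $w$'' is impossible (it would force $w \in T_v$); hence $w$ is an ancestor of $u$ and, in particular, a proper ancestor of $v$. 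By \Cref{def-directed}, this back edge is oriented as $(u,w) \in \vec{E}_B$. Walking from $v$ down to $u$ along tree edges in $\vec{E}_T$ (valid since $u \in T_v$) and then traversing $(u,w)$ produces the desired directed path from $v$ to the proper ancestor $w$.

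The main obstacle is the cut analysis in the previous paragraph: cleanly ruling out that the ``second'' crossing edge could be a tree edge, and then using \Cref{obs: back edge} together with the side condition $w \notin T_v$ to pin down its orientation as pointing upward out of $T_v$. Once that is in hand, the downward traversal in $\vec{E}_T$ and the induction on depth are routine bookkeeping.
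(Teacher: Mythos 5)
Your proof is correct and rests on the same key argument as the paper's: apply 2-edge-connectivity to the cut around the subtree rooted at a node to obtain a back edge leaving that subtree, note by \Cref{obs: back edge} and \Cref{def-directed} that it is oriented upward to a proper ancestor, and splice it with directed tree paths. The only (cosmetic) difference is the packaging — you iterate the climbing step to show every node reaches the root $r$ and combine with root-to-all reachability, whereas the paper shows directly that the two endpoints of each tree edge lie in the same strongly connected component.
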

\begin{proof}
To prove the observation, it suffices to show that, for any tree edge $e=\{u,v\}$, both $u$ and $v$ are in the same strongly connected component of $\vec{G}$, as this implies that all nodes in $V$ are in the same strongly connected component. Let $u$ be the parent of $v$. By \Cref{def-directed}, we already know that $(u,v)$ is a directed edge of $\vec{G}$, so it remains to show that $u$ is reachable from $v$ in $\vec{G}$.

Let $S$ be the set consisting of $v$ and all descendants of $v$ in the DFS tree $T_G$. Since $G = (V,E)$ is 2-edge-connected, there exists an edge $e' \in E \setminus \{e\}$ that connects $V\setminus S$ and $S$. Since $e$ is the only tree edge connecting $V\setminus S$ and $S$, $e'$ must be a back edge. Let $e'= \{x,y\}$ where $y$ is an ancestor of $x$ in $T_G$.

Since $x \in S$, either $x=v$ or $x$ is a descendant of $v$, so there is a directed path $P_{v \rightarrow x}$ from $v$ to $x$ using only edges in $\vec{E}_T$. Since $y \in V \setminus S$ is an ancestor of $x$, either $y=u$ or $y$ is an ancestor of $u$, so there is a directed path $P_{y \rightarrow u}$ from $y$ to $u$ using only edges in $\vec{E}_T$. Combining $P_{v \rightarrow x}$, $(x,y)$, and $P_{y \rightarrow u}$, we infer that $u$ is reachable from $v$ in $\vec{G}$.
\end{proof} 

\paragraph{Shortest paths.} Throughout the paper, for any two nodes $u \in V$ and $v \in V$, we write $P_{u,v}$ to denote any shortest path from $u$ to $v$ in $\vec{G}$, whose existence is guaranteed by \Cref{obs:sc}. We allow $u=v$, in which case $P_{u,v}$ is a path of zero length consisting of a single node $u=v$.

\section{Technical Overview}

We now give a high-level overview of the proofs for our \Cref{thm:main,thm:main2}. We first consider general 2-edge-connected topologies, and then turn to unoriented rings.

\subsection{Overview of \texorpdfstring{\Cref{thm:main}}{Theorem 1}}
We present a technical overview for the proof of \Cref{thm:main}.   Our algorithm aims to elect the node $r$ with the smallest identifier as the leader. It operates in two phases. In the first phase, each node maintains a counter that is incremented in an almost synchronized manner using the \emph{$\alpha$-synchronizer} of Awerbuch~\cite{awerbuch1985complexity}. A node $v$ declares itself the leader once its counter reaches $\id(v)$. This first phase alone yields a simple \emph{stabilizing} leader election. To ensure \emph{quiescent termination}, the second phase is executed, in which the elected leader announces its leadership through a DFS traversal.

\subsubsection{Synchronized Counting}

We begin by describing the first phase of the algorithm.

\paragraph{Counting.}  
To achieve an almost synchronized counting, we employ the $\alpha$-synchronizer of~\citet{awerbuch1985complexity}, as follows.
Each node maintains a counter variable $\counter(v)$, which is initialized to $0$ at the start of the algorithm. A node increments its counter by one after receiving a new pulse from each of its neighbors. Formally, the counter value of a node $v$ is given by  
\[
\counter(v) = \min_{j \in [\deg(v)]} \rho_j(v).
\]  
Each node broadcasts a pulse to all its neighbors upon initialization and immediately after each counter increment. 

This process is \emph{non-blocking}: In the absence of an exit condition, nodes will continue counting together indefinitely. Furthermore, no node can advance significantly faster than its neighbors: For any pair of adjacent nodes, their counter values differ by at most one. A faster-counting node must wait for a pulse from a slower-counting neighbor before it can increment its counter again.

\paragraph{Stabilizing leader election.}  
A node $v$ elects itself as the leader once  
\[
\counter(v) = \id(v).
\]  
After electing itself as the leader, node $v$ \emph{freezes} its counter at $\counter(v) = \id(v)$ and does not increment it further. This action caps the counter value of any neighbor of $v$ at most $\id(v) + 1$. More generally, in a graph of $n \leq N$ nodes, the counter value of any node is upper bounded by $\id(v) + N - 1$, since any two nodes are connected by a path of length at most $N - 1$.

To ensure that the algorithm elects the node $r$ with the smallest identifier as the \emph{unique} leader, it suffices for any two identifiers to differ by at least $N$. This can be achieved by first multiplying each identifier by $N$. This is where the \emph{a priori} knowledge of $N$ comes into play.

At this stage, we have a simple stabilizing leader election algorithm. However, it remains insufficient for our purposes: We also want all non-leader nodes to recognize their non-leader status and terminate in a quiescent manner.

\subsubsection{DFS Notification}

We proceed to describe the second phase of the algorithm. The leader must find a way to notify all other nodes to settle as non-leaders. This seems challenging because the leader can only send additional pulses, which appear indistinguishable from those in the first phase. As we will see, assuming the network is 2-edge-connected, there is a simple and effective approach: Just send many pulses to overwhelm the listener.

\paragraph{Notifying a neighbor.} Let \( u \) be a neighbor of the leader \( r \), and let the edge \( e = \{u, r\} \) correspond to port \( i \) at \( r \) and port \( j \) at \( u \). The leader \( r \) notifies \( u \) of its leadership as follows:
\begin{enumerate}
    \item Wait until \( \rho_i(r) \geq \id(r) + 1 \).
    \item Send pulses to \( u \) until \( \sigma_i(r) = \id(r) + N + 2 \).
\end{enumerate}

Since the graph \( G \) is 2-edge-connected, there exists a path \( P \) from \( u \) to \( r \) that avoids \( e \), implying
\[
\counter(u) \leq \counter(r) + \length(P) \leq \id(r) + N - 1.
\]
Eventually, \( u \) receives \( \id(r) + N + 2 \) pulses from \( r \), so there must be a moment where
\[
\rho_j(u) \geq \counter(u) + 3,
\]
triggering anomaly detection. Under normal conditions in the first phase, \( u \) would expect
\[
\rho_j(u) \leq \sigma_i(r) = \counter(r) + 1 \leq \counter(u) + 2,
\]
so this deviation signals a violation.

Additionally, the waiting condition \( \rho_i(r) \geq \id(r) + 1 \) ensures that \( \counter(u) \geq \id(r) \). Thus, \( u \) not only recognizes that it is not the leader but can also deduce the exact value of \( \id(r) \) using
\[
\id(r) = N \cdot \left\lfloor \frac{\counter(u)}{N} \right\rfloor,
\]
since all node identifiers are integer multiples of \( N \).

\paragraph{DFS.} Intuitively, the leader $r$ is able to notify a neighbor $u$ because the counter value of node $u$ is constrained by a \emph{chain} $P$, whose other end is \emph{anchored} at the leader $r$, whose counter has already been frozen. To extend this notification process to the entire network, we perform a DFS, using the same notification procedure to explore each new edge. When a node finishes exploring all its remaining incident edges, it can notify its parent by sending pulses until a total of $\id(r)+N+2$ pulses have been sent through this port since the start of the algorithm. When a node $u$ explores a back edge $\{u,v\}$, the node $v$ replies using the same method. When a node decides which new incident edge to explore, it prioritizes the edge with the smallest port number, so the resulting DFS tree is $T_G$, as defined in \Cref{sect:dfs_orientation}. 

\paragraph{Quiescent termination.} To see that the algorithm achieves quiescent termination, observe that by the end of the execution, for each edge~$e$, the number of pulses sent in both directions is exactly $\id(r) + N + 2$. Therefore, once a node observes that it has both sent and received exactly $\id(r) + N + 2$ pulses along each of its ports and has completed all local computations, it can safely terminate with a quiescent guarantee.

\paragraph{Correctness.} A similar chain-and-anchor argument can be applied to analyze the correctness of the DFS-based notification algorithm. We prove by induction that for each node $u$ that has been visited by DFS, its frozen counter value satisfies
\[\counter(u) \leq \id(r) + \length(P_{u,r}).\]
Recall from \Cref{sect:dfs_orientation} that $P_{a,b}$ is a shortest path from $a$ to $b$ in the directed version $\vec{G}$ of $G$ defined in \Cref{def-directed}.

Suppose during the DFS, a node \( x \) is about to explore an incident edge \( e = \{x, y\} \) using the notification procedure, and that \( y \) has not yet been visited. Let \( w \) be the first node on the path \( P_{y,r} \) that is either \( x \) or an ancestor of \( x \) in the DFS tree \( T_G \), then we have
\[
\length(P_{y,r}) = \length(P_{y,w}) + \length(P_{w,r}).
\]
We now apply the chain-and-anchor argument using the chain \( P_{y,w} \) and anchor \( w \). This choice of chain is valid because the path \( P_{y,w} \) cannot include the edge \( e \), as \( e \) is oriented from \( x \) to \( y \) in \( \vec{G} \). This is precisely why the analysis is carried out in \( \vec{G} \) rather than in \( G \).

By the induction hypothesis, the frozen counter value at the anchor satisfies $
\counter(w) \leq \id(r) + \length(P_{w,r})$.
It follows that
\begin{align*}
\counter(y) &\leq \counter(w) + \length(P_{y,w}) \\
&\leq \id(r) + \length(P_{w,r}) + \length(P_{y,w}) \\
&= \id(r) + \length(P_{y,r}) \\
&\leq \id(r) + N - 1.
\end{align*}
Therefore, the chain-and-anchor argument works, allowing \( y \) to be successfully notified. Moreover, the frozen counter value of \( y \) also satisfies the induction hypothesis.

In addition, we must ensure that the DFS procedure proceeds correctly and no event is triggered unexpectedly. Once again, the chain-and-anchor argument implies that the condition
\[
\counter(u) \leq \id(r) + \length(P_{u,r}) \leq \id(r) + N - 1<\id(u)
\]
holds for all nodes \( u \in V\setminus\{r\}\) at all times. This ensures that no node other than \( r \) can be mistakenly elected as a leader. It also guarantees that pulses sent by nodes still in the first phase cannot accidentally trigger a notification in the second phase. Specifically, if $u$ is still in the first phase, then for every port \( j \in [\deg(u)] \), we have
\[
\sigma_j(u) = \counter(u) + 1 \leq \id(r) + N,
\]
which remains below the threshold of \( \id(r) + N + 2 \) pulses required to confirm receipt of a notification.

\paragraph{Remark.} One might wonder why the notification must follow a DFS rather than allowing a node to notify all neighbors simultaneously to accelerate the process. A key reason is that doing so can break the chain-and-anchor argument by eliminating anchors. For instance, in a ring topology, if the node $r$ with the smallest identifier simultaneously sends multiple pulses to \emph{both} neighbors, all of the pulses might be interpreted as part of the first phase, with no detectable anomaly, regardless of how many pulses are sent.

Both our leader election algorithm and the simulation algorithm of Censor-Hillel, Cohen, Gelles, and Sela~\cite{fully-defective-22} utilize DFS, but for different purposes. They perform a DFS from a pre-selected leader to find a cycle, while we use DFS to disseminate the identity of the elected leader to all nodes and to achieve quiescent termination. Coincidentally, both approaches leverage the strongly connected orientation of 2-edge-connected graphs.  In our case, the orientation $\vec{G}$ is used solely in the analysis to facilitate the chain-and-anchor argument. In their case, it guarantees the existence of a Robbins cycle, which plays a critical role in their algorithm. 

\subsection{Overview of \texorpdfstring{\Cref{thm:main2}}{Theorem 2}}

Our uniform algorithm for rings is divided into phases and is based on the assumption that no two identifiers are consecutive.  To ensure this, during initialization we multiply every identifier by two. 

The first phase of the algorithm uses a synchronization mechanism similar to the one used for general 2-edge-connected topologies. Each node sends a pulse to each of its neighbors and waits for pulses from them. This process is repeated a number of times equal to the node’s identifier.

Notice that the chain-and-anchor argument of the previous section shows that the process with maximum $\id$ cannot exit the synchronization phase before the process with minimum $\id$.

Once a node terminates the synchronization phase (called the \emph{competition phase} in this algorithm), it enters a \emph{solitude checking phase} to test whether its neighbours are still competing. This is done by sending two pulses on port $0$ and waiting to receive two pulses. If the two pulses received are in the opposite direction from the ones sent (i.e., on port $1$), the node terminates in the leader state. Otherwise, the node becomes a non-leader. A non-leader first sends two pulses on port $1$ to ``balance'' the number of pulses sent in each direction, and then, after receiving a pulse from each direction, enters a phase in which it simply relays pulses.

By doing so, a non-leader node behaves in such a way that it is transparent to the other nodes.

The node with maximum $\id$ will be the last to leave the competition phase, with all other nodes acting as relays. Thus, its solitude check will succeed, and the leader will notify all others by sending an additional pulse on port $0$. This new pulse, together with the two pulses used for the leader’s solitude check, will form a group of three pulses traveling in the same direction around the ring. This group can be detected by all non-leader processes, by doing the difference on the number of pulses received from each port, which will then terminate.

\paragraph{High-level proof.}
Consider the behaviour of the node $v$ with the minimum identifier $\idmin$, and its two neighbours $u$ and $w$. Once $v$ starts its solitude checking phase, both $u$ and $w$ are still in the competition phase, waiting for pulses from $v$. In the solitude checking phase, $v$ sends two pulses on port $0$ (to node $w$) and waits for pulses. Since $u$ is still in its competition phase, it will send just one pulse to $v$ on port $1$ and will wait for a response from $v$ before sending any further pulses. Therefore, the next pulse $v$ receives must come from $w$ on port $0$, causing $v$ to fail the solitude check.

Node $v$ will then send two pulses on port $1$ (balancing phase). We can show that at this point, both $u$ and $w$ have received from $v$ exactly $\idmin+2$ pulses, in an execution indistinguishable from one on a smaller ring where $v$ does not exist and $u$ and $w$ are directly connected. Afterwards, $v$ will only relay pulses, acting as an asynchronous link between $u$ and $w$.

Our proof idea builds on the argument above: all nodes except the one with maximum $\id$ will eventually reach the solitude check and fail. In doing so, they behave in a way that hides their presence from their neighbours. This allows us to inductively reason on progressively smaller rings until only the node with maximum $\id$ remains.

\section{Non-Uniform Leader Election for 2-Edge-Connected Topologies}\label{sec:qle2ed}

In this section, we prove our main theorem.

\mainthm*

The algorithm for \Cref{thm:main} is presented in \Cref{alg:2-connected}, with the DFS notification subroutine detailed in \Cref{alg:notify}. There are five main variables used in our algorithm:
\begin{description}
    \item[{\boldmath $\state(v)$}:]  This variable represents the current state of $v$.  We say a node $v$ is in the \emph{synchronized counting} phase if $\state(v) = \bot$. Otherwise, the node is in the \emph{DFS notification} phase, and $\state(v)\in\{\leader,\nonleader\}$ indicates the output of $v$. At the start, every node is in the synchronized counting phase. Once a node advances to the DFS notification phase, it fixes its output and cannot go back to the synchronized counting phase. 
\item [{\boldmath$\leader\id(v)$}:] This variable stores the identifier of the leader as known by node $v$.
\item [{\boldmath$\parentport(v)$}:] This variable stores the port number through which $v$ connects to its parent in $T_G$.
\item [{\boldmath$\counter(v)$}:] This variable implements the counter used in the synchronized counting phase.
\item [{\boldmath$\mathcal{P}(v)$}:] This variable represents the remaining unexplored ports of $v$ in the DFS notification phase. Precisely, $\mathcal{P}(v)$ is the set of ports on which $v$ has not yet received $\leader\id+N+2$ pulses.
\end{description}

For ease of presentation, we define the following two actions:

\begin{description}
    \item[{\boldmath $\sendpulsesuntil_i(k)$}:] Send pulses along port $i$ until a total of $k$ pulses have been sent from the start of the algorithm (i.e., $\sigma_i=k$).
    \item[{\boldmath $\sendall(1)$}:]  Send one pulse along each port.
\end{description}

\paragraph{Events.} For ease of analysis, several positions in \Cref{alg:2-connected} and \Cref{alg:notify} are annotated with labels: $\start(v)$ (node~$v$ initiates the DFS), $\sendexplore_i(v)$ (node~$v$ begins sending an explore-notification on port~$i$), $\receiveexplore_i(v)$ (node~$v$ confirms receipt of an explore-notification from port~$i$), $\senddone_i(v)$ (node~$v$ begins sending a done-notification on port~$i$), and $\receivedone_i(v)$ (node~$v$ confirms receipt of a done-notification from port~$i$). These labels are treated as \emph{events} in the analysis.

\paragraph{Pulse labeling.}  Any pulse that is sent during \Cref{alg:notify} is called a \emph{DFS notification} pulse. 
Any remaining pulse is called a \emph{synchronized counting} pulse.
This distinction is used only in the analysis. The nodes cannot see these labels in the algorithm.

\begin{algorithm}[ht!]
\DontPrintSemicolon
\caption{Leader election algorithm for node $v$}
\label{alg:2-connected}
$\id(v) \gets \id(v) \cdot N$ \tcp*{Forcing all $\id$s to be integer multiples of $N$}
$\state \gets \bot$;
$\leader\id \gets \bot$;
$\parentport \gets \bot$\;
$\counter \gets 0$\;
$\sendall(1)$\;

\While(\tcp*[f]{Synchronized counting}){$\state = \bot$} { 
    \OnEvent{$\rho_i$ is incremented for some $i \in [\deg(v)]$} {
        \uIf(\tcp*[f]{Counter increment condition}){$\rho_i$ = $\min_{j \in [\deg(v)]} \rho_j$} {
            $\counter \gets \counter + 1$\;
            $\sendall(1)$\;
            \uIf(\tcp*[f]{$\leader$ exit condition}) {$\counter = \id(v)$} {
                $\state \gets \leader$\;
                $\leader\id \gets \id(v)$\;
                \textcolor{red}{Event: $\start(v)$}\;
            }
        }
        \uElseIf(\tcp*[f]{$\nonleader$ exit condition}) {$\rho_i - \sigma_i > 1$} {
            $\state \gets \nonleader$\;
            $\parentport \gets i$\;
            $\leader\id \gets \lfloor \counter / N \rfloor \cdot N$\;
            \textbf{wait until} $\rho_i = \leader\id + N+2$\;
            \textcolor{red}{Event: $\receiveexplore_i(v)$}\;
        }
    }
} 

$\notify(\state,\leader\id,\parentport)$ \tcp*{\Cref{alg:notify}}

\end{algorithm}

\begin{algorithm}[ht!]
\DontPrintSemicolon
\caption{$\notify(\state,\leader\id,\parentport)$ for node $v$}
\label{alg:notify}

$\mathcal{P} = [\deg(v)]$ \tcp*[r]{The set of all ports}
\uIf{$\state = \nonleader$}{
    $\mathcal{P} \gets \mathcal{P} \setminus\{\parentport\}$ \;
}

\While {$\mathcal{P} \neq \emptyset$} {
    $j \leftarrow \min(\mathcal{P})$ \tcp*[r]{Traversal order: prioritizing smaller port numbers}
    \textcolor{red}{Event: $\sendexplore_j(v)$} \;
    \textbf{wait until} {$\rho_j \geq \leader\id+1$} \;
    $\sendpulsesuntil_{j}(\leader\id + N+2)$ \tcp*[r]{Continue DFS on port $j$}
    \While {$j \in \mathcal{P}$} {
        \OnEvent{$\rho_h = \leader\id+N+2$ for some $h \in \mathcal{P}$}{
            \uIf(\tcp*[f]{DFS returned from port $j$}){$j=h$} {
                $\mathcal{P} \gets \mathcal{P}\setminus\{j\}$ \;
                \textcolor{red}{Event: $\receivedone_j(v)$} \;
            }
            \uElse(\tcp*[f]{Incoming DFS from port $h$ via a back edge}){
                $\mathcal{P} \gets \mathcal{P}\setminus\{h\}$ \;
                \textcolor{red}{Event: $\receiveexplore_h(v)$} \;
                \textcolor{red}{Event: $\senddone_h(v)$} \;
                $\sendpulsesuntil_{h}(\leader\id + N+2)$ \;
            }
        }
    } 
}
\uIf{$\state = \nonleader$}{
    \textcolor{red}{Event: $\senddone_{\parentport}(v)$} \;
    $\sendpulsesuntil_{\parentport}(\leader\id + N+2)$ \tcp*[r]{Return DFS on parent port}
}
\end{algorithm}

\subsection{Synchronized Counting}\label{subsect:counting}

We begin by analyzing the synchronized counting phase of our algorithm, which corresponds to the main loop of \Cref{alg:2-connected}. We aim to show that the node $r$ with the smallest identifier eventually elects itself as the leader ($\state = \leader$) while all other nodes are still in the synchronized counting phase ($\state=\bot$).

\Cref{alg:2-connected} starts with $\counter=0$. The action $\sendall(1)$ is performed upon initialization and immediately after each counter increment, so $\sigma_i(v) = \counter(v)+1$ for all $i \in [\deg(v)]$. The counter is incremented by one whenever an incoming pulse causes $\min_{j \in [\deg(v)]} \rho_j$ to increase, so we always have $\counter = \min_{j \in [\deg(v)]} \rho_j$. There are two ways to exit the main loop of \Cref{alg:2-connected}. The first is $\counter = \id(v)$, which causes $v$ to become a leader. The second is $\rho_i-\sigma_i>1$, which causes $v$ to become a non-leader. 

Within \Cref{subsect:counting}, we focus on analyzing the behavior of the nodes in the synchronized counting phase. The only fact that we need to know about the DFS notification phase is that upon exiting the main loop of \Cref{alg:2-connected}, the value of $\counter$ is \emph{permanently fixed}. The analysis of the DFS notification phase is done in \Cref{subsect:DFS}.



\begin{observation}[Outgoing synchronized counting pulses] \label{obs1-variant} \label{obs:s=c+1}
For any node $v \in V$, at any time in the execution of the algorithm, the number of synchronized counting pulses sent on each of its ports equals $\counter(v)+1$.
\end{observation}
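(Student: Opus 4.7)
The observation is a bookkeeping invariant about the main loop of \Cref{alg:2-connected}, so I would prove it by a straightforward induction on the discrete time steps of the execution, tracking the joint evolution of $\counter(v)$ and of $\sigma_i^{\text{sync}}(v)$, the number of synchronized counting pulses sent on each port $i \in [\deg(v)]$. The entire argument is local to $v$: it uses only the code of $v$, not any property of neighbors, the graph, or the DFS phase.

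\textbf{Base case.} At initialization, \Cref{alg:2-connected} sets $\counter \gets 0$ and immediately calls $\sendall(1)$, which sends one pulse on every port. These are synchronized counting pulses by definition (they are sent outside \Cref{alg:notify}). Thus right after initialization, $\sigma_i^{\text{sync}}(v) = 1 = \counter(v) + 1$ for every $i \in [\deg(v)]$, establishing the invariant.

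\textbf{Inductive step.} I would then enumerate every location in \Cref{alg:2-connected} where either $\counter$ is modified or a synchronized counting pulse is emitted. In the main loop, the only branch that changes $\counter$ is the ``Counter increment condition'' branch, which performs exactly $\counter \gets \counter + 1$ followed immediately by $\sendall(1)$. So whenever $\counter$ increases by one, each port's synchronized counting pulse count also increases by exactly one, preserving $\sigma_i^{\text{sync}}(v) = \counter(v)+1$. The two exit branches ($\leader$ and $\nonleader$) modify $\state$, $\leader\id$, $\parentport$ but do not change $\counter$ and emit no pulses; the subsequent \textbf{wait until} in the $\nonleader$ branch does not emit pulses either. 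After exiting the main loop, $\counter(v)$ is permanently frozen (as already noted in \Cref{subsect:counting}), and every pulse emitted thereafter is issued from within \Cref{alg:notify}, and is therefore classified as a DFS notification pulse rather than a synchronized counting pulse. Hence neither side of the equation $\sigma_i^{\text{sync}}(v) = \counter(v)+1$ changes after the node leaves the synchronized counting phase, so the invariant continues to hold forever.

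\textbf{Main obstacle.} There is no mathematical difficulty; the only care required is the bookkeeping distinction between synchronized counting pulses and DFS notification pulses. I would state this labeling convention explicitly at the start of the proof, verify that \Cref{alg:2-connected}'s main loop is the \emph{only} source of synchronized counting pulses, and confirm that every $\sendall(1)$ invocation is tightly coupled to exactly one increment of $\counter$ (with initialization playing the role of the ``zeroth'' increment that produces the $+1$ offset). Once this pairing is made explicit, the induction is essentially a one-line verification.
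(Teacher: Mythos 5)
Your proposal is correct and matches the paper's own argument: both hinge on the facts that $\sendall(1)$ is executed exactly at initialization and immediately after each counter increment, that no other code emits synchronized counting pulses (all pulses from \Cref{alg:notify} being DFS notification pulses by the labeling convention), and that $\counter(v)$ starts at zero and freezes upon leaving the main loop. Your version merely phrases this as an explicit induction over time steps, which is a harmless elaboration of the same reasoning.
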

\begin{proof}
During the synchronized counting phase, the action $\sendall(1)$ is executed during initialization and immediately after each counter increment. There is no other mechanism by which node~$v$ sends synchronized counting pulses in the algorithm.
Since $\counter(v)$ starts at zero and becomes fixed when $v$ exits the main loop of \Cref{alg:2-connected}, it follows that for any node $v \in V$, at any time during the algorithm, the number of synchronized counting pulses sent on each of its ports equals $\counter(v)+1$.
\end{proof}



The above observation implies that for any node $v \in V$,  while $v$ is still in the synchronized counting phase,  $\sigma_i(v) = \counter(v)+1$ for all $i \in [\deg(v)]$.

The following observation relates the counter value with the number of pulses sent on each port. 

\begin{observation}[Counter value] \label{obs:r>=c}
 For any node $v \in V$,  while $v$ is in the synchronized counting phase, $\counter(v) = \min_{j \in [\deg(v)]} \rho_j(v)$.
\end{observation}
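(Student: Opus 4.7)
The plan is a straightforward induction on the number of events processed by $v$ in the synchronized counting phase, where an \emph{event} is either the initialization of $v$ or the arrival of a pulse at $v$. The key insight is that the counter-increment condition $\rho_i = \min_{j \in [\deg(v)]} \rho_j$ in \Cref{alg:2-connected} is engineered precisely to preserve the invariant $\counter(v) = \min_{j \in [\deg(v)]} \rho_j(v)$. The base case is immediate: initialization sets both $\counter(v)$ and every $\rho_j(v)$ to zero.

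For the inductive step, I would assume the invariant holds immediately before a pulse arrives on some port $i$, and that $v$ remains in the synchronized counting phase throughout. Write $\rho_j$ for the port counts before the pulse and $\rho_j'$ for those after, so $\rho_i' = \rho_i + 1$ and $\rho_j' = \rho_j$ for all $j \neq i$. I would then split on whether the counter-increment condition fires. If it fires, then $\rho_i + 1 = \rho_i' \leq \rho_j' = \rho_j$ for every $j \neq i$, so $\rho_i$ was strictly smaller than every other $\rho_j$ before the event; hence $\min_j \rho_j = \rho_i$ and $\min_j \rho_j' = \rho_i + 1$. Since $\counter(v)$ is incremented by one, both sides grow by one and the invariant is preserved. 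If the condition fails, then $\counter(v)$ is unchanged, and the new minimum is attained at some port $k \neq i$ with $\rho_k' = \rho_k$; combined with the trivial monotonicity $\min_j \rho_j' \geq \min_j \rho_j$, this forces $\min_j \rho_j' = \min_j \rho_j$, so the invariant again holds.

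No step presents a genuine obstacle. The one point that deserves care is to restrict the induction to events occurring strictly within the synchronized counting phase and to stop at the instant a $\leader$ or $\nonleader$ exit condition is triggered, since from that moment $\counter(v)$ becomes permanently fixed while additional pulses may continue to arrive during the DFS notification phase (analyzed separately in \Cref{subsect:DFS}). The statement only asserts the invariant while $v$ is in the synchronized counting phase, which is exactly what the induction yields.
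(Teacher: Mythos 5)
Your proposal is correct and takes essentially the same route as the paper, whose proof is precisely the observation that the increment condition in \Cref{alg:2-connected} fires if and only if the arriving pulse raises $\min_{j \in [\deg(v)]} \rho_j(v)$, which is what your two-case induction establishes. One cosmetic remark: in the ``condition fails'' case, the cleanest way to close the argument is to note that the old minimum must already be attained at some port $k \neq i$ (otherwise $\rho_i$ would be the strict minimum and the condition would fire), so the minimum is unchanged; citing only the new minimum together with monotonicity gives two inequalities in the same direction.
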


\begin{proof}
Since $\counter(v)$ starts at zero, it suffices to show that, while $v$ is in the synchronized counting phase, the counter is incremented by one whenever an incoming pulse causes $\min_{j \in [\deg(v)]} \rho_j$ to increase, and that there is no other mechanism for changing the counter value. This follows from the observation that a pulse arriving on port~$i$ increases $\min_{j \in [\deg(v)]} \rho_j$ if and only if $\rho_i = \min_{j \in [\deg(v)]} \rho_j$ immediately after the pulse arrives. This is precisely the condition for incrementing the counter in \Cref{alg:2-connected}.
\end{proof}

Since the counter is frozen once a node exits the synchronized counting phase, there is a simple generalization of \Cref{obs:r>=c} to any node $v$ in any phase: $\counter(v) \leq \min_{j \in [\deg(v)]} \rho_j(v)$.

The following lemma shows that the counting process is nearly synchronized along every edge, as long as it has not been disturbed by a DFS notification pulse.


 \begin{lemma}[Near synchronization] \label{lem: c differ by <=1}
    For any edge $e = \{u, v\} \in E$ with no DFS notification pulse delivered yet,  \[|\counter(u) - \counter(v)| \leq 1.\]
 \end{lemma}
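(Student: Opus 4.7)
The plan is to establish the inequality $\counter(v) \le \counter(u)+1$ (the symmetric direction follows by swapping roles). Let $i \in [\deg(u)]$ be the port of $u$ incident to $e$, and let $j \in [\deg(v)]$ be the port of $v$ incident to $e$. The strategy is to bound $\counter(v)$ from above by $\rho_j(v)$ and then bound $\rho_j(v)$ from above by the number of synchronized counting pulses sent by $u$ on port $i$, which in turn equals $\counter(u)+1$ by \Cref{obs:s=c+1}.

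First I would argue the transport bound on the edge. By hypothesis, no DFS notification pulse has been delivered on $e$, so every pulse that has arrived at $v$ on port $j$ is a synchronized counting pulse originating from $u$ on port $i$. Since pulses are neither created nor lost, the number of such pulses received is at most the number of synchronized counting pulses sent by $u$ on port $i$, which by \Cref{obs:s=c+1} equals $\counter(u)+1$. Hence
\[
\rho_j(v) \;\le\; \counter(u) + 1.
\]

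Next I would bound $\counter(v)$ by $\rho_j(v)$. If $v$ is still in the synchronized counting phase, this is immediate from \Cref{obs:r>=c}, since $\counter(v) = \min_{k \in [\deg(v)]} \rho_k(v) \le \rho_j(v)$. The slightly delicate case is when $v$ has already exited the synchronized counting phase; here $\counter(v)$ is frozen at its exit value. But \Cref{obs:r>=c} applied at the moment of exit gives $\counter(v) \le \rho_j(v)$ at that moment, and $\rho_j(v)$ is non-decreasing over time while $\counter(v)$ remains fixed, so the inequality persists. Combining the two bounds yields $\counter(v) \le \counter(u)+1$, and the symmetric argument on the other endpoint of $e$ completes the proof.

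The only mild obstacle I anticipate is the book-keeping for a node that has already left the synchronized counting phase, since \Cref{obs:r>=c} is only stated for nodes currently in that phase. This is handled by the monotonicity argument above and the fact, asserted at the beginning of \Cref{subsect:counting}, that $\counter$ is permanently fixed upon exiting the main loop. Everything else reduces to the two stated observations plus the obvious inequality $\rho_j(v) \le \sigma_i(u)$ on pulses in transit.
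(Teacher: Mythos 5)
Your proposal is correct and follows essentially the same route as the paper: bound the counter by the number of pulses received on the shared port (handling the frozen-counter case via monotonicity, which the paper covers with its remark generalizing \Cref{obs:r>=c}), bound those received pulses by the sender's synchronized counting pulses $\counter+1$ via \Cref{obs:s=c+1}, and conclude by symmetry.
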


 \begin{proof}
 Suppose the port number of $e$ is $j$ at node~$u$ and $i$ at node~$v$. Since the counter is frozen once a node exits the synchronized counting phase,
   \Cref{obs:r>=c} implies that  $\counter(u) \leq \rho_j(u)$. In other words, the number of pulses from $v$ that have been delivered to $u$ is at least $\counter(u)$. \Cref{obs1-variant}, together with the assumption that no DFS notification pulse from $v$ has been delivered to $u$, implies that $u$ receives at most $\counter(v) + 1$ pulses from $v$, so $\rho_j(u) \leq \counter(v) + 1$. Combining the two inequalities, we obtain $\counter(u) \leq  \counter(v)+1$. Similarly, we also have $\counter(v) \leq \counter(u)+1$, so $|\counter(u) - \counter(v)| \leq 1$.
 \end{proof}

In the following lemma, we show that the $\nonleader$ exit condition must be triggered by a neighbor already in the DFS notification phase, meaning the first node to exit the synchronized counting phase must do so via the $\leader$ exit condition.

 \begin{lemma}[$\nonleader$ exit condition] \label{lem: rho-sigma<=1}
Suppose the $\nonleader$ exit condition
 \[\rho_i(v)-\sigma_i(v) > 1\]
is triggered by a pulse arriving on port $i$ at node $v$, then the pulse must be a DFS notification pulse.
 \end{lemma}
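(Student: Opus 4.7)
My plan is to argue by contradiction: assume the triggering pulse is a synchronized counting pulse and derive a violation via \Cref{obs:s=c+1} and \Cref{lem: c differ by <=1}. Let $u$ be the neighbor of $v$ on port $i$, and let $e = \{u, v\}$.

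First, I would show that no DFS notification pulse has been delivered on $e$ at the moment the triggering pulse arrives. Since $v$ is still in the synchronized counting phase when the exit triggers, $v$ has not sent any DFS pulse on $e$. And since the triggering pulse is itself a synchronized counting pulse, $u$ was still in its synchronized counting phase when it sent this pulse; any DFS pulse $u$ emits on $e$ is sent strictly later, only after $u$ transitions into the DFS notification phase. Appealing to FIFO delivery on each channel, no pulse sent after the triggering sync pulse can be delivered earlier, so every pulse delivered on $e$ to $v$ up to and including the triggering pulse is a synchronized counting pulse.

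With this ordering claim in hand, \Cref{lem: c differ by <=1} applies and gives $\counter(u) \leq \counter(v) + 1$. By \Cref{obs:s=c+1} applied to $u$, the total number of synchronized counting pulses $u$ has sent on $e$ is $\counter(u) + 1$, which upper bounds $\rho_i(v)$ since every delivered pulse on $e$ is sync. Therefore $\rho_i(v) \leq \counter(u) + 1 \leq \counter(v) + 2$. Using \Cref{obs:s=c+1} at $v$, we have $\sigma_i(v) = \counter(v) + 1$; hence $\rho_i(v) - \sigma_i(v) \leq 1$, contradicting the assumed exit condition $\rho_i(v) - \sigma_i(v) > 1$.

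The main subtlety lies in the ordering claim---that no DFS pulse has been delivered on $e$ by the triggering moment. This combines (i) the temporal order of sends at $u$, where each DFS pulse on $e$ from $u$ follows $u$'s final synchronized counting pulse on $e$, and (ii) the per-channel FIFO delivery of the asynchronous pulse model. Once the ordering is in place, the arithmetic contradiction from \Cref{obs:s=c+1} and \Cref{lem: c differ by <=1} is immediate.
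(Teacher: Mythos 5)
Your core argument is the same as the paper's: combine \Cref{obs:s=c+1} at $v$ (giving $\sigma_i(v)=\counter(v)+1$) and at $u$ with the near-synchronization bound of \Cref{lem: c differ by <=1} to get $\rho_i(v)\le \counter(u)+1\le \counter(v)+2$, hence $\rho_i(v)-\sigma_i(v)\le 1$, so the exit condition cannot fire in the absence of DFS notification pulses. The one step where you go beyond the paper is the ordering claim, for which you invoke per-channel FIFO delivery. The model here does not guarantee FIFO: the scheduler ``arbitrarily selects'' one of the pulses in transit, so in principle a DFS notification pulse could be delivered before an earlier-sent counting pulse on the same channel. The paper sidesteps this by proving the contrapositive under the hypothesis that $v$ has not received \emph{any} DFS notification pulse from $u$---which is exactly the form needed later (e.g.\ in \Cref{lem: no deadlock} and \Cref{lem-bad-3})---and your argument establishes precisely that statement once the FIFO step is dropped. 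If you want the literal conclusion that the triggering pulse itself is a DFS notification pulse, you must either justify FIFO as a without-loss-of-generality convention (pulses carry no content, so permuting deliveries on a channel changes only the analysis labels, never any node's behavior) or weaken the conclusion to ``some DFS notification pulse from $u$ has already been delivered to $v$.'' One point in your favor: you make explicit, as the paper leaves implicit, that applying \Cref{lem: c differ by <=1} also requires that no DFS pulse has been delivered from $v$ to $u$, which holds because $v$ is still in the synchronized counting phase and hence has sent none.
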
 

 \begin{proof}
 Let $u$ be the neighbor of $v$ such that the port number of $e = \{u, v\}$ at node $v$ is $i$. 
While $v$ is in the synchronized counting phase, by \Cref{obs:s=c+1}, $\sigma_i(v) = \counter(v)+1$. If $v$ has not received any DFS notification pulse from $u$, then \Cref{obs1-variant} and \Cref{lem: c differ by <=1} imply that the number of pulses $v$ received from $u$ is $\rho_i(v) \leq \counter(u)+1 \leq \counter(v)+2$, so
\[\rho_i(v)-\sigma_i(v) \leq 1,\]
which is insufficient to trigger the  $\nonleader$ exit condition for $v$.
 \end{proof}

Intuitively, the following lemma shows that deadlock cannot occur during the counting process: Within a finite amount of time, the system always makes progress in at least one of the following three ways. We use the convention that $\min X = \infty$ if $X = \emptyset$. Therefore, for the special case of $S=V$, (C1) in the  lemma below becomes $\min_{v \in S} \counter(v) = \infty$, meaning that only (C2) or (C3) hold.

\begin{lemma}[Deadlock freedom]\label{lem: no deadlock}
Let $S\subseteq V$ be the set of all nodes currently in the synchronized counting phase. 
    Suppose no DFS notification pulses have yet been delivered to any node in $S$, then eventually at least one of the following conditions hold.
    \begin{description}
        \item[(C1)]   $1 + \min_{u \in V \setminus S} \counter(u) \leq \min_{u \in S} \counter(u)$.
        \item[(C2)]  Some node in $S$ exits the synchronized counting phase. 
        \item[(C3)] A DFS notification pulse from some node in $V\setminus S$ is delivered to some node in $S$.
    \end{description}
\end{lemma}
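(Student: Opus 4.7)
The plan is to argue by contradiction, supposing that at no future time does any of C1, C2, or C3 hold. From these assumptions I will extract the following setup. Since $S$ is never altered (not-C2), the counters of nodes in $V\setminus S$ are permanently frozen, so $c^* := \min_{u \in V\setminus S} \counter(u)$ is a constant (with $c^* = \infty$ when $V\setminus S$ is empty), and not-C1 forces $c_{\min}(t) := \min_{u \in S} \counter(u) \leq c^*$ at every future time $t$. Moreover, fairness together with not-C3 implies that \emph{no} DFS notification pulse is ever sent from a node in $V\setminus S$ to a node in $S$, because any such pulse would be delivered within finite time and trigger C3. Combined with \Cref{lem: rho-sigma<=1}, this means a node in $S$ can only leave $S$ via the leader exit condition $\counter(v) = \id(v)$.

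The heart of the proof is a claim that $c_{\min}$ strictly increases within finite time. Fix a time $t_0$ with $c_{\min}(t_0) = c_0$, and let $S_0 := \{v \in S : \counter(v) = c_0\}$. For each $v \in S_0$ and each port $i$ of $v$ with neighbor $u$, we have $\counter(u) \geq c_0$ --- either because $u \in S$ and $c_0 = c_{\min}$, or because $u \in V\setminus S$ and $c^* \geq c_0$. By \Cref{obs1-variant}, $u$ has already sent $\counter(u)+1 \geq c_0 + 1$ synchronized counting pulses on the port leading to $v$, and by the setup these are the only pulses in transit from $u$ to $v$. Fairness then delivers all of them in finite time, so eventually $\rho_i(v) \geq c_0 + 1$ for every $i \in [\deg(v)]$. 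By \Cref{obs:r>=c} and the counter-increment rule, this forces $\counter(v) \geq c_0 + 1$ for each $v \in S_0$ in finite time, so $c_{\min}$ exceeds $c_0$.

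Iterating the claim, $c_{\min}$ grows without bound within finite time. This yields a contradiction in one of two ways. If $V\setminus S \neq \emptyset$, then $c^*$ is finite and $c_{\min}$ eventually reaches $c^*+1$, violating not-C1. Otherwise $V\setminus S = \emptyset$, and since the counter advances one unit at a time, the node $v^*$ with smallest identifier in $S$ eventually reaches $\counter(v^*) = \id(v^*)$ and triggers its leader exit, violating not-C2.

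The main obstacle I foresee is the careful bookkeeping inside the claim. On one hand, I must ensure every port of $v$ eventually receives enough pulses to let its counter advance; on the other hand, I must make sure none of those incoming pulses is a DFS notification pulse that would spuriously trigger the non-leader exit of $v$. Both ingredients rely on combining \Cref{obs1-variant}, fairness, and the structural consequences of not-C2 and not-C3, and this is the step I expect to require the most care when writing out the full proof.
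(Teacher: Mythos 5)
Your proposal is correct and follows essentially the same argument as the paper: both rest on the fact that every neighbor of a minimum-counter node has already sent $\counter+1 \geq c+1$ pulses (\Cref{obs1-variant}), that fair delivery plus \Cref{obs:r>=c} then forces the counter to advance, and that any pulse triggering the $\nonleader$ exit must be a DFS notification pulse (\Cref{lem: rho-sigma<=1}), which would mean (C2) or (C3). The only difference is packaging: you phrase it as a global proof by contradiction (ruling out all DFS pulses toward $S$ at once and letting $c_{\min}$ grow unboundedly), whereas the paper proves the same progress claim directly and iterates it, which is not a substantively different route.
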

\begin{proof}
Let $c = \min_{u\in S} \counter(u)$. Select $v$ as an arbitrary node with $\counter(v)=c$. We claim that, within finite time, at least one of the following events occurs: (C1), (C2), (C3), or $\counter(v)\geq c+1$. Therefore, by applying the claim $x \cdot |S|$ times, it follows that either one of (C1), (C2), or (C3) happens, or the value of $\min_{v\in S} \counter(v)$ increases by at least $x$. In the latter case, by choosing  $x = \min_{u \in V \setminus S} \counter(u)$, we ensure that (C1) happens.

For the rest of the discussion, we prove the claim. For each neighbor $u$ of $v$ in $S$, we have $\counter(u) \geq c$. By \Cref{obs:s=c+1}, $u$ has already sent $\counter(u)+1 \geq c+1$ pulses to $v$. If (C1) has not happened yet, for each neighbor $w$ of $v$ in $V \setminus S$, we also have $\counter(w) \geq c$. By \Cref{obs1-variant}, $w$ has already sent $\counter(w)+1 \geq c+1$ pulses to $v$.

Upon the arrival of these pulses, we have \[\min_{j\in [\deg(v)]}\rho_j(v)\geq c+1.\] 
If, by this point, the $\nonleader$ exit condition $\rho_i(v) - \sigma_i(v) >1$ for $v$ has not been triggered yet, by \Cref{obs:r>=c}, $\counter(v)$ will be incremented to $c+1$, as required. 

Now suppose the $\nonleader$ exit condition $\rho_i(v) - \sigma_i(v) > 1$ is triggered due to some incoming pulse at some point. By \Cref{lem: rho-sigma<=1}, the pulse must be a DFS notification pulse. If this pulse comes from a node in $V \setminus S$, then  (C3) has happened. Otherwise, the pulse comes from a node in $S$, meaning that this node has already exited the synchronized counting phase, so (C2) has happened.
\end{proof}

The following lemma shows that the counting process elects the node  $r$ with the smallest identifier as the leader, as desired.

\begin{lemma}[$r$ is the leader] \label{lem: r exits loop first}
    Let $r$ be the node with the smallest identifier, then $r$ is the first node to exit the synchronized counting phase. Moreover, $r$ sets its state to $\leader$.
\end{lemma}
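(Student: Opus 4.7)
My plan is to combine three already-established facts: the near-synchronization lemma (\Cref{lem: c differ by <=1}), which gives $|\counter(u)-\counter(v)|\leq 1$ for adjacent nodes while no DFS notification pulse has been delivered; the $\nonleader$ exit lemma (\Cref{lem: rho-sigma<=1}), which rules out the non-leader exit path in the absence of DFS pulses; and the deadlock-freedom lemma (\Cref{lem: no deadlock}), which guarantees progress. The identifier rescaling $\id(v)\leftarrow\id(v)\cdot N$ performed at initialization is essential: it forces $\id(v)-\id(r)\geq N$ for every $v\neq r$, which contrasts against the diameter bound $N-1$. First I will observe that strictly before any node leaves the synchronized counting phase, no DFS notification pulse has been generated anywhere, so \Cref{lem: rho-sigma<=1} forbids the first exit from being a $\nonleader$ exit. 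Hence the first exit, say by some node $v$, must be a $\leader$ exit with $\counter(v)=\id(v)$.

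Next I will argue by contradiction that $v=r$. Assume $v\neq r$. Fix a shortest path in $G$ from $r$ to $v$ of some length $\ell\leq n-1\leq N-1$. Since no DFS notification pulse has been delivered at the moment of $v$'s exit, iterating \Cref{lem: c differ by <=1} along this path gives $\counter(r)\geq \counter(v)-\ell \geq \id(v)-(N-1)$. Because identifiers are now distinct integer multiples of $N$ and $r$ attains the minimum, $\id(v)\geq \id(r)+N$, yielding $\counter(r)\geq \id(r)+1$. But $\counter(r)$ is incremented by one at a time in the main loop of \Cref{alg:2-connected}, with the $\leader$ exit condition $\counter(r)=\id(r)$ checked immediately after every increment, so $r$ would have already exited before $\counter(r)$ could exceed $\id(r)$. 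This contradicts the choice of $v$ as the first node to exit, so $v=r$, and since its exit is a $\leader$ exit, $r$ sets $\state\gets\leader$.

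Finally, to confirm that an exit ever occurs, I will apply \Cref{lem: no deadlock} with $S=V$: since $V\setminus S=\emptyset$, conditions (C1) and (C3) are vacuous, leaving (C2) as the only possibility, so some node eventually exits. The main subtlety will be timing: the hypothesis of \Cref{lem: c differ by <=1} requires that no DFS pulse has yet been delivered, and I must ensure this hypothesis holds up to and including the instant of the first exit. This is why the entire contradiction argument is pinned to that instant rather than some later moment at which $\counter(r)$ could in principle have been perturbed by an incoming DFS pulse originating from $v$ itself.
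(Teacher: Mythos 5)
Your proposal is correct and follows essentially the same route as the paper: rule out a $\nonleader$ first exit via \Cref{lem: rho-sigma<=1}, obtain existence of an exit from \Cref{lem: no deadlock} with $S=V$, and derive a contradiction for $v\neq r$ by iterating \Cref{lem: c differ by <=1} along a path and using that rescaled identifiers differ by at least $N$. The only cosmetic differences are that you phrase the contradiction as $\counter(r)\geq\id(r)+1$ (impossible since $r$'s counter is capped at $\id(r)$) rather than the paper's $\id(v)<\id(r)+N-1$, and you call (C1) ``vacuous'' where the paper briefly justifies why it cannot hold---both are fine.
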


\begin{proof}
    By \Cref{lem: rho-sigma<=1}, the first node to exit the synchronized counting phase cannot do so via the $\nonleader$ exit condition. Hence it suffices to show that $r$ is the first node that exits the synchronized counting phase.

Apply \Cref{lem: no deadlock} with \( S = V \). Since \( V \setminus S = \emptyset \), (C3) is vacuously false. Moreover, since \( \min_{u \in V \setminus S} \counter(u) = \infty \),  (C1) reduces to \( \min_{v \in S} \counter(v) = \infty \), which cannot hold: By the time the first node \( v \) reaches \( \counter(v) = \id(v) \), it must have already exited the synchronized counting phase and frozen its counter. Therefore, we conclude that (C2) must eventually occur, and hence some node \( v \) eventually exits the synchronized counting phase.

    
    Let \( v \) be the first node to exit the synchronized counting phase. As discussed earlier, it must do so via the \( \leader \) exit condition, which requires \( \counter(v) = \id(v) \).

    Suppose $v \neq r$. Consider the first moment when $\counter(v) = \id(v)$. Consider any path $P$ connecting $v$ and $r$. Since no DFS notification pulse has been delivered so far, by  \Cref{lem: c differ by <=1}, the counter values of any two adjacent nodes in $P$ differ by at most one, so
    \[\id(r) < \id(v) = \counter(v) \leq \counter(r) +  \length(P) < \id(r)+ (N-1),\]
contradicting the fact that all identifiers are integer multiples of $N$, so we must have $v=r$.    
\end{proof}

\paragraph{Remark.} If we simply let the elected leader \( r \) halt after the leader election phase and do not execute \Cref{alg:notify}, we obtain a simple stabilizing leader election algorithm. By \Cref{lem: c differ by <=1}, for any node \( v \in V \setminus \{r\} \), we have \( \counter(v) \leq \counter(r) + \length(P) < \id(r) + (N - 1) < \id(v) \), so the \(\leader\) exit condition cannot be triggered.


\subsection{DFS Notification}\label{subsect:DFS}

We now proceed to analyze the DFS Notification phase. This phase of the algorithm employs a DFS to notify all nodes of the elected leader, while guaranteeing termination with quiescence. Although a graph can have many DFS trees, we use the \emph{specific} tree $T_G$, rooted at the node with the smallest identifier and prioritizing smaller port numbers during traversal, see \Cref{sect:dfs_orientation}. To begin, observe that a DFS visits each edge of the graph exactly once in each direction, as detailed below.


\paragraph{DFS traversal.} The node $r$ with the smallest identifier initiates the DFS traversal, creates a token, and serves as the root of the DFS tree $T_G$. For each node $v$, let $\mathcal{P}(v)$ denote the set of ports from which $v$ has not yet received the token.

When a node $v$ holds the token, it selects the \emph{smallest} port in $\mathcal{P}(v)$ and sends an explore-notification through that port to pass the token. If $v$ is the first node to send the token to a neighbor $u$, then $v$ is designated as the \emph{parent} of $u$ in the DFS tree $T_G$.

If $\mathcal{P}(v) = \emptyset$, it returns the token to its parent by sending a done-notification. In the special case where $v$ is the root, this marks the completion of the entire DFS traversal.

When a node $v$ receives an explore-notification from a neighbor $u$, there are two cases:
\begin{itemize}
    \item If $v$ previously sent the token to $u$ and is now receiving it back, then $\{u, v\}$ is a \emph{tree edge}, and $v$ resumes the DFS traversal as usual.
    \item Otherwise, $\{u, v\}$ is a \emph{back edge}, and $v$ immediately returns the token to $u$ by sending a done-notification.
\end{itemize}

\paragraph{A sequence of directed edges.} We write 
\[A=\left(e_1, e_2, \ldots, e_{2|E|}\right)\]
to denote the sequence of \emph{directed edges} traversed during the DFS. For each edge $e = \{u, v\} \in E$, both directions $(u, v)$ and $(v, u)$ appear exactly once in $A$: The first occurrence corresponds to an explore-notification, and the second to a done-notification.

We now discuss how the DFS traversal is implemented in \Cref{alg:2-connected} and \Cref{alg:notify}.
\begin{description}
    \item[Sender:] The moment when a node $v$ starts sending an explore-notification or a done-notification on port $j$ is marked as $\sendexplore_j(v)$ or $\senddone_j(v)$, respectively. For an explore-notification, node $v$ waits until $\rho_j \geq \leader\id + 1$, and then performs $\sendpulsesuntil_j(\leader\id + N + 2)$. For a done-notification, node $v$ directly performs $\sendpulsesuntil_j(\leader\id + N + 2)$ without any waiting condition.
    
    \item[Receiver:] The moment when a node $v$ confirms the receipt of an explore-notification or a done-notification on port $j$ is marked as $\receiveexplore_j(v)$ and $\receivedone_j(v)$, respectively. The confirmation is triggered when $\rho_j = \leader\id + N + 2$.
\end{description}

If node $v$ is not yet part of the DFS at the time it receives the explore-notification, there is an additional requirement: The pulses arriving on port $j$ must first trigger the $\nonleader$ exit condition. Only after this occurs does $v$ wait until $\rho_j = \leader\id + N + 2$ to confirm receipt of the explore-notification.


The use of $\sendpulsesuntil_j(\leader\id + N + 2)$ for sending both types of notifications serves to ensure quiescence upon termination, as it allows each node to anticipate and recognize the precise number of pulses---$\leader\id + N + 2$---it will receive before becoming inactive.

Assuming that all labeled events in the pseudocode behave as intended, the resulting execution faithfully follows the desired DFS traversal. It therefore remains to verify that the protocol functions correctly: The node with the minimum identifier is elected as the unique leader; every transmission of an explore-notification or done-notification results in confirmation of receipt of that notification; each receipt of an explore-notification or done-notification triggers the subsequent DFS action at that node; and no unintended events occur---for example, a second leader must not be elected, and synchronized counting pulses alone must not erroneously trigger confirmation of receipt for a notification that was never sent.

\paragraph{Notations.} Consider any edge $e = \{u,v\}\in E$. Let $i$ be the port number of $e$ in $u$. Let $j$ be the port number of $e$ in $v$. For ease of presentation, we have the following notations. 
\begin{align*} 
    \mathcal{E}_{u \rightarrow v}^\send &=\begin{cases}
        \sendexplore_i(u) & \text{if $(u,v)$ is before $(v,u)$ in $A$,}\\
        \senddone_i(u) & \text{otherwise.}
    \end{cases}\\
    \mathcal{E}_{u \rightarrow v}^\receive &=\begin{cases}
        \receiveexplore_j(v) & \text{if $(u,v)$ is before $(v,u)$ in $A$,}\\
        \receivedone_j(v) & \text{otherwise.}
    \end{cases}\\
          t_{u \rightarrow v}^\send &= \text{the time when the event $\mathcal{E}_{u \rightarrow v}^\send$ happens.}\\
          t_{u \rightarrow v}^\receive &= \text{the time when the event $\mathcal{E}_{u \rightarrow v}^\receive$ happens.}\\
          t_{r}^\start &= \text{the time when the event $\start(r)$ happens.}
\end{align*}

If an event never occurs, its occurrence time is defined as $\infty$.
We write \[\mathscr{E}^{\star}= \bigcup_{v \in V, \, i \in \deg(v)}\left\{\start(v),\sendexplore_i(v), \receiveexplore_i(v), \senddone_i(v),\receivedone_i(v)\right\}\]
to denote the set of all potential events, including those that are not possible in the DFS tree $T_G$. 

We write $e_j = (u_j, v_j)$ for each $j \in [2|E|]$. Let $R_0 = \{r\}$ and $R_i = \bigcup_{j \in [i]} \{u_j, v_j\}$ for each $i \in [2|E|]$. Intuitively, $R_i \subseteq V$ is the set of nodes that have been visited after traversing the first $i$ directed edges $e_1, e_2, \ldots, e_i$ in $A$.

\paragraph{Induction hypothesis.} The analysis of the DFS notification phase is done by an induction over the indices $i$ in the sequence $A=(e_1, e_2, \ldots, e_{2|E|})$. Intuitively, the induction hypothesis for $i$, denoted as $\IH(i)$, specifies that the DFS traversal up until the traversal of $e_i$ works as expected, with no unintended events occurring. For any $i \in \{0,1,\ldots, 2|E|\}$, $\IH(i)$ is defined as follows. 

\begin{description}
    \item[(H1)] The events associated with the first $i$ directed edges in the DFS traversal happen in order. Specifically, there exists a sequence $t_0 < t_1 < \cdots < t_{i-1} < t_i$ such that
\begin{align*}
t_0 &=t_{r}^\start = t_{u_1 \rightarrow v_1}^\send,\\
t_1 &=  t_{u_1 \rightarrow v_1}^\receive = t_{u_2 \rightarrow v_2}^\send,\\
\vdots\\
t_{i-1} &= t_{u_{i-1} \rightarrow v_{i-1}}^\receive = t_{u_i \rightarrow v_i}^\send,\\
t_i &= t_{u_i \rightarrow v_i}^\receive.
\end{align*}
    \item[(H2)] Before time $t_i$, all events in
    \[\mathscr{E}^\star \setminus \left\{\start(r),\mathcal{E}_{u_1 \rightarrow v_1}^\send,\mathcal{E}_{u_1 \rightarrow v_1}^\receive, \ldots, \mathcal{E}_{u_{i-1} \rightarrow v_{i-1}}^\receive, \mathcal{E}_{u_i \rightarrow v_i}^\send\right\}\]
    do not occur.
    \item[(H3)] At the time $t_i$, all nodes $v \in R_i$ satisfies
    \[\id(r) \leq \counter(v) \leq \id(r)+ \length(P_{v,r}) \leq \id(r)+N-1.\]
    Recall that $P_{v,r}$ is a shortest path from $v$ to $r$ in $\vec{G}$, see \Cref{sect:dfs_orientation}.
\end{description}

For the base case of $i=0$, $\IH(0)$ follows from \Cref{lem: r exits loop first}. Specifically, for $i=0$, (H1) asserts that $t_0 = t_{r}^\start < \infty$,  (H2) asserts that all events in $\mathscr{E}^\star \setminus \left\{t_{r}^\start\right\}$ do not occur before $t_0 = t_{r}^\start$, and (H3) asserts that $\id(r) = \counter(r)$ at the time $t_0 = t_{r}^\start$, since $R_0 = \{r\}$. 

 In the following discussion, we make some observations about the induction hypothesis. For any $i \in \{0,1, \ldots, 2|E|\}$, we write (H1)$(i)$, (H2)$(i)$, and (H3)$(i)$ to refer to the three conditions in $\IH(i)$.

\begin{observation}[Next sending occurs immediately]\label{obs-dfs1}
Let $i \in \{0,1,\ldots,2|E|-1\}$.
If $\IH(i)$ holds, then the following statements hold.
\begin{itemize}
    \item $\mathcal{E}_{u_{i+1} \rightarrow v_{i+1}}^\send$ occurs at time $t_i$.
    \item All events in $\mathscr{E}^\star \setminus \left\{\start(r),\mathcal{E}_{u_1 \rightarrow v_1}^\send,\mathcal{E}_{u_1 \rightarrow v_1}^\receive,  \ldots, \mathcal{E}_{u_i \rightarrow v_i}^\receive, \mathcal{E}_{u_{i+1} \rightarrow v_{i+1}}^\send\right\}$ do not occur at time $t_i$.
\end{itemize}
\end{observation}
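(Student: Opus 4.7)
The plan is to carry out a case analysis on the form of the event $\mathcal{E}_{u_i \rightarrow v_i}^\receive$ (or $\start(r)$ when $i=0$), trace the local control flow of the node $v_i = u_{i+1}$ through \Cref{alg:2-connected} and \Cref{alg:notify}, and verify that the very next labeled event emitted is exactly $\mathcal{E}_{u_{i+1} \rightarrow v_{i+1}}^\send$. Because the discrete-time-step model treats all actions triggered by a single pulse as atomic at one node, and the code segment between the two labeled events contains no pulse-reception point (no blocking \texttt{wait until} on $\rho_\cdot$), the new event is emitted before the scheduler can deliver another pulse and hence at time $t_i$.

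The base case $i = 0$ is immediate: after $\start(r)$, node $r$ exits \Cref{alg:2-connected} with $\state = \leader$ and enters $\notify$ with $\mathcal{P} = [\deg(r)]$, whose outer \texttt{while} immediately selects $j = \min \mathcal{P}$ and fires $\sendexplore_j(r)$, which by the smallest-port-first convention defining $A$ is exactly $\mathcal{E}_{u_1 \rightarrow v_1}^\send$. For $i \geq 1$, by (H1)$(i)$ the event $\mathcal{E}_{u_i \rightarrow v_i}^\receive$ at $v_i$ takes one of three shapes, each of which immediately emits the claimed successor: (a) a $\receiveexplore$ produced by the $\nonleader$ exit branch of \Cref{alg:2-connected}, after which $v_i$ enters $\notify$ with $\mathcal{P} = [\deg(v_i)] \setminus \{\parentport\}$ (nonempty because 2-edge-connectivity forces $\deg(v_i) \geq 2$) and fires $\sendexplore_{j'}(v_i)$ on the smallest $j' \in \mathcal{P}$; (b) a $\receiveexplore$ fired in the back-edge branch of the inner \texttt{OnEvent} of \Cref{alg:notify}, in which the same atomic handler then fires $\senddone_{j_i}(v_i)$ on the same port; or (c) a $\receivedone$ in \Cref{alg:notify}, after which $j_i$ is removed from $\mathcal{P}$ and control returns to the outer loop, either selecting the next port (firing $\sendexplore$) or, when $\mathcal{P}$ empties and $v_i \neq r$, firing $\senddone_{\parentport}(v_i)$. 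In every case the emitted event coincides with $\mathcal{E}_{u_{i+1} \rightarrow v_{i+1}}^\send$ by the DFS order of $A$.

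For the second clause, the plan is to invoke the discrete-time-step model of the preliminaries: the scheduler delivers a single pulse whose triggered actions are performed atomically at $v_i$, so no other node can emit a labeled event at $t_i$; and the case analysis above shows that within this atomic handler $v_i$ emits only $\mathcal{E}_{u_i \rightarrow v_i}^\receive$ and its claimed successor before the next blocking point (a \texttt{wait until}, a $\sendpulsesuntil$ call, or the end of an \texttt{OnEvent} handler). The main obstacle is the case-by-case matching of algorithmic branches to positions in $A$, together with the sanity check that $\mathcal{P}$ is nonempty precisely when a successor event is claimed; the only edge case, $v_i = r$ with $\mathcal{P} = \emptyset$, would correspond to $i = 2|E|$ and is excluded by the hypothesis $i \leq 2|E| - 1$. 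No use of (H3)$(i)$ is needed.
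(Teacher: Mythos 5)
Your proposal is correct and follows essentially the same route as the paper's proof: both use (H1)$(i)$/(H2)$(i)$ to pin down the control point of $v_i = u_{i+1}$ at time $t_i$, then appeal to the atomicity of the actions triggered by a single pulse delivery (and the fact that the send-event labels precede any blocking wait) to conclude that $\mathcal{E}_{u_{i+1} \rightarrow v_{i+1}}^\send$ fires at $t_i$ and nothing else does. Your explicit three-way case analysis of the control flow (first visit, back edge, returned done) simply spells out what the paper compresses into ``according to the description of our algorithm, the node is ready to continue the DFS traversal,'' including the correct use of $\deg(v)\geq 2$ and the exclusion of the $i=2|E|$ endpoint.
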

\begin{proof}
By (H1)$(i)$ and (H2)$(i)$, the execution of the algorithm follows the desired DFS traversal for the first $i$ directed edges, where the last event $\mathcal{E}$ is $\start(r)$ if $i=0$ and is $\mathcal{E}_{u_i \rightarrow v_i}^\receive$ if $i > 0$. According to the description of our algorithm, immediately after triggering the last event $\mathcal{E}$, the current node $v_i = u_{i+1}$ is ready to continue the DFS traversal by passing the token along the next directed edge $e_{i+1} = (u_{i+1},v_{i+1})$, triggering $t_{u_{i+1} \rightarrow v_{i+1}}^\send$, so the first statement holds.

Other than  $\mathcal{E}$ and $\mathcal{E}_{u_{i+1} \rightarrow v_{i+1}}^\send$, no other event in $\mathscr{E}^\star$ associated with the current node $v_i = u_{i+1}$ is triggered at time $t_i$. The local computation at $v_i = u_{i+1}$ cannot trigger any event in $\mathscr{E}^\star$ associated with other nodes. Hence the second statement holds. 
\end{proof}

\Cref{obs-dfs1} implies that, in the proof of the inductive step below, to establish (H1)\((i+1)\) assuming \( \IH(i) \), we may assume that the event \( \mathcal{E}_{u_{i+1} \rightarrow v_{i+1}}^\send \) has already occurred, and focus on proving that the corresponding event \( \mathcal{E}_{u_{i+1} \rightarrow v_{i+1}}^\receive \) eventually occurs. 


The following observation formalizes the intuition that (H3) guarantees any node visited during the DFS traversal obtains correct knowledge of \( \id(r) \). Consequently, in the subsequent discussion, we may replace \( \leader\id \) with \( \id(r) \).


\begin{observation}[Correct $\leader\id$]\label{obs-dfs2}
Let $i \in \{0,1,\ldots,2|E|\}$.
If $\IH(i)$ holds, then each $v \in R_i$ has already advanced to the DFS notification phase by time $t_i$ with $\leader\id(v) = \id(r)$.
\end{observation}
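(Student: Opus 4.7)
The plan is to split on whether $v = r$ and, in each case, pinpoint the exact moment in \Cref{alg:2-connected} where $v$ first assigns a value to $\leader\id(v)$ and advances to the DFS notification phase, then apply (H3)$(i)$ to evaluate that assignment.

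For $v = r$, I would simply invoke (H1)$(i)$: the event $\start(r)$ has occurred at time $t_0 \leq t_i$, and by inspection of \Cref{alg:2-connected} this event is placed immediately after the assignments $\state \gets \leader$ and $\leader\id \gets \id(r)$ in the $\leader$ exit branch. Since $\state(r)$ thereafter equals $\leader$, these values are never overwritten. For $v \neq r$ in $R_i$, I would first argue that the smallest index $j \leq i$ with $v \in \{u_j, v_j\}$ must satisfy $v = v_j$ and $u_j \ne v$: since $R_0 = \{r\}$ and each new node appears only when it is reached as the receiver of an edge emanating from a node already in the traversal, $v$ cannot first appear as a sender. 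Combining this with (H2)$(i)$, the edge $(u_j, v_j)$ must be the first directed edge at $v$ in the sequence $A$, hence an explore-notification, so that $\mathcal{E}_{u_j \to v_j}^\receive$ is exactly $\receiveexplore_{j'}(v)$ for the port $j'$ connecting $v$ to $u_j$. By (H1)$(i)$ this event happens at time $t_j \leq t_i$. Reading \Cref{alg:2-connected}, the only path to $\receiveexplore_{j'}(v)$ is through the $\nonleader$ exit condition, which first sets $\leader\id(v) \gets \lfloor \counter(v)/N \rfloor \cdot N$ and $\state(v) \gets \nonleader$.

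It then remains to show $\lfloor \counter(v)/N \rfloor \cdot N = \id(r)$. I would apply (H3)$(i)$ at time $t_i$ to get $\id(r) \leq \counter(v) \leq \id(r) + N - 1$, use the fact that $\counter(v)$ is frozen the moment $v$ leaves the synchronized counting phase (as established in \Cref{subsect:counting}) to transport the same bound back to the assignment time, and finally appeal to the initialization line of \Cref{alg:2-connected} that multiplies every $\id$ by $N$ to conclude that $\id(r)$ is a multiple of $N$. The main subtlety, and the only real obstacle I foresee, is exactly this timing gap between when $\leader\id(v)$ is fixed and when (H3)$(i)$ is stated; the counter-freezing observation bridges it, and the remainder is routine pseudocode reading and arithmetic.
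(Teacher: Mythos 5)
Your proposal is correct and follows essentially the same route as the paper's proof: split on $v=r$ versus $v\neq r$, note that a non-root node can only enter the DFS notification phase through the $\nonleader$ exit branch (which sets $\leader\id(v) \gets \lfloor \counter(v)/N \rfloor \cdot N$), and then combine the frozen counter with (H3)$(i)$ and the fact that all identifiers are multiples of $N$ to conclude $\leader\id(v) = \id(r)$. Your extra detail on why $v$ first appears in $A$ as the receiver of an explore-notification is a legitimate elaboration of what the paper compresses into its appeal to (H1)$(i)$, not a different argument.
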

\begin{proof}
Suppose \( v = r \). By (H1)\((i) \), node \( r \) enters the DFS notification phase via the \( \leader \) exit condition at time \( t_0 \), so \( \leader\id(r) = \id(r) \). 
Suppose \( v \neq r \). By (H1)\((i) \), node \( v \) enters the DFS notification phase via the \( \nonleader \) exit condition at some time before \( t_i \), and sets \( \leader\id(v) \) to \( \left\lfloor \counter(v)/N \right\rfloor \cdot N \). Since the counter value remains unchanged during the DFS notification phase, (H3)\((i) \) implies that \( \leader\id(v) = \id(r) \).
\end{proof}

The following observation follows from the fact that the transmission of all DFS notification pulses from \( u \) to \( v \) must occur entirely within the interval \( \left[ t_{u \rightarrow v}^\send,\, t_{u \rightarrow v}^\receive \right] \). In particular, it implies that, during this interval, node \( u \) must perform the action \( \sendpulsesuntil_{j}(\leader\id + N + 2) \), where \( j \) is the port corresponding to the edge \( e = \{u, v\} \) at node \( u \), and all pulses resulting from this action are delivered to node \( v \).


\begin{observation}[DFS notification pulses]\label{obs-dfs3}
Let $i \in \{0,1,\ldots,2|E|\}$.
If $\IH(i)$ holds, then the following holds for each edge $e=\{u,v\} \in E$ at time $t_i$.
\begin{itemize}
    \item If $(u,v)\in\{e_1, e_2, \ldots, e_i\}$, then exactly $\id(r)+ N+2$ pulses have been delivered from $u$ to $v$, where at least one of them must be a DFS notification pulse.
    \item Otherwise, $u$ has not sent any DFS notification pulse to $v$.
\end{itemize}
\end{observation}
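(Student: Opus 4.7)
The plan is to prove both bullets by a case analysis on whether the directed edge $(u,v)$ has already been traversed by the DFS, leveraging (H1) and (H2) from $\IH(i)$ together with \Cref{obs-dfs2}. The key observation is that DFS notification pulses on a given port are emitted exclusively through $\sendpulsesuntil$ calls in \Cref{alg:notify}, and in the pseudocode each such call is immediately preceded by a labeled event ($\sendexplore$ or $\senddone$); moreover, every such call drives $\sigma_j(u)$ to the fixed value $\id(r)+N+2$ in one instantaneous step. Tracking when these labeled events fire therefore fully controls when DFS notification pulses leave a port.

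For the first bullet, let $(u,v)=e_k$ with $k\leq i$, and let $j$ and $j'$ be the ports of $e$ at $u$ and $v$ respectively. By (H1)$(i)$, $\mathcal{E}_{u\rightarrow v}^\send$ occurs at time $t_{k-1}$; at that moment $u$ is already in the DFS phase and \Cref{obs-dfs2} gives $\leader\id(u)=\id(r)$, so $u$ executes $\sendpulsesuntil_j(\id(r)+N+2)$ instantaneously. After this, $u$ performs no further action on port $j$: no subsequent $\sendpulsesuntil_j$ call appears in \Cref{alg:notify} along this branch, and $u$ has exited the synchronized counting loop so no $\sendall$ can fire again. Hence $\sigma_j(u)=\id(r)+N+2$ at every time $\geq t_{k-1}$. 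The receive event $\mathcal{E}_{u\rightarrow v}^\receive$ at $t_k$ triggers exactly when $\rho_{j'}(v)=\id(r)+N+2$, and since no further pulses cross $(u,v)$, this equality persists through $t_i$. To count DFS notification pulses, \Cref{obs:s=c+1} yields that exactly $\counter(u)+1$ of the delivered pulses are synchronized counting pulses, and since $u\in R_k\subseteq R_i$, (H3)$(i)$ gives $\counter(u)\leq \id(r)+N-1$, so at least two of the $\id(r)+N+2$ delivered pulses are DFS notification pulses.

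For the second bullet, suppose $(u,v)\notin\{e_1,\ldots,e_i\}$ and let $j$ be the port of $e$ at $u$. Since every $\sendpulsesuntil_j$ call in \Cref{alg:notify} is immediately preceded by the event $\mathcal{E}_{u\rightarrow v}^\send$ (either a $\sendexplore_j(u)$ or a $\senddone_j(u)$), it suffices to show that $\mathcal{E}_{u\rightarrow v}^\send$ has not fired by time $t_i$. Because $(u,v)$ is not among $e_1,\ldots,e_i$, the event $\mathcal{E}_{u\rightarrow v}^\send$ is not in the excluded set of (H2)$(i)$, so (H2)$(i)$ itself forbids this event from occurring strictly before $t_i$. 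Interpreting ``at time $t_i$'' as the state reached immediately after $\mathcal{E}_{u_i\rightarrow v_i}^\receive$ has been processed---before the cascade of new send events at $u_{i+1}=v_i$ identified in \Cref{obs-dfs1}---completes the argument. I expect the main subtlety to be precisely this time-step bookkeeping in the discrete asynchronous model, in particular the borderline case $(u,v)=e_{i+1}$, which we handle by appealing only to events explicitly governed by (H1), (H2), and \Cref{obs-dfs1}.
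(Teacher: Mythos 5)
Your proof is correct and follows essentially the same route as the paper: the first bullet from (H1)$(i)$ together with the bound $\counter(u)+1\leq \id(r)+N$ via \Cref{obs:s=c+1} and (H3)$(i)$ (which is exactly the paper's ``subtle issue'' ensuring the count is exactly $\id(r)+N+2$ and that at least one pulse is a DFS notification pulse), and the second bullet directly from (H2)$(i)$. The extra bookkeeping you add about the borderline edge $e_{i+1}$ at time $t_i$ is a reasonable clarification but does not change the argument.
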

\begin{proof}
If $(u,v)\in\{e_1, e_2, \ldots, e_i\}$, then $t_{u\rightarrow v}^\receive$ has already occurred. From the algorithm description, this implies that $v$ has received all the $\id(r)+ N+2$ pulses sent from $u$. 
If $(u,v)\notin\{e_1, e_2, \ldots, e_i\}$, then (H2)$(i)$ implies that $u$ has not sent any DFS notification pulse to $v$.

We discuss one subtle issue in the proof: While the upper bound $\id(r)+ N+2$ appears to be clear, there might be a possibility that $u$ already sends to $v$ more than this number of pulses while $u$ is in the synchronized counting phase. This is however impossible because the number of such pulses is $\counter(u)+1 \leq (\id(r)+N-1)+1 < \id(r)+ N+2$ according to \Cref{obs:s=c+1} and (H3)$(i)$. This also implies that some of the $\id(r)+ N+2$ pulses sent from $u$ to $v$ must be DFS notification pulses.
\end{proof}

We now state the lemma for the inductive step.

\begin{lemma}\label{lem:inductive-step}
Let $i \in \{0,1,\ldots,2|E|-1\}$.
If $\IH(i)$ holds, then $\IH(i+1)$ holds.
\end{lemma}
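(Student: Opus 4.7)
I will proceed by case analysis on the transition $e_{i+1} = (u_{i+1}, v_{i+1})$. By \Cref{obs-dfs1}, the event $\mathcal{E}_{u_{i+1}\rightarrow v_{i+1}}^{\send}$ already fires at time $t_i$, so I only need to establish three things for $\IH(i+1)$: (a) the matching receive event $\mathcal{E}_{u_{i+1}\rightarrow v_{i+1}}^{\receive}$ eventually fires at a unique later time $t_{i+1}$; (b) no event outside the sanctioned list in (H2)$(i+1)$ fires in the open interval $(t_i, t_{i+1})$; and (c) the counter bound in (H3)$(i+1)$ holds at every node in $R_{i+1}$. There are four sub-cases, depending on whether $e_{i+1}$ is the first or second traversal of a tree edge or the first or second traversal of a back edge. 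Three of them have $v_{i+1} \in R_i$ already in the DFS notification phase with a frozen counter, so (H3)$(i+1)$ is inherited from (H3)$(i)$ and (a), (b) reduce to the pulse-count bookkeeping provided by \Cref{obs-dfs3} together with the threshold trigger $\rho_h = \id(r)+N+2$. The interesting case is the first traversal of a tree edge, when $v_{i+1} \notin R_i$ is still in the synchronized counting phase.

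For this main case, let $j$ be the port of $e_{i+1}$ at $u_{i+1}$ and $k$ the port at $v_{i+1}$. To show that $u_{i+1}$'s wait condition $\rho_j \geq \id(r)+1$ eventually holds, I invoke \Cref{lem: no deadlock} with $S = V \setminus R_i$; its hypothesis remains valid until $u_{i+1}$'s subsequent DFS notification pulses reach $v_{i+1}$, by (H2)$(i)$ combined with \Cref{obs-dfs3}. Within that window, (C3) is vacuous (no DFS notification pulse is in transit from $R_i$ to $S$), and (C2) is blocked by \Cref{lem: rho-sigma<=1} together with the chain-and-anchor bound below (no $\nonleader$ or $\leader$ exit is possible at any node in $S$). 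Hence (C1) must eventually hold, yielding $\counter(v_{i+1}) \geq \id(r)+1$ and, via \Cref{obs1-variant}, delivering the required synchronized counting pulses on port $j$ at $u_{i+1}$.

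Once $u_{i+1}$ executes $\sendpulsesuntil_j(\id(r)+N+2)$, I apply the chain-and-anchor argument sketched in the technical overview: let $w$ be the first node on the directed shortest path $P_{v_{i+1}, r}$ in $\vec{G}$ that lies in $R_i$ (well-defined since $r \in R_i$). Crucially, $P_{v_{i+1}, w}$ avoids $e_{i+1}$ because in $\vec{G}$ this edge is oriented $u_{i+1} \to v_{i+1}$ by \Cref{def-directed} and the DFS exploration rule. Iterating \Cref{lem: c differ by <=1} along $P_{v_{i+1}, w}$ and combining with (H3)$(i)$ at $w$ yields $\counter(v_{i+1}) \leq \id(r) + \length(P_{v_{i+1},r}) \leq \id(r)+N-1 < \id(v_{i+1})$. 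This bound prevents $v_{i+1}$ from self-electing, caps $\sigma_k(v_{i+1}) \leq \id(r)+N$ by \Cref{obs:s=c+1}, and forces the extra DFS notification pulses arriving on port $k$ to push $\rho_k-\sigma_k$ above $1$, triggering the $\nonleader$ exit. The assignment $\leader\id \gets \lfloor \counter/N \rfloor \cdot N$ returns exactly $\id(r)$ because $\counter(v_{i+1})$ lies in $[\id(r), \id(r)+N-1]$ and identifiers are integer multiples of $N$, where the lower bound $\counter(v_{i+1}) \geq \id(r)$ comes from propagating $u_{i+1}$'s wait precondition through \Cref{obs1-variant} and \Cref{lem: c differ by <=1}. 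Node $v_{i+1}$ then fires $\receiveexplore_k(v_{i+1})$ exactly once when $\rho_k = \id(r)+N+2$, defining $t_{i+1}$ and establishing (H3)$(i+1)$.

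The main obstacle I expect is (H2)$(i+1)$: ruling out spurious events in $(t_i, t_{i+1})$. For each $x \in V \setminus R_{i+1}$, the same chain-and-anchor argument applied to $x$ gives $\counter(x) < \id(x)$, so no $\leader$ exit is possible, and since no DFS notification pulse reaches $x$ in this interval (the only DFS notification pulses in flight flow between $u_{i+1}$ and $v_{i+1}$), \Cref{lem: rho-sigma<=1} forbids any $\nonleader$ exit at $x$. For nodes in $R_i$, every port has either already completed a notification exchange (total $\id(r)+N+2$ pulses in each direction) or carries only synchronized counting pulses bounded by $\counter+1 \leq \id(r)+N < \id(r)+N+2$, ruling out any spurious $\receiveexplore$ or $\receivedone$. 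Cleanly separating synchronized counting pulses from DFS notification pulses---precisely what \Cref{obs-dfs3} does---is what turns this verification from delicate into routine, but it is where the bulk of the bookkeeping effort lies.
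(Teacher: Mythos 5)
Your proposal is correct and follows essentially the same route as the paper: the same deadlock-freedom application with $S=V\setminus R_i$, the same chain-and-anchor bound via the first node $w$ of $P_{v_{i+1},r}$ in $R_i$ (using that $e_{i+1}$ is oriented $u_{i+1}\to v_{i+1}$ in $\vec{G}$), and the same $\counter+1\le \id(r)+N < \id(r)+N+2$ threshold argument to rule out spurious triggers, with your four-case split corresponding to the paper's done/explore and tree/back cases in \Cref{lem:good}. The one point to make fully rigorous is the interdependence you assert in passing---the small counter property relies on ``only $e_{i+1}$ carries DFS notification pulses,'' which in turn relies on no spurious events having fired---and the paper resolves exactly this by a joint induction over successive pulse deliveries (\Cref{lem-bad-3,lem:bad_avoid}), which your sketch should adopt to avoid circularity.
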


Before proving \Cref{lem:inductive-step}, we first use it to prove the main theorem.

\begin{proof}[Proof of \Cref{thm:main}]
The base case $\IH(0)$ follows from \Cref{lem: r exits loop first}, which shows that $\start(r)$ occurs after finite time and is before all other events in $\mathscr{E}^\star$, satisfying (H1) and (H2). The condition (H3) is satisfied because $r$  enters the DFS notification phase through the $\leader$ exit condition, 
which ensures $\counter(r)=\id(r)=\id(r)+P_{r,r}$, as $P_{r,r} = 0$.

By \Cref{lem:inductive-step}, $\IH(1), \IH(2), \ldots, \IH(2|E|)$ hold. 
Given that $\IH(2|E|)$ holds, we show that the algorithm terminates by time $t_{2|E|}$ with quiescence correctly.
By \Cref{obs-dfs2}, all nodes $v$ have advanced to the DFS notification phase with $\leader\id(v) = \id(r) \neq \id(v)$. This also implies that all of the nodes in $V \setminus \{r\}$ are non-leaders, so $r$ is the unique leader.

Given that $\IH(2|E|)$ holds, each edge is traversed exactly once in both directions during the DFS. For each node $v$, $\mathcal{E}_{u\rightarrow v}^\receive$ has been triggered for each of its neighbors $u$, meaning that $\mathcal{P}(v)=\emptyset$ and $v$ has terminated by the time $t_{2|E|}$. Moreover, the termination comes with a quiescence guarantee by \Cref{obs-dfs2}.

The message complexity bound follows from \Cref{obs-dfs2}, which states that the number of pulses sent along each direction of each edge is exactly $\id(r)+N+2$, where $\id(r) = \idmin \cdot N$, as we multiply each node identifier by $N$ at the beginning. Therefore, the overall message complexity is $O(m \cdot N \cdot \idmin)$.  

The fact that the leader \( r \) is the final node to terminate follows from the ordering of events specified in (H1)\((2|E|)\).
\end{proof}

\paragraph{Inductive step.} For the rest of the discussion, we prove \Cref{lem:inductive-step}. Suppose $\IH(i)$ holds. Our goal is to show that $\IH(i+1)$ holds. By \Cref{obs-dfs1}, we may assume that
\[t_i = t_{u_{i+1} \rightarrow v_{i+1}}^\send < \infty\]
and take it as the starting point of the analysis. We want to show that $\mathcal{E}_{u_{i+1} \rightarrow v_{i+1}}^\receive$ happens eventually, before triggering any other events in $\mathscr{E}^\star$. The proof consists of two parts. In the first part, we show that, before $\mathcal{E}_{u_{i+1} \rightarrow v_{i+1}}^\receive$, no other event in $\mathscr{E}^\star$ can happen, so (H2)$(i+1)$ holds. We also show that the following \emph{small counter property} holds for not only the nodes in the DFS notification phase but also the remaining nodes in the synchronized counting phase. In the second part, we show that $\mathcal{E}_{u_{i+1} \rightarrow v_{i+1}}^\receive$ occurs within finite time, so (H1)$(i+1)$ holds. Moreover, if this is also the moment where $v_{i+1}$ transitions from the synchronized counting phase to the DFS notification phase, then the counter value of $v_{i+1}$ satisfies H3$(i+1)$.

\begin{definition}[Small counter property]
A node $v$ is said to have the small counter property if \[\counter(v) \leq \id(r)+\length(P_{v,r}) < \id(v).\]    
\end{definition}

\paragraph{Part 1: Bad events do not happen.}
We consider two bad events.
\begin{description}
    \item[$\mathcal{E}_A$:] The small counter property is violated for some node $v \in V$.
    \item[$\mathcal{E}_B$:] Some event in $\mathscr{E}^\star \setminus \left\{\start(r),\mathcal{E}_{u_1 \rightarrow v_1}^\send,\mathcal{E}_{u_1 \rightarrow v_1}^\receive,  \ldots, \mathcal{E}_{u_i \rightarrow v_i}^\receive, \mathcal{E}_{u_{i+1} \rightarrow v_{i+1}}^\send\right\}$ is triggered.
\end{description}
We aim to show that both bad events cannot happen within the time interval $\left[t_i, t_{u_{i+1} \rightarrow v_{i+1}}^\receive\right)$, and moreover $\mathcal{E}_A$ does not happen at time $t_{u_{i+1} \rightarrow v_{i+1}}^\receive$.

\begin{lemma}[Quiescence]\label{lem-bad-1}
Suppose $\IH(i)$ holds.
    Consider a moment $t \geq t_i$. Suppose  $\mathcal{E}_B$ has not happened, then at any moment within the time interval $[t_i, t]$, there is no DFS notification pulse in transit, except for the directed edge $e_{i+1} = (u_{i+1}, v_{i+1})$.
\end{lemma}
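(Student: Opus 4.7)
The plan is to combine \Cref{obs-dfs3} (applied at the starting moment $t_i$) with a careful inventory of where in the pseudocode a DFS notification pulse can be emitted. The argument then splits naturally into what holds at $t_i$ and what holds during the open interval $(t_i, t]$.

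At time $t_i$, applying \Cref{obs-dfs3} to $\IH(i)$ gives two facts: every directed edge $(u,v) \in \{e_1,\ldots,e_i\}$ has had all $\id(r)+N+2$ of its pulses already delivered, so no DFS notification pulse on such an edge is still in transit; and every other directed edge $(u,v) \notin \{e_1,\ldots,e_i\}$ has not yet had any DFS notification pulse sent along it at all. By \Cref{obs-dfs1}, the unique new event fired at $t_i$ is $\mathcal{E}_{u_{i+1}\rightarrow v_{i+1}}^\send$, whose body in \Cref{alg:notify} is a single $\sendpulsesuntil$ on the port toward $v_{i+1}$; this is what deposits the fresh DFS notification pulses on the directed edge $e_{i+1}$ at $t_i$.

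On the open interval $(t_i, t]$, I will argue that no further DFS notification pulse is emitted anywhere. Inspecting \Cref{alg:2-connected} and \Cref{alg:notify}, the only instruction that can ever send a DFS notification pulse is the $\sendpulsesuntil$ inside \Cref{alg:notify}, and each occurrence of that instruction is immediately preceded by a labeled event of the form $\sendexplore_j(\cdot)$ or $\senddone_j(\cdot)$ belonging to $\mathscr{E}^\star$. Any such event firing strictly after $t_i$ would lie outside the exclusion list of $\mathcal{E}_B$, and the assumption that $\mathcal{E}_B$ has not occurred rules it out. Combining the two parts, throughout $[t_i, t]$ the only DFS notification pulses possibly in transit are those produced by the single $\sendpulsesuntil$ at $t_i$ on the directed edge $e_{i+1}$, as claimed. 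I do not anticipate a real obstacle; the only subtle bookkeeping is to keep the synchronized counting pulses emitted by $\sendall$ in \Cref{alg:2-connected} (which may continue in the background on nodes still in the counting phase) cleanly separated from DFS notification pulses, but the labeling convention already built into the paper enforces this distinction.
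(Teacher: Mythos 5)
Your proof is correct and follows essentially the same route as the paper's: the paper argues by contradiction that any such in-transit pulse would require triggering an event forbidden by $\neg\mathcal{E}_B$, implicitly relying on $\IH(i)$ for the fact that pulses on $e_1,\ldots,e_i$ were already delivered by $t_i$. You simply make that implicit part explicit via \Cref{obs-dfs3} and \Cref{obs-dfs1}, which is a fine (slightly more detailed) rendering of the same argument.
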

\begin{proof}
Suppose there is such a DFS notification pulse. Observe that sending such a pulse requires first triggering some event in $\mathscr{E}^\star \setminus \left\{\start(r),\mathcal{E}_{u_1 \rightarrow v_1}^\send,\mathcal{E}_{u_1 \rightarrow v_1}^\receive,  \ldots, \mathcal{E}_{u_i \rightarrow v_i}^\receive, \mathcal{E}_{u_{i+1} \rightarrow v_{i+1}}^\send\right\}$, contradicting our assumption that $\mathcal{E}_B$ has not happened by time $t$.  
\end{proof}

\begin{lemma}[Chain and anchor]\label{lem-bad-2}
Suppose $\IH(i)$ holds.
    Consider a moment $t \geq t_i$. Suppose no DFS notification pulse has been delivered to any node in $V \setminus R_i$, except that there might be DFS notification pulse delivered along the directed edge $e_{i+1} = (u_{i+1}, v_{i+1})$, then all nodes $v \in V$ satisfy the small counter property.
\end{lemma}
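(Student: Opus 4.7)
The plan is to prove the small counter property by splitting on whether $v$ lies in $R_i$ (where the counter is already frozen) or in $V\setminus R_i$ (where the counter may still be moving), reducing the second case to a contradiction argument that exploits the orientation $\vec G$.

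For $v \in R_i$, the bound is almost immediate from (H3) of $\IH(i)$: the counter was frozen when $v$ entered the DFS notification phase, and (H3) supplies $\counter(v)\le \id(r)+\length(P_{v,r})\le \id(r)+N-1$. Since every identifier is a multiple of $N$ and $r$ has the smallest one, $\id(v)\ge \id(r)+N$ for $v\ne r$, which upgrades this to the strict inequality $\counter(v)<\id(v)$. The case $v=r$ is handled by noting $\counter(r)=\id(r)=\id(r)+\length(P_{r,r})$.

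For $v\in V\setminus R_i$ I will argue by contradiction. Suppose the property fails at some time in $[t_i,t]$; take the earliest such moment $t^\star$ and the offending node $v$. Since nodes of $R_i$ have frozen counters satisfying the bound, $v$ must still be in the synchronized counting phase and its counter has just increased to $c:=\id(r)+\length(P_{v,r})+1$. Let $x_1$ be the second node on the directed path $P_{v,r}$, and $i^\star$ the port of $v$ towards $x_1$. By \Cref{obs:r>=c} the increment requires $\rho_{i^\star}(v)\ge c$, so $x_1$ has delivered at least $c$ pulses to $v$ by $t^\star$. I will first rule out that any of these are DFS notification pulses; once that is done, \Cref{obs:s=c+1} gives $\counter(x_1)+1\ge c$, while the minimality of $t^\star$ forces $\counter(x_1)\le \id(r)+\length(P_{x_1,r})\le \id(r)+\length(P_{v,r})-1=c-2$, a contradiction.

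The main obstacle is ruling out DFS notification pulses delivered from $x_1$ to $v$. Such a delivery requires $(x_1,v)$ to equal some $e_j$ with $j\le i+1$; the lemma's hypothesis excludes $j\le i$ (that would force $v\in R_i$), so the only option is $(x_1,v)=e_{i+1}$, i.e.\ $x_1=u_{i+1}$ and $v=v_{i+1}$. Since $v_{i+1}\in V\setminus R_i$, the edge $e_{i+1}$ must be a newly explored tree edge with $u_{i+1}$ the parent of $v_{i+1}$, so by \Cref{def-directed} the only orientation of $\{u_{i+1},v_{i+1}\}$ present in $\vec G$ is $(u_{i+1},v_{i+1})$. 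But the first arc of $P_{v,r}=P_{v_{i+1},r}$ in this scenario would be $(v_{i+1},u_{i+1})$, which is not an arc of $\vec G$, contradicting that $P_{v,r}$ is a directed path in $\vec G$. This is exactly where the strongly connected orientation earns its keep: by shifting the analysis from $G$ to $\vec G$, the single potentially problematic edge $e_{i+1}$ is forced into a direction incompatible with being the initial step of $P_{v,r}$, allowing the chain-and-anchor bound to propagate cleanly.
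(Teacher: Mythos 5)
Your argument is correct in substance but takes a genuinely different route from the paper. The paper proves the bound directly at the given moment $t$: for $v\in V\setminus R_i$ it picks the first node $w$ of $P_{v,r}$ that lies in $R_i$, anchors at $w$ via (H3)$(i)$, and chains the near-synchronization lemma (\Cref{lem: c differ by <=1}) edge by edge along the subpath $P_{v,w}$, invoking the orientation of $\vec{G}$ only to argue that $P_{v,w}$ cannot contain the one edge $e_{i+1}$ on which DFS pulses may have been delivered (otherwise $w$ would have been chosen as $u_{i+1}$). You instead run a minimal-counterexample argument over time and look at a single edge: at the first violation, \Cref{obs:r>=c} forces at least $c$ deliveries from the successor $x_1$ of $v$ on $P_{v,r}$, \Cref{obs:s=c+1} then forces $\counter(x_1)\geq c-1$, contradicting the bound $\counter(x_1)\le \id(r)+\length(P_{x_1,r})\le c-2$; the orientation of $\vec{G}$ is used to exclude the only possible DFS-pulse interference, namely $v=v_{i+1}$, $x_1=u_{i+1}$ with $e_{i+1}$ a tree edge oriented away from $v$. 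In effect you re-derive a one-edge version of the near-synchronization lemma inside a temporal induction, where the paper performs a spatial chaining along the path at a fixed time; both uses of $\vec{G}$ serve the same purpose, and your version has the mild advantage of never needing the intermediate anchor $w$.

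One step needs patching. You take the earliest failure within $[t_i,t]$ and assert that the offending counter ``has just increased'' there. But $\IH(i)$ bounds counters only for nodes of $R_i$, so nothing you cite rules out a violation by a node of $V\setminus R_i$ that is already present at time $t_i$; in that case $t^\star=t_i$, no increment occurs at that instant, and assuming the property at $t_i$ would be circular (that is precisely how \Cref{lem:bad_avoid} later applies this lemma at $t=t_i$). The fix is immediate: take the earliest violation over the whole execution rather than over $[t_i,t]$. Counters start at $0$, are non-decreasing, and change only by unit increments while a node is in the synchronized counting phase, so the first violation by any node (including a node such as $v_{i+1}$ that may have frozen its counter after receiving pulses on $e_{i+1}$) coincides with such an increment; nodes of $R_i$ never violate at any time up to $t$, since their counters are frozen from $t_i$ onward at values bounded by (H3)$(i)$ and were no larger earlier; and the lemma's hypothesis on deliveries is cumulative, hence valid at every moment up to $t$. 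With that adjustment your single-edge contradiction goes through verbatim.
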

\begin{proof}
If $v \in R_i$, then the small counter property follows from (H3)$(i)$.
Now consider any $v \in V \setminus R_i$.
We use the chain-and-anchor argument. Consider the first node $w$ in the shortest path $P_{v,r}$ that belongs to $R_i$, so \[\length(P_{v,r})=\length(P_{v,w})+\length(P_{w,r}).\]
By (H3)$(i)$, $\counter(w) \leq \id(r)+\length(P_{w,r})$.
By \Cref{lem: c differ by <=1}, $\counter(v)$ is at most
\[\counter(w) + \length(P_{v,w}) \leq \id(r)+\length(P_{w,r})+ \length(P_{v,w}) = \id(r)+\length(P_{v,r}),\]
as required.

For the calculation to work, we need to ensure that no DFS notification pulse has been delivered along any edge in $P_{v,w}$ in any direction. This requires that $P_{v,w}$ does not traverse along the edge $\{u_{i+1}, v_{i+1}\}$. As $u_{i+1} \in R_i$, such a traversal is possible only when $v_{i+1} \notin R_i$, in which case the edge is oriented as $(u_{i+1}, v_{i+1})$ in $\vec{G}$, so it is impossible that $P_{v,w}$ contains $(u_{i+1}, v_{i+1})$, since otherwise we would have selected $w = u_{i+1}$.
\end{proof}

\begin{lemma}\label{lem-bad-3}
    Suppose $\IH(i)$ holds.
    Consider a moment $t \geq t_i$.  If both $\mathcal{E}_A$ and $\mathcal{E}_B$ have not happened, then the following statements hold.
  \begin{itemize}
      \item The next pulse delivery also does not trigger $\mathcal{E}_A$.
      \item If the next pulse delivery does not trigger $\mathcal{E}_{u_{i+1} \rightarrow v_{i+1}}^\receive$, then it also does not trigger $\mathcal{E}_B$.
  \end{itemize}  
    \end{lemma}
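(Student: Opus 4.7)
The strategy is to combine Lemmas~\ref{lem-bad-1} and~\ref{lem-bad-2} with a case analysis on what a single extra pulse can trigger. Since $\mathcal{E}_B$ has not yet occurred, Lemma~\ref{lem-bad-1} guarantees that the only DFS notification pulses currently in transit travel along $e_{i+1}$. Hence the next pulse to be delivered is either a synchronized counting pulse or a DFS notification pulse on $e_{i+1}$.

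For the first conclusion, I would observe that after this pulse is delivered, the hypothesis of Lemma~\ref{lem-bad-2} still holds: no DFS notification pulse has been delivered to any node in $V \setminus R_i$ except possibly along $e_{i+1}$. Re-applying Lemma~\ref{lem-bad-2} then yields the small counter property for every node immediately after the delivery, so $\mathcal{E}_A$ is not triggered.

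For the second conclusion, I would rule out each type of event in $\mathscr{E}^\star$ separately. A $\start(v)$ event would require $\counter(v) = \id(v)$, but the just-established small counter property forces $\counter(v) \le \id(r) + N - 1 < \id(v)$ for every $v \ne r$, while $\start(r)$ has already fired at time $t_0$ by $\IH(i)$. A $\receiveexplore_j(v)$ or $\receivedone_j(v)$ event requires $\rho_j(v) = \leader\id + N + 2$; using \Cref{obs:s=c+1} together with the small counter property at the neighbor $u$ across port $j$, the total number of synchronized counting pulses that $u$ can have sent on that port is at most $\counter(u)+1 \le \id(r) + N$, strictly less than $\leader\id + N + 2 = \id(r)+N+2$. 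So at least one DFS notification pulse on port $j$ is required, and Lemma~\ref{lem-bad-1} then forces that edge to be $e_{i+1}$, making the only possible receive event exactly $\mathcal{E}_{u_{i+1} \rightarrow v_{i+1}}^\receive$, which is excluded by hypothesis. Finally, every $\sendexplore_j(v)$ and $\senddone_j(v)$ in the algorithm is triggered as an atomic chained step immediately after a $\start$, $\receiveexplore$, or $\receivedone$ event at the same node; since none of those fire, no send event fires either.

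The main delicate point, rather than a genuine obstacle, is the strict inequality $\counter(u)+1 < \leader\id + N + 2$ used to forbid synchronized counting pulses alone from reaching the DFS notification threshold on any port. The $+2$ slack built into the threshold, together with the fact that identifiers are multiplied by $N$ so that $\id(r) = \leader\id$ whenever a non-leader is notified, provides exactly the safety margin needed; combining this margin with Lemma~\ref{lem-bad-1} is what rules out every spurious receive event outside $e_{i+1}$. All remaining steps reduce to direct substitutions into the inequalities already supplied by $\IH(i)$ and the small counter property.
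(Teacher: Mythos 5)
Your overall structure mirrors the paper's proof: the first bullet is handled exactly as the paper does (reinstating the precondition of \Cref{lem-bad-2} after the delivery, using \Cref{lem-bad-1}; note that for pulses delivered \emph{before} $t_i$ you also need \Cref{obs-dfs3}, i.e.\ (H2), not just \Cref{lem-bad-1}, but that is a cosmetic omission). The second bullet, however, has a genuine gap in how you treat receive events at nodes not yet visited by the DFS. Your counting argument takes the confirmation threshold to be $\id(r)+N+2$, justified by the claim that $\leader\id = \id(r)$ ``whenever a non-leader is notified.'' That equality is only available for nodes in $R_i$ (\Cref{obs-dfs2}). For a node $v \notin R_i$ still in the synchronized counting phase, a $\receiveexplore$ event can fire only after its $\nonleader$ exit condition $\rho_j - \sigma_j > 1$ has fired, and at that moment its threshold is set to $\lfloor \counter(v)/N \rfloor \cdot N + N + 2$. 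If that exit condition could fire spuriously while $\counter(v) < \id(r)$, the threshold would drop to at most $\id(r)+2$, which your bound $\counter(u)+1 \leq \id(r)+N$ on synchronized counting pulses does \emph{not} exclude --- so synchronized counting pulses alone could then trigger a forbidden receive event, and your argument would not rule it out. In other words, the equality $\leader\id = \id(r)$ for newly notified nodes is part of what the induction is in the process of establishing; you cannot assume it here.

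The missing ingredient is exactly the paper's \Cref{lem: rho-sigma<=1}: a synchronized counting pulse can never trigger the $\nonleader$ exit condition. Combined with \Cref{lem-bad-1} and \Cref{obs-dfs3}, this confines the nodes outside $R_i$ that could possibly be waiting on a threshold to $v_{i+1}$ alone (notified via $e_{i+1}$), whose receive event is precisely the excluded $\mathcal{E}_{u_{i+1} \rightarrow v_{i+1}}^\receive$; for nodes in $R_i$ your threshold computation with $\leader\id = \id(r)$ is then valid. Once you insert this one-line citation, the remainder of your case analysis ($\start$ events killed by the small counter property, send events only chained to receive/start events) coincides with the paper's argument.
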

\begin{proof}
By \Cref{lem-bad-1}, the next pulse delivered---as well as any pulse delivered after time $t_i$ and before time $t$---is either a DFS notification pulse along the directed edge $e_{i+1}=(u_{i+1}, v_{i+1})$ or a synchronized counting pulse. Combining this with \Cref{obs-dfs3}, the precondition of \Cref{lem-bad-2} is satisfied at the moment after the next pulse delivery, so \Cref{lem-bad-2} implies that the next pulse delivery does not trigger $\mathcal{E}_A$.

Suppose the next pulse delivery triggers an event $\mathcal{E}$ in \[\mathscr{E}^\star \setminus \left\{\start(r),\mathcal{E}_{u_1 \rightarrow v_1}^\send,\mathcal{E}_{u_1 \rightarrow v_1}^\receive,  \ldots, \mathcal{E}_{u_i \rightarrow v_i}^\receive, \mathcal{E}_{u_{i+1} \rightarrow v_{i+1}}^\send\right\}.\]
Let $v$ be the node that receives this pulse.

Since the pulse delivery does not trigger $\mathcal{E}_A$, we know that the counter value at any node $v$ after the pulse delivery is at most $\id(r)+N-1$, so $\mathcal{E}$ cannot be $\start(v)$, as $\id(r)+N-1 < \id(v)$ for any $v \in V \setminus \{r\}$.

If the pulse is a DFS notification pulse along the directed edge $e_{i+1}=(u_{i+1}, v_{i+1})$, then $\mathcal{E}$ must be $\mathcal{E}_{u_{i+1} \rightarrow v_{i+1}}^\receive$. Therefore, for the rest of the discussion, we assume that the pulse is a synchronized counting pulse. In particular, \Cref{lem: rho-sigma<=1} implies that the pulse cannot trigger the $\nonleader$ exit condition.

Therefore, the only remaining way to trigger any event in $\mathscr{E}^\star$ is to get a total of $\id(r)+N+2$ pulses via some port, but that is not possible with synchronized counting pulses only, unless the small counter property is violated. Specifically, \Cref{obs:s=c+1} implies that the number of synchronized counting pulses sent along a port equals the counter value plus one. While the small counter property holds, this number if capped at $(\id(r)+N-1) +1$, which is smaller than the required $\id(r)+N+2$. 

Therefore, we conclude that, if the next pulse delivery does not trigger $\mathcal{E}_{u_{i+1} \rightarrow v_{i+1}}^\receive$, then it does not trigger $\mathcal{E}_B$.
\end{proof}

\begin{lemma}[Bad events do not happen]\label{lem:bad_avoid}
Event $\mathcal{E}_A$ does not happen within the time interval $\left[t_i, t_{u_{i+1} \rightarrow v_{i+1}}^\receive\right]$, and event $\mathcal{E}_B$ does not happen within the time interval $\left[t_i, t_{u_{i+1} \rightarrow v_{i+1}}^\receive\right)$.
\end{lemma}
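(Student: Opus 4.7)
The plan is to prove both claims simultaneously by induction on the sequence of pulse deliveries that occur at or after $t_i$, with Lemma \ref{lem-bad-3} serving as the engine of the inductive step. Because a node's state can only change at the moment of a pulse delivery (or at initialization, which is long past by time $t_i$), it suffices to examine the system immediately after each scheduled pulse delivery and verify that neither $\mathcal{E}_A$ nor $\mathcal{E}_B$ has been triggered. The scheduler delivers one pulse per discrete step, so the deliveries in $[t_i, t_{u_{i+1}\rightarrow v_{i+1}}^\receive]$ are well-ordered and ordinary induction applies.

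For the base case at time $t_i$, I would check both bad events absent. Combining $\IH(i)$ with Observation \ref{obs-dfs1}, the events in $\mathscr{E}^\star$ triggered by time $t_i$ are exactly $\start(r)$, $\mathcal{E}_{u_1\rightarrow v_1}^\send, \ldots, \mathcal{E}_{u_i\rightarrow v_i}^\receive, \mathcal{E}_{u_{i+1}\rightarrow v_{i+1}}^\send$, so $\mathcal{E}_B$ has not happened. For the small counter property, nodes in $R_i$ are covered by (H3)$(i)$, and nodes in $V\setminus R_i$ follow from Lemma \ref{lem-bad-2}. The hypothesis of that lemma holds at $t_i$ because, by (H2)$(i)$, all DFS notification pulses sent so far travel along the edges $e_1,\ldots, e_i$, whose endpoints all lie in $R_i$, or along $e_{i+1}$, which is precisely the exception that Lemma \ref{lem-bad-2} permits.

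For the inductive step, suppose that at some pulse-delivery moment $t\in [t_i, t_{u_{i+1}\rightarrow v_{i+1}}^\receive)$ neither bad event has occurred. Lemma \ref{lem-bad-3} then guarantees that the next pulse delivery cannot trigger $\mathcal{E}_A$, and can trigger $\mathcal{E}_B$ only if it coincides with $\mathcal{E}_{u_{i+1}\rightarrow v_{i+1}}^\receive$. Iterating this observation across all pulse deliveries strictly before $t_{u_{i+1}\rightarrow v_{i+1}}^\receive$ yields the claim for $\mathcal{E}_B$ on the half-open interval. Applying Lemma \ref{lem-bad-3} one last time at the delivery that realizes $\mathcal{E}_{u_{i+1}\rightarrow v_{i+1}}^\receive$ (if it happens at all) extends the claim about $\mathcal{E}_A$ to the closed interval, which explains the asymmetric treatment of the two endpoints: $\mathcal{E}_{u_{i+1}\rightarrow v_{i+1}}^\receive$ itself belongs to the excluded set defining $\mathcal{E}_B$, but is always compatible with the small counter property guaranteed by Lemma \ref{lem-bad-3}.

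The main obstacle I anticipate is the base-case verification of Lemma \ref{lem-bad-2}'s precondition at $t_i$: one must carefully invoke (H2)$(i)$ and Observation \ref{obs-dfs1} to argue that, up to and including time $t_i$, no DFS notification pulse has been delivered outside the $R_i$-internal edges $e_1,\ldots,e_i$ except possibly along $e_{i+1}$. Once that hook is in place, the rest of the argument is a clean bookkeeping composition of Lemmas \ref{lem-bad-1}, \ref{lem-bad-2}, and \ref{lem-bad-3}, with no further calculations needed.
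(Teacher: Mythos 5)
Your proposal is correct and takes essentially the same route as the paper: an induction over pulse deliveries after $t_i$ with \Cref{lem-bad-3} as the engine of the inductive step, the base case for $\mathcal{E}_B$ from \Cref{obs-dfs1}, and the base case for $\mathcal{E}_A$ by verifying the precondition of \Cref{lem-bad-2}. The only cosmetic difference is that the paper verifies that precondition by citing \Cref{obs-dfs3} (together with \Cref{lem-bad-1}), which is just the packaged form of your direct appeal to (H2)$(i)$.
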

\begin{proof}
    We prove the lemma by an induction over the number of pulses sent after time $t_i$. 
    
    For the base case of the induction, consider the moment $t = t_i$. By \Cref{obs-dfs1}, $\mathcal{E}_B$ has not happened yet. By a combination of \Cref{obs-dfs3} and \Cref{lem-bad-1}, the precondition of \Cref{lem-bad-2} is satisfied at $t = t_i$, so $\mathcal{E}_A$ has not happened yet.

    For the inductive step, apply \Cref{lem-bad-3} to any moment $t \in \left[t_i, t_{u_{i+1} \rightarrow v_{i+1}}^\receive\right)$. Suppose, by induction hypothesis that both $\mathcal{E}_A$ and $\mathcal{E}_B$ have not happened by time $t$, then it is guaranteed that the next pulse delivery does not trigger $\mathcal{E}_A$. Moreover, if the next pulse delivery does not trigger $\mathcal{E}_{u_{i+1} \rightarrow v_{i+1}}^\receive$, then it also does not trigger  $\mathcal{E}_B$.
     Therefore, $\mathcal{E}_A$ does not occur in the time interval $\left[t_i, t_{u_{i+1} \rightarrow v_{i+1}}^\receive\right]$, and  $\mathcal{E}_B$ does not occur in the time interval $\left[t_i, t_{u_{i+1} \rightarrow v_{i+1}}^\receive\right)$.
\end{proof}

\paragraph{Part 2: The next event happens.} After getting rid of the bad events $\mathcal{E}_A$ and  $\mathcal{E}_B$, the next thing to do is to show that $\mathcal{E}_{u_{i+1} \rightarrow v_{i+1}}^\receive$ happens within finite time. 

\begin{lemma}[The next event happens]\label{lem:good}
If $\IH(i)$ holds, then $\mathcal{E}_{u_{i+1} \rightarrow v_{i+1}}^\receive$ occurs within finite time. Moreover, $\counter(v_{i+1}) \geq \id(r)$ at time $t_{u_{i+1} \rightarrow v_{i+1}}^\receive$.
\end{lemma}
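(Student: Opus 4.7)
The plan is to split on whether $\mathcal{E}_{u_{i+1}\rightarrow v_{i+1}}^\send$ is a done- or an explore-notification and, in each case, verify that (i) the wait condition at $u_{i+1}$ (if any) is eventually met, (ii) all $\id(r)+N+2$ pulses $u_{i+1}$ pushes on port $j$ are delivered to $v_{i+1}$, and (iii) $\counter(v_{i+1})\geq\id(r)$ holds at the receive event. Throughout the window $\left[t_i,t_{u_{i+1}\rightarrow v_{i+1}}^\receive\right]$, Lemma~\ref{lem:bad_avoid} preserves the small counter property and forbids every event outside the scripted sequence, so the only novel activity I have to track is the single stream on the directed edge $(u_{i+1},v_{i+1})$, and asynchronous finite-delay delivery then makes (ii) immediate once the sender is actually sending.

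For the done-notification, $(v_{i+1},u_{i+1})\in\{e_1,\dots,e_i\}$ forces $v_{i+1}\in R_i$, so by $\IH(i)$ the counter of $v_{i+1}$ is frozen at some value $\geq \id(r)$; there is no wait, $u_{i+1}$ raises $\sigma_j$ to $\id(r)+N+2$, and the inner while-loop of Algorithm~\ref{alg:notify} fires $\receivedone_{j'}(v_{i+1})$ as soon as $\rho_{j'}(v_{i+1})$ reaches that threshold. The explore-via-back-edge subcase ($v_{i+1}\in R_i$) is only mildly harder: $\IH(i)$ again gives $\counter(v_{i+1})\geq\id(r)$, and Observation~\ref{obs:s=c+1} shows that $v_{i+1}$ has already sent $\geq \id(r)+1$ synchronized-counting pulses toward $u_{i+1}$, so eventual delivery satisfies the wait and the remainder of the argument mirrors the done case.

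The genuinely delicate subcase---and where I expect the main obstacle---is the explore-via-tree-edge descent to a fresh child ($v_{i+1}\notin R_i$). Here I apply Lemma~\ref{lem: no deadlock} to $S=V\setminus R_i$ starting at time $t_i$: by Observation~\ref{obs-dfs3} combined with Lemma~\ref{lem-bad-1}, no DFS notification pulse has yet been delivered to $S$, so the precondition holds. While $u_{i+1}$ is still waiting it sends no DFS notification pulse, and no other $R_i$-node can either---firing an OnEvent at some $w\in R_i$ would require $\rho_h(w)=\id(r)+N+2$ on a port whose other endpoint is in $S$, which the small counter property rules out (that neighbor's $\sigma\leq\id(r)+N$). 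Consequently (C3) cannot occur and, via Lemma~\ref{lem: rho-sigma<=1}, neither can (C2); hence (C1) must materialize, giving $\counter(v_{i+1})\geq 1+\min_{u\in R_i}\counter(u)\geq \id(r)+1$. Observation~\ref{obs:s=c+1} then yields at least $\id(r)+2$ pulses from $v_{i+1}$ toward $u_{i+1}$, whose delivery clears the wait.

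With the wait cleared, $u_{i+1}$ runs $\sendpulsesuntil_j(\id(r)+N+2)$. Since $\sigma_{j'}(v_{i+1})=\counter(v_{i+1})+1\leq \id(r)+N$ by Lemma~\ref{lem:bad_avoid}, the incoming pulses must eventually violate $\rho_{j'}-\sigma_{j'}\leq 1$ and fire the $\nonleader$ exit in Algorithm~\ref{alg:2-connected}, freezing $\counter(v_{i+1})$ at a value that by monotonicity remains $\geq \id(r)+1$. The bounds $\id(r)\leq\counter(v_{i+1})\leq\id(r)+N-1$ force $\leader\id(v_{i+1})=\lfloor\counter(v_{i+1})/N\rfloor\cdot N=\id(r)$, so the subsequent $\rho_{j'}=\leader\id+N+2$ wait in Algorithm~\ref{alg:2-connected} is precisely $\rho_{j'}=\id(r)+N+2$; its completion fires $\receiveexplore_{j'}(v_{i+1})=\mathcal{E}_{u_{i+1}\rightarrow v_{i+1}}^\receive$, with the required $\counter(v_{i+1})\geq\id(r)$ inherited from the frozen value.
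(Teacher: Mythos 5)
Your proof is correct and follows essentially the same route as the paper: the same case split (done-notification, explore over a back edge, explore over a tree edge), \Cref{lem:bad_avoid} to preserve the small counter property and exclude stray events, \Cref{lem: no deadlock} applied with $S=V\setminus R_i$ to clear the sender's wait, and the $\sigma\leq \id(r)+N$ threshold argument to force the $\nonleader$ exit and then the receive confirmation. The only immaterial difference is that you rule out (C2)/(C3) during the waiting period so that (C1) must occur, whereas the paper shows each of (C1)--(C3) directly implies the wait terminates; both yield the same conclusion, including the bound $\counter(v_{i+1})\geq\id(r)$ at the receive event.
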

\begin{proof}
Let the port of the edge $\{u_{i+1}, v_{i+1}\}$ be $j$ in $u_{i+1}$ and $k$ in $v_{i+1}$. By \Cref{obs-dfs1}, $\mathcal{E}_{u_{i+1} \rightarrow v_{i+1}}^\send$ occurs at time $t_i$.
We break the analysis into a few cases.

\paragraph{Done-notification.} Suppose a done-notification is sent across $e_{i+1}$, meaning that $\mathcal{E}_{u_{i+1} \rightarrow v_{i+1}}^\send$ is $\senddone_j(v_{i+1})$ and $\mathcal{E}_{u_{i+1} \rightarrow v_{i+1}}^\receive$ is $\receivedone_k(v_{i+1})$.
The way our algorithm sends a done-notification implies that a total of $\id(r)+N+2$ pulses have been sent from $u_{i+1}$ along the edge $\{u_{i+1}, v_{i+1}\}$ by time $t_i$. To confirm the receipt of the done-notification, it suffices that all these $\id(r)+N+2$ pulses are delivered $v_{i+1}$, which happens within a finite time.

\paragraph{Explore-notification.}  Suppose an explore-notification is sent across $e_{i+1}$. That is, 
 $\mathcal{E}_{u_{i+1} \rightarrow v_{i+1}}^\send$ is $\sendexplore_j(v_{i+1})$ and $\mathcal{E}_{u_{i+1} \rightarrow v_{i+1}}^\receive$ is $\receiveexplore_k(v_{i+1})$.
By the description of our algorithm, upon triggering the event $\mathcal{E}_{u_{i+1} \rightarrow v_{i+1}}^\send$, $u_{i+1}$ first waits until getting at least $\id(r)+1$ messages from $v_{i+1}$. After that, $u_{i+1}$ sends to $v_{i+1}$ several pulses until a total of $\id(r)+N+2$ pulses have been sent along the directed edge $e_{i+1}=(u_{i+1},v_{i+1})$. 

\paragraph{Back edge.} Suppose $\{u_{i+1}, v_{i+1}\}$ is a back edge, meaning that $v_{i+1}$ has already advanced to the DFS notification phase by time $t_i$. By (H3)$(i)$, $\counter(v_{i+1}) \geq \id(r)$, so $v_{i+1}$ has already sent to $u_{i+1}$ at least $\counter(v_{i+1})+1 \geq \id(r)+1$ synchronized counting pulses (\Cref{obs:s=c+1}), meaning that $u_{i+1}$ only waits for a finite amount of time, and eventually $v_{i+1}$ will receive a total of $\id(r)+N+2$ pulses from $u_{i+1}$, triggering $\mathcal{E}_{u_{i+1} \rightarrow v_{i+1}}^\receive$.

\paragraph{Tree edge.} Suppose $\{u_{i+1}, v_{i+1}\}$ is a tree edge, meaning that $v_{i+1}$ is still in the synchronized counting phase by time $t_i$. 

First, we claim that the waiting before sending pulses takes finite time for $u_{i+1}$ by applying \Cref{lem: no deadlock} to the moment $t_i$ with $S = V \setminus R_i$. At least one of the three types of progress is made within finite time.
\begin{itemize}
    \item Suppose (C1) holds, then 
\[1+\id(r)\leq 1 + \min_{u \in V \setminus S} \counter(u) \leq \min_{v \in S} \counter(v) \leq \counter(v_{i+1}),\]
where the first inequality follows from (H3)$(i)$. By \Cref{obs:s=c+1}, this means that $v_{i+1}$ has sent out $\id(r)+1$ pulses to $u_{i+1}$, and eventually these pulses will arrive at $u_{i+1}$, so $u_{i+1}$ only waits for a finite time.
\item Suppose (C2) holds, meaning that some node in $S$ exits the synchronized counting phase, so either $\mathcal{E}_{u_{i+1} \rightarrow v_{i+1}}^\receive$ is triggered, or some event in \[\mathscr{E}^\star \setminus \left\{\start(r),\mathcal{E}_{u_1 \rightarrow v_1}^\send,\mathcal{E}_{u_1 \rightarrow v_1}^\receive,  \ldots, \mathcal{E}_{u_i \rightarrow v_i}^\receive, \mathcal{E}_{u_{i+1} \rightarrow v_{i+1}}^\send\right\}\] is triggered. In the former case, we are done already. The latter case is not possible before first triggering $\mathcal{E}_{u_{i+1} \rightarrow v_{i+1}}^\receive$ in view of \Cref{lem:bad_avoid}.
\item Suppose (C3) holds, meaning that a DFS notification pulse from some node in $V \setminus S$ is delivered to some node in $S$. By a combination of \Cref{lem-bad-1} and \Cref{lem:bad_avoid}, we infer that unless $\mathcal{E}_{u_{i+1} \rightarrow v_{i+1}}^\receive$ has been triggered, the pulse must be sent from $u_{i+1}$ to $v_{i+1}$. In either case, the waiting must be done within finite time.
\end{itemize}

After the waiting is done,  eventually $v_{i+1}$ will receive a total of $\id(r)+N+2$ pulses from $u_{i+1}$, meaning that
\[\rho_k(v_{i+1}) = \id(r)+N+2.\]
For $v_{i+1}$ to confirm the receipt of the explore-notification, we need $v_{i+1}$ to first satisfy the $\nonleader$ exit condition, To do so, we use the small counter property, which is guaranteed to hold during the time interval $\left[t_i, t_{u_{i+1} \rightarrow v_{i+1}}^\receive\right)$ in view of \Cref{lem:bad_avoid}. 

Due to the small counter property, the counter value of $v_{i+1}$ is capped by $\id(r)+N-1$, so by \Cref{obs:s=c+1},  while $v_{i+1}$ is still in the synchronized counting phase,
\[\sigma_k(v_{i+1}) \leq (\id(r)+N-1)+1=\id(r)+N.\]
Therefore, there must be a moment when the $\nonleader$ exit condition $\rho_k - \sigma_k > 1$ is satisfied for $v_{i+1}$, and then once all the $\id(r)+N+2$ pulses from $u_{i+1}$ have been delivered to $v_{i+1}$, $\mathcal{E}_{u_{i+1} \rightarrow v_{i+1}}^\receive = \receiveexplore_k(v_{i+1})$ is triggered.
\end{proof}


\paragraph{Part 3: Wrapping up.} Now we have collected all the needed ingredients to prove \Cref{lem:inductive-step}.

\begin{proof}[Proof of \Cref{lem:inductive-step}.]
Let $i \in \{0,1,\ldots,2|E|-1\}$.
Suppose $\IH(i)$ holds. Our goal is to show that $\IH(i+1)$ holds. To prove (H1)$(i+1)$, it suffices to show that
\[t_{i} = t_{u_{i+1} \rightarrow v_{i+1}}^\send < t_{u_{i+1} \rightarrow v_{i+1}}^\receive < \infty.\]
The part $t_{i} = t_{u_{i+1} \rightarrow v_{i+1}}^\send < t_{u_{i+1} \rightarrow v_{i+1}}^\receive$  follows from \Cref{obs-dfs1}. The part
 $t_{u_{i+1} \rightarrow v_{i+1}}^\receive < \infty$ follows from \Cref{lem:good}. We set $t_{i+1} = t_{u_{i+1} \rightarrow v_{i+1}}^\receive$ To prove (H2)$(i+1)$, it suffices to show that, within the timer interval $[t_i, t_{i+1})$, all events in  $\mathcal{E}_B$ do not occur. This follows from \Cref{lem:bad_avoid}. To prove (H3)$(i+1)$, it suffices to show that \[\id(r) \leq \counter(v_{i+1}) \leq \id(r)+ \length(P_{v_{i+1},r})\]
 at time $t_{i+1}$.
 If we already have $v_{i+1} \in R_i$, then this follows from (H3)$(i)$. Otherwise, by \Cref{lem:bad_avoid} again, we infer that the small counter property holds for $v_{i+1}$ at time $t_{i+1}$, so $\counter(v_{i+1}) \leq \id(r)+ \length(P_{v_{i+1},r})$. The inequality $\id(r) \leq \counter(v_{i+1})$ follows from \Cref{lem:good}.
\end{proof}

\section{Uniform Leader  Election for Unoriented Rings}\label{sec:unoriented}

In this section, we prove our main theorem on rings.
\secondmainthm*

The proof is given by our uniform algorithm, whose pseudocode is given in
Algorithm~\ref{algo:unoriented}. Differently, from the algorithms of Section \ref{sec:qle2ed} in this section we do not use an event driven formalism. This choice is made to simplify the presentation of the algorithm, which proceeds in several phases that are more easily explained by explicitly using blocking primitives that wait until a pulse on a specific port is received.

Therefore, we introduce a new action, {\boldmath $\rcvpulse_i$}: Wait until a pulse is received from port $i$, where $i \in \{0,1,q\}$.  
    When $i = q$, the procedure waits for the first pulse on any port and stores the corresponding port number in the variable $q$.  

In addition to waiting for a pulse, we use:
{\boldmath $\sendpulses_i(k)$}, send $k$ pulses through port $i$, where $i \in \{0,1\}$.

We also define the local variable {\boldmath $\diff(v)$}, which stores the difference between the number of pulses node $v$ received on port $1$ and the number of pulses received on port $0$.

\paragraph{Algorithm description.} Our algorithm is decomposed in different phases that we describe below.

\begin{itemize}
\item The \emph{initialization}. First, each node doubles its
  identifier $\id$ (Line~\ref{line-algo-unoriented-init}). The
  purpose of this action is to ensure that no two nodes have
  consecutive identifiers. This allows us to distinguish the different
  phases of the algorithm during the execution.

\item The \emph{competing} phase (Lines~\ref{line-algo-unoriented-competing-beg} to~\ref{line-algo-unoriented-competing-end} in blue). Every node
  starts by exchanging pulses with both its neighbors in a
  pseudo-synchronous way (sending them pulses and waiting for their
  responses). The number of times a node repeats this step is equal
  to the value of its identifier $\id$.

\item The \emph{solitude-checking} phase (Lines~\ref{line-algo-unoriented-check-alone-beg} to~\ref{line-algo-unoriented-check-alone-end} in orange).
  Once a node has finished the competing phase, it wants to know if
  it was the last one competing. To do so, it sends two pulses on
  port $0$, receives one on port $1$ and waits for another pulse. If this
  last pulse arrives on port $1$, it means that its own pulse has
  made a complete traversal of the cycle and that all other nodes
  stopped competing. Otherwise (i.e., if it gets a pulse on port $0$
  before getting two pulses on port $1$), it means that there is
  still another node executing the competing phase.

\item The \emph{global termination} phase
  (Lines~\ref{line-algo-unoriented-termination-beg}
  to~\ref{line-algo-unoriented-termination-end} in green). If a node
  detects that all other nodes stopped competing (i.e., if $q = 1$
  on Line~\ref{line-algo-unoriented-test-alone}), then it knows it
  will be the leader. Before terminating, it sends a pulse on port
  $0$ to inform the other nodes that the execution is over. This
  termination pulse will traverse the entire ring and come back through port
  $1$. At this point, every other node will eventually be in state
  $\nonleader$ and there will be no pulses in transit. Then the
  node can enter the state $\leader$ and terminate. 

\begin{algorithm}[ht!]
\footnotesize
  \DontPrintSemicolon
  \SetKw{Send}{send}
  \SetKw{Receive}{receive}
  \SetKwData{Status}{status}
  \SetKwFunction{Leader}{Leader}
  \SetKwFunction{NonLeader}{Non-Leader}
  \SetKwData{Count}{count}
  \SetKwData{ID}{ID}
  $\ID(v) \leftarrow 2*\ID(v)$ \label{line-algo-unoriented-init}\;
  \color{blue}
  \For{$i \leftarrow 1$ \KwTo $\ID(v)$
    \label{line-algo-unoriented-competing-beg}}     
    {
      $\sendall(1)$ \tcp*{send 1 pulse on both ports} \label{line-algo-unoriented-competing-send0}
      $\rcvpulse_0$ \tcp*{wait for a pulse from port 0} \label{line-algo-unoriented-competing-rec0}
      $\rcvpulse_1$ \tcp*{wait for a pulse from port 1} \label{line-algo-unoriented-competing-rec1} \label{line-algo-unoriented-competing-end}
    }
  \color{black}
  \tcc{Check if I am alone, i.e., if I have the largest identifier}
  \color{orange}
  $\sendpulses_0(2)$ \tcp*{send 2 pulses on port 0} \label{line-algo-unoriented-check-alone-beg}
  $\rcvpulse_1$ \tcp*{wait for a pulse from port 1} \label{line-algo-unoriented-check-alone-rec1}
  $\rcvpulse_q$ \tcp*{wait for first pulse from any port; store port in $q$} \label{line-algo-unoriented-check-alone-end}
  \color{black}
  \eIf{$q = 1$ \label{line-algo-unoriented-test-alone}}
    {
    \tcc{I am the last one competing}
    \color{teal}
    $\sendpulses_0(1)$ \tcp*{send termination pulse on port 0 — I’m leader} \label{line-algo-unoriented-termination-beg}
    $\rcvpulse_1$ \tcp*{wait for termination pulse to return from port 1} \label{line-algo-unoriented-termination-end}
    \color{black}
    \Return{\Leader} \label{line-algo-unoriented-return-leader}
    \color{black}
  }
  {
    \tcc{There is someone with a larger identifier}
    \color{red}
    \color{violet} 
     $\sendpulses_1(2)$ \tcp*{send 2 pulses on port 1 — balance line 7} \label{line-algo-unoriented-rebalancing-beg}
    $\rcvpulse_0$ \tcp*{wait for a pulse from port 0} \label{line-algo-unoriented-rebalancing-rec0}
    $\rcvpulse_1$ \tcp*{wait for a pulse from port 1} \label{line-algo-unoriented-rebalancing-end}
    \color{red} 
    $\diff \leftarrow 0$  \label{line-algo-unoriented-relaying-beg}\; 
    \color{black}
    \tcc{$\diff$ = pulses from port 1 minus pulses from port 0}
       \color{red}   
    \Repeat
      {$|\diff| \geq 3$ \label{line-algo-unoriented-relaying-end}}
      {\label{line-algo-unoriented-relaying-loop-beg}
        $\rcvpulse_q$ \tcp*{wait for pulse from any port; store port in $q$}
        $\diff \leftarrow \diff + 2*q - 1$ \tcp*{update difference counter}
        $\sendpulses_{1-q}(1)$ \tcp*{forward pulse to opposite port}
      }
    \color{black}
    \Return{\NonLeader} \label{line-algo-unoriented-return-nonleader}
    \color{black}
  }
  \caption{Uniform leader election on unoriented rings for node $v$}
  \label{algo:unoriented}
\end{algorithm}

\item The \emph{rebalancing} phase
  (Lines~\ref{line-algo-unoriented-rebalancing-beg}
  to~\ref{line-algo-unoriented-rebalancing-end} in violet). If the
  node is not alone (i.e., if $q = 0$ on
  Line~\ref{line-algo-unoriented-test-alone}), we want the node to
  relay pulses between its competing neighbors so that these
  neighbors act as if there was no node between them. To do so, we
  have to ensure that it sent the same number of pulses in both
  directions. So the node sends two pulses on port 1 (to balance
  the pulses sent on port $0$ on
  Line~\ref{line-algo-unoriented-check-alone-beg}). At this point, on
  each port, the node has sent one more pulse than it has
  received. So it waits for one pulse on each port before entering
  the relaying phase. For its neighbors, the node behaves as if it
  has executed two more iterations of the competing phase: it has sent
  two pulses on each port and has received two pulses on each
  port. Since no two nodes have consecutive identifiers (recall
  that each node doubled its $\id$ at
  Line~\ref{line-algo-unoriented-init}), no other node can
  terminate its competition phase before receiving both these
  pulses.
  
\item The \emph{relaying} phase
  (Lines~\ref{line-algo-unoriented-relaying-beg}
  to~\ref{line-algo-unoriented-relaying-end} in red). In this
  phase, the node just relays the pulses it gets on port $0$ to
  port $1$ and vice-versa. It also keeps a counter ($\diff$) to store the
  difference between the numbers of pulses received on port $0$ and
  on port $1$. When this counter reaches $3$ or $-3$, then it means
  that one node has successfully passed the solitude test and sent
  a termination pulse. The node can then enter the state
  $\nonleader$ and terminate.
\end{itemize}

\subsection{Correctness}

  We say that a node is \emph{active} if it is executing an instruction
  between Lines~\ref{line-algo-unoriented-competing-beg}
  and~\ref{line-algo-unoriented-rebalancing-end}. A node is a
  \emph{relaying node} if is executing an instruction between
  Lines~\ref{line-algo-unoriented-relaying-beg}
  and~\ref{line-algo-unoriented-relaying-end}. 

Consider two rings: $R$ of size $n$ and $R'$ of size $n-1$, where $R'$ is obtained by removing a node $v$ from $R$ and directly connecting its neighbours $w$ and $u$. We say that an execution of our algorithm on $R$ is indistinguishable from an execution on $R'$ if there exists an execution on $R'$ in which $w$ and $u$ return from waiting on the $\rcvpulse$ actions at exactly the same times as they do in $R$.

Notice that, in such executions, $w$ and $u$ do not perceive any difference between the execution of the algorithm on $R$ and on $R'$. This, in turn, implies that all other nodes also observe an execution, when viewed in terms of $\rcvpulse$ events, that could occur in either $R$ or $R'$. 
\color{black}

\begin{proposition}\label{prop:redux}
Consider a ring of size $n \geq 2$ with consecutive nodes $v_1, v_2$ and $v_3$ (with $v_1=v_3$ if $n=2$)  such that $v_2$ is the node with the smallest identifier on the entire ring. As long as $v_2$ does not stop, the execution of the algorithm on this ring is indistinguishable for any node on the ring from a reduced ring of size $n-1$, where $v_2$ has been removed and $v_1$ is directly connected with $v_3$. 
\end{proposition}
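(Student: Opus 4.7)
The plan is to prove the proposition by constructing, event by event, an execution on $R'$ that mimics the given execution on $R$, treating the minimum-identifier node $v_2$ as a transparent asynchronous relay between its neighbours $v_1$ and $v_3$. Walking through the $R$-execution in temporal order, I would define the $R'$-scheduler as follows: every pulse delivery that does not touch $v_2$ is replayed verbatim on $R'$; every pulse delivered \emph{to} $v_2$ in $R$ is ``parked'' as a pulse in flight on the new direct edge $\{v_1,v_3\}$ of $R'$; and every pulse delivered \emph{from} $v_2$ to $v_i\in\{v_1,v_3\}$ in $R$ is translated into the delivery of one parked pulse from the opposite endpoint to $v_i$ along that new edge. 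Events internal to $v_2$ (its waits and local updates) are simply skipped in $R'$.

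I would then carry two invariants by induction on the timeline. Invariant (I1) asserts that the local states of $v_1$ and $v_3$---and in particular the values of their send/receive counters on every port---agree in $R$ and $R'$ at every corresponding instant; this immediately yields the conclusion of the proposition, since it forces $v_1$ and $v_3$ to exit their $\rcvpulse$ calls at identical times in both rings. Invariant (I2) asserts that, on the new edge of $R'$, the number of pulses in flight in each direction is non-negative and is equal to the corresponding differences $s_{v_1}^{b}(t)-r_{v_3}^{c}(t)$ and $s_{v_3}^{c}(t)-r_{v_1}^{b}(t)$ measured in $R$, where $b$ and $c$ denote the ports of $v_1$ and $v_3$ formerly incident to $v_2$.

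The main obstacle is verifying (I2): whenever the $R'$-scheduler must deliver a pulse from $v_3$ to $v_1$ on the new edge (because in $R$, $v_2$ delivers one to $v_1$), a matching parked pulse must be available, i.e.\ $s_{v_3}^{c}(t) \geq r_{v_1}^{b}(t)+1$. I would prove this by case analysis on the current phase of $v_2$. In the competing phase, $v_2$'s iteration count equals $s_{v_2}^{a}(t)=s_{v_2}^{1-a}(t)$ and is bounded above by both $s_{v_1}^{b}(t)$ and $s_{v_3}^{c}(t)$, since $v_2$ consumes one pulse from each neighbour per iteration; combining this with the precondition $r_{v_1}^{b}(t)+1 \leq s_{v_2}^{a}(t)$ needed for the $R$-delivery to be valid yields the required inequality. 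In the solitude-check and rebalance phases, the two extra pulses that $v_2$ emits on each port are released only after $v_2$ has completed all $\id(v_2)$ competing iterations, by which point both $v_1$ and $v_3$ have already sent at least $\id(v_2)$ pulses; the additional accounting closes pairwise. In the relaying phase, $v_2$ is literally a FIFO pipe, so $s_{v_2}^{1-a}(t)=r_{v_2}^{a}(t)$ and the inequality is immediate.

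The most delicate step, and the one I expect to be the principal source of difficulty, is the transition from $v_2$'s competing phase into its solitude-check phase, at which $v_2$ releases two extra pulses on port $a$: here I must argue that $v_3$ has already produced, or will promptly produce within the available slack, matching pulses on port $c$ in $R'$. I would lean on the fact that every identifier is doubled at initialisation, so no two identifiers on the ring are consecutive, forcing $v_1$ and $v_3$'s progress to remain tightly coupled to $v_2$'s throughout competing and leaving just enough room to absorb the two extras without breaking (I2). Once this case is settled, both invariants are maintained throughout the interval during which $v_2$ does not stop, and the proposition follows directly from (I1).
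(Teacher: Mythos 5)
Your plan hinges on invariant (I2) surviving an exact-time, pulse-for-pulse replay, and the competing-phase case of your case analysis is precisely where it fails. The bound you claim there---that $v_2$'s iteration count is at most $s_{v_3}^{c}(t)$---is off by one: to have sent its $i$-th pulse towards $v_1$, node $v_2$ only needs to have \emph{received} $i-1$ pulses from $v_3$, so the delivery precondition $r_{v_1}^{b}(t)+1 \leq s_{v_2}^{a}(t)=i$ only gives $s_{v_3}^{c}(t) \geq i-1 \geq r_{v_1}^{b}(t)$, one short of the $r_{v_1}^{b}(t)+1$ you need. This deficit is realizable: take the ring $v_1,v_2,v_3,v_4$ with $v_2$ the global minimum, and let the scheduler delay every pulse from $v_4$ to $v_3$ until some late time $T$. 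Then $v_2$ completes its first competing iteration using the first pulses of $v_1$ and $v_3$, sends its second pulse to $v_1$, and that pulse can be delivered to $v_1$ (unblocking a $\rcvpulse$) at a time $t<T$ at which $v_3$, still stuck in its first iteration, has sent only one pulse on its $v_2$-facing port. In the reduced ring, $v_1$'s second receive on that port would have to be a pulse of $v_3$ that has not yet been sent and cannot be sent before $T$, so no execution of $R'$ reproduces these $\rcvpulse$ return times. In other words, the verbatim same-time simulation cannot be carried out in general, and no amount of extra bookkeeping at the solitude-check transition (the step you flagged as delicate) repairs a failure that already occurs inside the competing phase.

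This is why the paper does not attempt a per-delivery, real-time bijection. Its proof works with totals and with retimed executions: it first shows that $v_2$ necessarily fails its solitude check and enters the rebalancing phase (\Cref{lemma:nomaxnoleader})---a fact your sketch uses implicitly but never establishes, and which is essential, since a $v_2$ that passed the check would emit a lone termination pulse and stop relaying; it then shows that by the end of rebalancing $v_2$ has sent and received exactly $\id_2+2$ pulses on each port (\Cref{lemma:rebalancing}), which is no more than what $v_1$ and $v_3$ would exchange directly during their competing phases in the reduced ring, because identifiers are doubled and hence every other identifier is at least $\id_2+2$ (\Cref{obs:reducing}); and from that point on $v_2$ is literally a relay. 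Indistinguishability is then the asynchronous statement that the neighbours' observations are consistent with \emph{some} execution of the reduced ring, whose scheduler is free to retime deliveries, rather than the instant-by-instant state equality your (I1) demands. To salvage your simulation style you would have to weaken (I1)/(I2) to a correspondence up to retiming (mapping each pulse $v_2$ emits during competing and rebalancing to a pulse its far neighbour will \emph{eventually} send, and letting the $R'$ scheduler postpone its delivery), at which point you are essentially reproving the paper's counting lemmas.
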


Without loss of generality, let us assume that at $v_2$, the port
number to $v_1$ is $0$ and the port number to $v_3$ is $1$. Let
$\id_1$, $\id_2$, and $\id_3$ be the respective identifiers of
nodes $v_1, v_2, v_3$. To prove Proposition~\ref{prop:redux}, we
first establish several observations and lemmas.

\begin{observation}\label{observation:obvious1}
    When a node ends the $i$-th iteration of the loop of the competing phase, it has sent $i$ pulses to each of its neighbors and has received $i$ pulses from each of them.
\end{observation}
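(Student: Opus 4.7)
The plan is to prove the observation by a straightforward induction on the iteration index $i$, tracking the pulses sent and received at each port during the competing phase. The only ingredient needed is a careful reading of the loop body: at Lines~\ref{line-algo-unoriented-competing-send0}–\ref{line-algo-unoriented-competing-rec1}, each iteration executes $\sendall(1)$ (which sends exactly one pulse on each of ports $0$ and $1$), followed by $\rcvpulse_0$ and $\rcvpulse_1$ (which, by definition of the blocking primitive, each return only after exactly one new pulse has been received on the indicated port). No other instruction between the start of the competing phase and Line~\ref{line-algo-unoriented-competing-end} sends or receives any pulse.

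For the base case $i=1$, before the first iteration, the node has sent and received $0$ pulses on each port. During the iteration it sends one pulse on each port via $\sendall(1)$, then blocks until one pulse has arrived on port $0$, and then blocks until one pulse has arrived on port $1$. Hence, upon completing the first iteration, exactly one pulse has been sent and one pulse has been received on each of the two ports, as claimed.

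For the inductive step, assume that after the $(i-1)$-th iteration the node has sent $i-1$ pulses and received $i-1$ pulses on each port. The $i$-th iteration begins with $\sendall(1)$, raising the number of sent pulses on each port to $i$. It then executes $\rcvpulse_0$ and $\rcvpulse_1$, each of which blocks until one additional pulse is received on the respective port. Hence, when the $i$-th iteration finishes (i.e., Line~\ref{line-algo-unoriented-competing-end} is reached), the node has sent exactly $i$ pulses and received exactly $i$ pulses on each port.

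There is essentially no obstacle in the argument; the only subtle point worth flagging is that one must rely on the semantics of the $\rcvpulse_i$ primitive---that it waits for a genuinely \emph{new} pulse on port $i$ before returning---to conclude that the received counts advance by exactly one per iteration. Since no action inside the competing phase other than $\sendall(1)$ emits pulses, the inductive invariant on the sent counts is immediate from the single execution of $\sendall(1)$ per iteration, completing the proof.
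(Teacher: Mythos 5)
Your proof is correct: the paper states this observation without any proof, treating it as immediate from the loop body, and your induction (one $\sendall(1)$ plus one $\rcvpulse_0$ and one $\rcvpulse_1$ per iteration) is exactly that intended argument made explicit.
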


\begin{lemma}\label{lemma:sync}
  Consider two neighboring nodes, $v_a$ and $v_b$, executing the
  competing phase. Given their respective indices, $i_a$ and $i_b$, in
  the  loop at Lines~\ref{line-algo-unoriented-competing-beg}--\ref{line-algo-unoriented-competing-end}, we have
  $|i_a - i_b| \leq 1$. Moreover, if $i_b \leq i_a$ and $v_b$ has not
  executed the sends of its iteration $i_b$ then $i_b=i_a$.
\end{lemma}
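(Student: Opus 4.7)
The plan is a pulse-counting argument that exploits the rigid structure of the competing-phase loop body: a single $\sendall(1)$ followed by one $\rcvpulse$ on each port. For $v \in \{v_a, v_b\}$, let $s(v)$ denote the number of pulses $v$ has sent so far along the edge $\{v_a, v_b\}$, and let $c(v)$ denote the number of times $v$ has successfully returned from a $\rcvpulse$ whose port lies on that edge. First I would establish the following elementary bookkeeping: if $v$'s current loop index is $k$---meaning $v$ is executing some instruction in the body of iteration $k$ and has not yet looped back---then $v$ has completed iterations $1, \ldots, k-1$ in full, so $c(v) \geq k-1$, while $s(v) = k-1$ before the $\sendall(1)$ of iteration $k$ has executed and $s(v) = k$ afterward.

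The decisive inequality is the asynchronous-channel invariant: a pulse retrieved via $\rcvpulse$ on a given port must first have been transmitted on the corresponding port by the neighbor, so $c(v_a) \leq s(v_b)$ and $c(v_b) \leq s(v_a)$. Plugging in the iteration-index bounds gives
\[
i_a - 1 \;\leq\; c(v_a) \;\leq\; s(v_b) \;\leq\; i_b,
\]
whence $i_a - i_b \leq 1$; the symmetric chain starting from $c(v_b)$ yields $i_b - i_a \leq 1$, and together these give $|i_a - i_b| \leq 1$. For the ``moreover'' clause, if $v_b$ has not yet executed the $\sendall(1)$ of its iteration $i_b$ then $s(v_b) \leq i_b - 1$, and the same chain sharpens to $i_a \leq i_b$; combined with the hypothesis $i_b \leq i_a$, this forces $i_b = i_a$.

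I do not anticipate a serious obstacle: the argument is essentially the classical $\alpha$-synchronizer analysis of Awerbuch restricted to a two-node slice. The only delicate point is fixing the convention for ``current loop index $k$'' at the precise instant the body of iteration $k-1$ finishes, so that $s(v)$ and $c(v)$ are unambiguously defined. Once this convention is stated, the entire proof reduces to the two displayed chains above, without any further case analysis on port orientation, since the counts $s$ and $c$ are defined on the edge rather than on a specific port label.
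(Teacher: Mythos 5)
Your proof is correct and follows essentially the same route as the paper's: the paper also bounds the loop indices via the counts of send/receive executions on the shared link (its $x$ and $y$ play the roles of your $c(v_a)$ and $s(v_b)$) together with the channel invariant that receives cannot exceed the neighbor's sends. Your explicit $s/c$ bookkeeping is just a slightly cleaner phrasing of the same chain of inequalities, including the sharpened bound $s(v_b)\leq i_b-1$ for the ``moreover'' clause.
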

\begin{proof}

Suppose w.l.o.g. that node $v_a$ exchanges pulses with $v_b$ on port 1, while node $v_b$ exchanges pulses with $v_a$ on port 0.

If node $v_a$ executed $x$ times
Line~\ref{line-algo-unoriented-competing-rec1} (i.e.,
$x+1 \geq i_a \geq x$), then node $v_b$ executed $y \geq x$ times
Line~\ref{line-algo-unoriented-competing-send0} (i.e., $i_b \geq
y$). From this $i_b \geq i_a-1$.  By a symmetric argument we have
$i_a \geq i_b-1$ and this shows the first part of the lemma.  Assume
now that $v_b$ has not executed its send, this means that $y=(i_b-1)$
and thus $x \leq y \leq i_b-1$ therefore since
$i_a \leq x+1 \leq i_b$. For the hypothesis we have $i_a=i_b$.
\end{proof}

\begin{lemma}\label{lemma:minexit}
Among nodes $v_1,v_2$ and $v_3$, node $v_2$ is the first  to exit the competing
  phase. Moreover, $v_1$ (resp. $v_3$) cannot exit its competing phase
before it has received two pulses sent by $v_2$ after
 $v_2$ has exited the competing phase (sent at
  Line~\ref{line-algo-unoriented-check-alone-beg} for $v_1$ and
  at~\ref{line-algo-unoriented-rebalancing-beg} for $v_3$).
\end{lemma}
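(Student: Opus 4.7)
The plan is to combine Lemma~\ref{lemma:sync} with a pulse-counting argument, exploiting the fact that doubling all identifiers at Line~\ref{line-algo-unoriented-init} forces any two distinct identifiers to differ by at least~$2$. Since $v_2$ has the smallest identifier on the whole ring, we have $\id_1 \geq \id_2 + 2$ and $\id_3 \geq \id_2 + 2$, which supplies the two ``missing'' iterations needed for the moreover part.

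For the first claim, I argue by contradiction. Suppose $v_1$ exits the competing phase strictly before $v_2$ (the case of $v_3$ is symmetric). At the instant $v_1$ exits, its loop index satisfies $i_1 = \id_1$, while $v_2$ is still in the loop, so $i_2 \leq \id_2$. Applying Lemma~\ref{lemma:sync} to the neighbours $v_1$ and $v_2$ at that moment yields $|i_1 - i_2| \leq 1$, hence $\id_1 \leq \id_2 + 1$, contradicting $\id_1 \geq \id_2 + 2$.

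For the moreover part, I use Observation~\ref{observation:obvious1}: at the moment $v_2$ exits the competing phase, it has sent exactly $\id_2$ pulses on each of its two ports. Since $v_1$ must complete $\id_1$ iterations of the loop, it must receive $\id_1$ pulses from $v_2$, and therefore at least $\id_1 - \id_2 \geq 2$ further pulses from $v_2$ after $v_2$'s exit. Inspecting Algorithm~\ref{algo:unoriented}, after leaving the competing loop $v_2$'s next sends on port~$0$ are exactly the two pulses emitted at Line~\ref{line-algo-unoriented-check-alone-beg}; the remaining lines of the solitude-check block perform only receives. Thus $v_1$ cannot finish its competing phase before these two pulses have arrived. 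The argument for $v_3$ is identical with the ports swapped: no port-$1$ sends occur between $v_2$'s exit and Line~\ref{line-algo-unoriented-rebalancing-beg}, and $v_3$ needs $\id_3 - \id_2 \geq 2$ additional port-$1$ pulses from $v_2$ to complete its own loop.

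The subtle point I want to be careful about is the applicability of Lemma~\ref{lemma:sync}: the loop indices are defined only while both nodes are still inside the competing loop, so the contradiction argument has to be phrased as ``at the instant just before $v_1$ exits'', when both neighbours are still in the loop and the hypothesis of Lemma~\ref{lemma:sync} is met. The only other routine bookkeeping is to verify by line-by-line inspection of the pseudocode that $v_2$ has no other path to emit pulses on the relevant port between its exit from the competing loop and the cited line, which is immediate.
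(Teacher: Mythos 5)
Your proposal is correct and follows essentially the same route as the paper: it combines Lemma~\ref{lemma:sync} and Observation~\ref{observation:obvious1} with the gap of at least $2$ between identifiers created by the doubling at Line~\ref{line-algo-unoriented-init}, and then notes that after $v_2$ exits the loop its first additional pulses on each port can only be those sent at Line~\ref{line-algo-unoriented-check-alone-beg} (port $0$, towards $v_1$) and Line~\ref{line-algo-unoriented-rebalancing-beg} (port $1$, towards $v_3$). Phrasing the first claim as a contradiction at the instant just before $v_1$ (or $v_3$) would exit, rather than at the instant $v_2$ exits, is only a cosmetic difference from the paper's argument.
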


\begin{proof}
By Lemma \ref{lemma:sync} and Observation \ref{observation:obvious1}, we have that when $v_2$ exits the loop, it has sent  $\id_{2}$ pulses to both its neighbors and thus $v_1$ and $v_3$ are in iteration $\id_{2}$ or $\id_{2}+1$ of the loop. Since their identifiers are at least $\id_{2}+2$, they are still in their competing phase. 
Note that, by Lemma \ref{lemma:sync} and Observation \ref{observation:obvious1}, they cannot complete iteration $\id_{2}+2$ before they have received at least $\id_{2}+2$ pulses from $v_2$. The last two of these pulses have to be sent by $v_2$ either at Line~\ref{line-algo-unoriented-check-alone-beg} or at Line~\ref{line-algo-unoriented-rebalancing-beg}.
\end{proof}

\begin{lemma}\label{lemma:nomaxnoleader}
  Node $v_2$ eventually executes
  Line~\ref{line-algo-unoriented-test-alone}, i.e., it receives a
  pulse on port $1$ and then another pulse on some port
  $q$. Moreover, the second pulse delivered arrives on port $q = 0$
  and $v_2$ enters the rebalancing phase.
\end{lemma}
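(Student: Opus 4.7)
The plan is to analyze the pulses that the neighbours $v_1$ and $v_3$ of $v_2$ can possibly send and receive during the solitude-checking window of $v_2$, and to show that exactly the right pulses arrive on the right ports. First, I would establish a \emph{starvation} observation on $v_3$: between the moment $v_2$ leaves the competing loop and the moment it enters rebalancing on Line~\ref{line-algo-unoriented-rebalancing-beg}, $v_2$ has transmitted exactly $\id_2$ pulses on port $1$ (and exactly $\id_2 + 2$ on port $0$) and sends no more. By the send/receive structure of the competing loop (Observation~\ref{observation:obvious1}), $v_3$ cannot begin iteration $k$ of its own loop without having received $k-1$ pulses from $v_2$; hence $v_3$ can reach at most iteration $\id_2 + 1$, in which it performs a single send on its port to $v_2$ and then blocks at the next $\rcvpulse$. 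Analogously, $v_1$ can advance at most three iterations past $\id_2$, producing at most three pulses toward $v_2$.

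Next, I would argue that the required pulses are actually produced. Because each identifier was doubled at Line~\ref{line-algo-unoriented-init}, every other node on the ring has identifier at least $\id_2 + 2$, so none of them has exited the competing loop yet. Applying Lemma~\ref{lemma:sync} edge-by-edge around the ring, together with a deadlock-freedom argument in the spirit of Lemma~\ref{lem: no deadlock}, shows that every node completes its iteration $\id_2$ and performs the send of iteration $\id_2 + 1$. In particular, $v_3$ eventually delivers its iteration-$(\id_2+1)$ pulse to $v_2$ on port $1$, and $v_1$ eventually delivers its iteration-$(\id_2+1)$ pulse to $v_2$ on port $0$.

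Finally, I would combine the two. $\rcvpulse_1$ on Line~\ref{line-algo-unoriented-check-alone-rec1} is unblocked by the unique port-$1$ pulse from $v_3$. Once consumed, no additional port-$1$ pulse can ever arrive during the solitude-checking phase, since $v_3$ is blocked waiting for $v_2$ and $v_2$ does not send on port $1$ until rebalancing. Meanwhile, at least one port-$0$ pulse from $v_1$ is eventually generated and delivered. Hence any pulse subsequently consumed by $\rcvpulse_q$ on Line~\ref{line-algo-unoriented-check-alone-end} (whether already buffered or newly arriving) must arrive on port $0$, giving $q = 0$; the test on Line~\ref{line-algo-unoriented-test-alone} then directs $v_2$ into the rebalancing branch, as claimed. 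The main subtlety is the global progress argument: one must verify that the $\id_2$ pulses $v_2$ already sent on each port suffice for both its neighbours to complete iteration $\id_2$, after which the ring continues to advance asynchronously without further input from $v_2$.
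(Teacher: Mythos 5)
Your proposal takes essentially the same route as the paper's proof: it combines (i) a progress argument guaranteeing that $v_1$ and $v_3$ both reach the send of iteration $\id_2+1$ (so Line~\ref{line-algo-unoriented-check-alone-rec1} is unblocked by $v_3$'s pulse and a port-$0$ pulse from $v_1$ is eventually delivered) with (ii) the starvation observation that $v_2$ has sent only $\id_2$ pulses toward $v_3$, so $v_3$ blocks inside iteration $\id_2+1$ and can send no second pulse, forcing $q=0$ at Line~\ref{line-algo-unoriented-check-alone-end}. The only blemish is your side remark that no other node ``has exited the competing loop yet,'' which is not literally true (a distant local minimum may exit earlier); your argument does not actually need it, only that $v_1,v_3$ have identifiers at least $\id_2+2$ and that the ring keeps making progress, which is exactly what the paper uses.
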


\begin{proof}
  Any node $v \neq v_2$ in the ring has an $\id$ that is at least
  $\id_{2}+2$. Consequently, by Lemma \ref{lemma:minexit} and Lemma \ref{lemma:sync}, every node eventually completes $\id_2$
  iterations of the competing phase and every node $v \neq v_2$ enters iteration $\id_2+1$. Thus, in particular, $v_1$ and $v_3$ send pulses to $v_2$ in iteration $\id_{2}+1$. Consequently,  $v_2$ will eventually receive a pulse from $v_3$ on port $1$ (i.e., it will not be blocked at Line~\ref{line-algo-unoriented-check-alone-rec1}). 
  Moreover, since $v_1$ has sent a pulse \msg to $v_2$ at iteration $\id_{2}+1$, $v_2$ will eventually receive a pulse on some port $q$ (i.e., it will execute Line~\ref{line-algo-unoriented-check-alone-end}). As long as $v_2$ does not send a pulse to $v_3$, $v_3$ will not complete iteration $\id_2+1$ of the competing phase, and it will thus not send another pulse to $v_2$ before \msg is delivered. Consequently, $q = 0$ and $v_2$ enters the rebalancing phase. 
\end{proof}

Notice that if $v_2$ is only a local minimum among $v_1, v_2,$ and $v_3$, rather than a global one, it is still possible to adapt the proof to obtain a weaker variant of the lemma. In this variant, if $v_2$ executes Line~\ref{line-algo-unoriented-test-alone}, it must do so with $q = 0$.

From Lemma \ref{lemma:nomaxnoleader} we have:
\begin{corollary}\label{cor:loser}
  Node $v_2$ does not enter the global termination phase and thus it cannot terminate the algorithm in the $\leader$ state. 
\end{corollary}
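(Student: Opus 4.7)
The plan is to derive the corollary almost directly from Lemma~\ref{lemma:nomaxnoleader}, by inspecting the control flow of Algorithm~\ref{algo:unoriented} after Line~\ref{line-algo-unoriented-test-alone}. Lemma~\ref{lemma:nomaxnoleader} guarantees that $v_2$ does reach the test on Line~\ref{line-algo-unoriented-test-alone}, and that when it does, the value of $q$ is $0$, not $1$. So the entire task reduces to observing which branch of the \texttt{if--else} is taken and what each branch can return.

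First I would note that, in the pseudocode, the only way for a node to return $\leader$ is through Line~\ref{line-algo-unoriented-return-leader}, which sits inside the \texttt{then} branch guarded by the condition $q=1$. Equivalently, the global termination phase (Lines~\ref{line-algo-unoriented-termination-beg}--\ref{line-algo-unoriented-termination-end}) is entered if and only if that condition holds when Line~\ref{line-algo-unoriented-test-alone} is executed.

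Then I would apply Lemma~\ref{lemma:nomaxnoleader}: $v_2$ executes Line~\ref{line-algo-unoriented-test-alone} with $q=0$. Hence $v_2$ takes the \texttt{else} branch, entering the rebalancing phase (Lines~\ref{line-algo-unoriented-rebalancing-beg}--\ref{line-algo-unoriented-rebalancing-end}) and then the relaying phase (Lines~\ref{line-algo-unoriented-relaying-beg}--\ref{line-algo-unoriented-relaying-end}), so it never touches Lines~\ref{line-algo-unoriented-termination-beg}--\ref{line-algo-unoriented-return-leader}. If $v_2$ eventually terminates at all, the only available return statement in its branch is Line~\ref{line-algo-unoriented-return-nonleader}, which returns $\nonleader$. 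This yields both assertions of the corollary simultaneously.

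There is essentially no obstacle here: the argument is a two-line case analysis on the conditional, with all the work already done in Lemma~\ref{lemma:nomaxnoleader}. The only thing to be mildly careful about is not to over-claim that $v_2$ terminates in the $\nonleader$ state; the statement only forbids $v_2$ from terminating as $\leader$, which is what the branch structure directly gives us.
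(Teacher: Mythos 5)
Your proposal is correct and matches the paper exactly: the corollary is stated there as an immediate consequence of Lemma~\ref{lemma:nomaxnoleader}, precisely because $q=0$ forces $v_2$ into the \texttt{else} branch, whose only return statement is Line~\ref{line-algo-unoriented-return-nonleader}. Your added caution about not over-claiming that $v_2$ terminates is appropriate and consistent with the paper's statement.
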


The following observation follows from the fact that the second minimum identifier in the ring is at least $\id_2+2$. Recall that the reduced ring is obtained by removing node $v_2$ and directly connecting $v_1$ and $v_3$ while respecting the original port labeling.

\begin{observation}\label{obs:reducing}
When the algorithm is executed on the reduced ring, $v_1$ and $v_3$ send at least $\id_2+2$ pulses to each other (and they receive these pulses) during their competing phase. 
\end{observation}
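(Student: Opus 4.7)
The plan is straightforward: apply the competing-phase analysis already developed (Lemmas~\ref{lemma:sync}, \ref{lemma:minexit}, \ref{lemma:nomaxnoleader}) to the reduced ring itself, and then invoke Observation~\ref{observation:obvious1}.

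First, I would observe that in the reduced ring every identifier is at least $\id_2+2$. This is because the doubling step at Line~\ref{line-algo-unoriented-init} makes every identifier even, so no two distinct identifiers can differ by exactly one; since $v_2$ is the unique globally minimum node of the original ring and is precisely the node removed, every remaining identifier is at least $\id_2+2$. Let $v^{*}$ denote the node with minimum identifier in the reduced ring, so $\id(v^{*})\ge \id_2+2$.

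Second, I would note that the proofs of Lemmas~\ref{lemma:sync}, \ref{lemma:minexit}, and \ref{lemma:nomaxnoleader} use only the pseudocode and the uniqueness of the global minimum; they do not rely on any other property of the original ring. They therefore transfer verbatim to the reduced ring, with $v^{*}$ playing the role previously played by $v_2$. Instantiating the reasoning inside the proof of Lemma~\ref{lemma:nomaxnoleader} in the reduced ring then gives that every node in the reduced ring eventually completes at least $\id(v^{*})\ge \id_2+2$ iterations of the competing phase: the minimum node $v^{*}$ completes its $\id(v^{*})$ iterations (Lemma~\ref{lemma:minexit} applied to the reduced ring), and the pseudo-synchronization of Lemma~\ref{lemma:sync} together with the additional pulses that $v^{*}$ delivers during its solitude-checking and rebalancing phases propagates the same bound around the entire reduced ring. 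In particular, both $v_1$ and $v_3$ complete at least $\id_2+2$ iterations of the competing phase.

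Finally, Observation~\ref{observation:obvious1} applied to $v_1$ and $v_3$ implies that, after completing $\id_2+2$ iterations, each of them has sent $\id_2+2$ pulses to, and received $\id_2+2$ pulses from, each of its neighbors. Since $v_1$ and $v_3$ are adjacent in the reduced ring, the statement follows.

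The main step that deserves explicit care is the global propagation inside the reduced ring, i.e.\ upgrading Lemma~\ref{lemma:sync}'s local pseudo-synchronization into the global claim that \emph{every} node completes at least $\id(v^{*})$ iterations (not merely those adjacent to $v^{*}$). I would spell this out by a short deadlock-freedom argument: as long as there is a node still in the competing phase with iteration index below $\id(v^{*})$, the pseudo-synchronization forces its neighbors to be in the competing phase or in a later phase where they are still emitting pulses (solitude check, rebalancing, or relaying), and in every such case enough pulses are eventually delivered to let the node advance; this rules out any contiguous arc of stuck nodes and yields the global bound we need.
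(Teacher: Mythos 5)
Your proof is correct and takes essentially the route the paper intends: the paper states this observation as an immediate consequence of the fact that every identifier remaining in the reduced ring is at least $\id_2+2$, with exactly the ingredients you spell out (Lemmas~\ref{lemma:sync}--\ref{lemma:nomaxnoleader} reapplied to the reduced ring, plus Observation~\ref{observation:obvious1}) left implicit. One small imprecision worth noting: in your deadlock-freedom step, a neighbour that has already left the competing phase need not be ``still emitting pulses''; the cleaner justification is that such a node has already sent, during its own competing phase, at least its identifier's worth of pulses on each port, hence at least $\id(v^{*})\geq \id_2+2$, which is enough for the blocked node to advance through iteration $\id(v^{*})$ — this does not affect the correctness of your argument.
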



\begin{lemma}\label{lemma:rebalancing}
  When node $v_2$ completes the rebalancing phase, $v_2$ has sent
  exactly $\id_2+2$ pulses to node $v_1$ and $\id_2+2$ pulses to
  node $v_3$. Moreover, at the end of the rebalancing phase, $v_2$
  has received exactly $\id_2+2$ pulses from $v_1$ and $\id_2+2$
  pulses from $v_3$.
\end{lemma}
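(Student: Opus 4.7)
The plan is to account for $v_2$'s sends and receives phase by phase, and then to verify that the rebalancing phase actually terminates so that these counts are well-defined at its end.

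For the sends, Observation~\ref{observation:obvious1} shows that the $\id_2$ iterations of the competing loop contribute exactly $\id_2$ pulses on each port. Line~\ref{line-algo-unoriented-check-alone-beg} (in solitude-checking) then adds $2$ pulses on port $0$, and Line~\ref{line-algo-unoriented-rebalancing-beg} (in rebalancing) adds $2$ pulses on port $1$. No other line up to the end of rebalancing sends a pulse, so the totals on each port are exactly $\id_2+2$. For the receives, I would follow the $\rcvpulse$ calls one by one: the competing loop yields $\id_2$ receives on each port; in solitude-checking Line~\ref{line-algo-unoriented-check-alone-rec1} contributes one receive on port $1$ and Line~\ref{line-algo-unoriented-check-alone-end} contributes one more receive, which by Lemma~\ref{lemma:nomaxnoleader} is on port $q=0$; finally the two $\rcvpulse$ calls in rebalancing, Lines~\ref{line-algo-unoriented-rebalancing-rec0} and~\ref{line-algo-unoriented-rebalancing-end}, contribute one receive on port $0$ and one on port $1$, respectively. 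Summed per port this gives $\id_2+2$ receives from each neighbor at the moment $v_2$ completes the rebalancing phase.

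The main obstacle, and really the heart of the argument, is to show that the two $\rcvpulse$ calls in the rebalancing phase actually unblock in finite time. For this I would combine Lemma~\ref{lemma:minexit} with Lemma~\ref{lemma:sync}. When $v_2$ enters rebalancing, both $v_1$ and $v_3$ are still in the competing phase, since $\id_1, \id_3 \geq \id_2 + 2$, and neither has yet received the two extra pulses from $v_2$ that it needs to finish competing. The two pulses sent at Line~\ref{line-algo-unoriented-check-alone-beg} let $v_1$ eventually complete iterations $\id_2+1$ and $\id_2+2$; each of those iterations begins, at Line~\ref{line-algo-unoriented-competing-send0}, by sending a pulse to $v_2$ on port $0$. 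The first of those pulses is precisely what unblocks $v_2$'s $\rcvpulse_q$ at Line~\ref{line-algo-unoriented-check-alone-end} with $q=0$ (as already established by Lemma~\ref{lemma:nomaxnoleader}), and the second then unblocks the $\rcvpulse_0$ at Line~\ref{line-algo-unoriented-rebalancing-rec0}. Symmetrically, the two pulses sent at Line~\ref{line-algo-unoriented-rebalancing-beg} drive $v_3$ through iteration $\id_2+2$ and cause a pulse on port $1$ that unblocks Line~\ref{line-algo-unoriented-rebalancing-end}.

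The subtle point to handle carefully is ensuring that the pulses credited to Lines~\ref{line-algo-unoriented-rebalancing-rec0} and~\ref{line-algo-unoriented-rebalancing-end} really are distinct from those consumed in solitude-checking; this follows because at the moment $v_2$ reaches the rebalancing phase $v_1$ (resp.\ $v_3$) has only sent $\id_2+1$ pulses to $v_2$ (by Observation~\ref{observation:obvious1} applied to its current iteration index, which is at most $\id_2+1$ by Lemma~\ref{lemma:sync}), so any further pulse delivered to $v_2$ on that port must come from a strictly later iteration of its neighbor. Combining this with the send accounting above yields the exact equalities claimed by the lemma. The degenerate case $n=2$, where $v_1=v_3$, is handled identically since all bookkeeping is per port and both ports of $v_2$ are treated symmetrically.
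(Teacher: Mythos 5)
Your line-by-line accounting of $v_2$'s sends (Observation~\ref{observation:obvious1} for the competing loop, plus Lines~\ref{line-algo-unoriented-check-alone-beg} and~\ref{line-algo-unoriented-rebalancing-beg}) and receives (Lines~\ref{line-algo-unoriented-check-alone-rec1}, \ref{line-algo-unoriented-check-alone-end} with $q=0$ via Lemma~\ref{lemma:nomaxnoleader}, \ref{line-algo-unoriented-rebalancing-rec0}, and~\ref{line-algo-unoriented-rebalancing-end}) is exactly the paper's proof, so the core argument is correct and essentially the same. The additional material is not needed for this conditional statement and is slightly off in one side claim---when $v_2$ enters the rebalancing phase, $v_1$ may already have sent $\id_2+2$ pulses, since the two pulses from Line~\ref{line-algo-unoriented-check-alone-beg} can have been delivered and let $v_1$ begin iteration $\id_2+2$---but as each $\rcvpulse$ consumes a distinct pulse, this does not affect the claimed counts.
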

\begin{proof}
  Once $v_2$ terminates its competing phase it has sent/received
  $\id_{2}$ pulses to/from each of its neighbors. At
  Line~\ref{line-algo-unoriented-check-alone-beg}, it sends $2$
  pulses to $v_1$ and at
  Line~\ref{line-algo-unoriented-rebalancing-beg} it sends $2$
  pulses to $v_3$. Moreover, at
  Lines~\ref{line-algo-unoriented-check-alone-rec1}
  and~\ref{line-algo-unoriented-rebalancing-end}, it has received two
  pulses from $v_3$ and at
  Lines~\ref{line-algo-unoriented-check-alone-end}
  and~\ref{line-algo-unoriented-rebalancing-rec0} it has received two
  pulses from $v_1$ (recall, Lemma \ref{lemma:nomaxnoleader}, that if
  Line~\ref{line-algo-unoriented-rebalancing-rec0} is executed then
  $q=0$ at Line~\ref{line-algo-unoriented-check-alone-end}).
\end{proof}

\begin{observation}\label{obs:untilrebalancing}
  From Observation \ref{obs:reducing} and Lemma \ref{lemma:rebalancing},
  $v_1$ and $v_3$ cannot distinguish during the first $\id_2+2$
  iterations of their competing phase if they are in the original ring
  or in the reduced ring.
\end{observation}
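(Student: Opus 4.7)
The plan is to derive the claim directly from the definition of indistinguishability by exhibiting a schedule on the reduced ring $R'$ under which $v_1$ and $v_3$ return from every $\rcvpulse$ instruction of their first $\id_2+2$ competing-phase iterations at exactly the same times as in the given execution on the original ring $R$. The matching pulse counts are already delivered by the two cited results: Observation~\ref{observation:obvious1} implies that $v_1$ (resp.\ $v_3$) sends and receives exactly $\id_2+2$ pulses on its port facing $v_2$ during its first $\id_2+2$ iterations; Lemma~\ref{lemma:rebalancing} shows that $v_2$ produces and consumes exactly that many pulses in each direction on $R$; and Observation~\ref{obs:reducing} guarantees that, on $R'$, $v_1$ and $v_3$ exchange at least $\id_2+2$ pulses directly across the edge that replaces $v_2$, on the very ports that formerly faced $v_2$.

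I would build the schedule by induction on $k \in \{1, \ldots, \id_2+2\}$. Let $T_k^a$ (for $a \in \{1,3\}$) denote the time, in the given execution on $R$, at which $v_a$ returns from the $\rcvpulse$ on its port facing $v_2$ during the $k$-th iteration of its competing loop. Assume inductively that the constructed $R'$-schedule has already reproduced the $\rcvpulse$ return times of $v_1$ and $v_3$ through iteration $k-1$. Then the local states of $v_1$ and $v_3$ at the start of iteration $k$ are identical in the two executions, so both nodes execute their $\sendall(1)$ at Line~\ref{line-algo-unoriented-competing-send0} at the same times. I then instruct the $R'$-scheduler to delay the new pulse just emitted by $v_3$ toward $v_1$ so that it is delivered at time $T_k^1$, and symmetrically for the pulse from $v_1$ to $v_3$. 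On the ports not adjacent to $v_2$, the neighbors of $v_1$ and $v_3$ are identical in the two rings and can inherit the scheduler from $R$ verbatim.

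The main obstacle, and the only nontrivial step, is verifying \emph{causal feasibility} of this schedule: each pulse delivered at time $T_k^1$ (or $T_k^3$) on $R'$ must be sent no later than that time. I would prove this using the following chain of causal dependencies inside $R$: for $v_1$ to receive its $k$-th pulse on the port facing $v_2$, the pulse must have been sent by $v_2$, which in turn required $v_2$ to have advanced past iteration $k$ of its own competing loop and hence to have already received at least $k$ pulses from $v_3$. Therefore $v_3$'s $k$-th $\sendall(1)$ in $R$ strictly precedes $T_k^1$; by the inductive hypothesis, it occurs at the same time in the constructed execution on $R'$, so delaying its delivery to exactly $T_k^1$ is legitimate. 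A symmetric argument handles pulses from $v_1$ to $v_3$, completing the induction and yielding the claimed indistinguishability.
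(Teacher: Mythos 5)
Your construction breaks at exactly the step you flag as the only nontrivial one. You claim that for $v_1$ to receive its $k$-th pulse on the port facing $v_2$, node $v_2$ must already have received at least $k$ pulses from $v_3$, so that $v_3$'s $k$-th $\sendall(1)$ precedes $T_k^1$. This is off by one: in the competing loop each node sends at the \emph{top} of iteration $k$, having completed only $k-1$ iterations, so $v_2$'s $k$-th pulse toward $v_1$ causally depends only on $k-1$ pulses from $v_3$ (and similarly the two solitude-check pulses depend only on $\id_2$ pulses from $v_3$). A scheduler that delivers pulses fast on the $v_1$ side and slowly on the $v_3$ side therefore yields an execution of $R$ in which $v_1$ has already returned from its $k$-th $\rcvpulse$ on the port facing $v_2$ while $v_3$ has performed only $k-1$ sends toward $v_2$. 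For such an execution, no schedule on $R'$ can deliver $v_3$'s $k$-th pulse to $v_1$ at exactly $T_k^1$ while simultaneously keeping $v_3$'s $\rcvpulse$ return times (which determine its send times) equal to those in $R$; your induction therefore cannot be completed, and indeed no time-exact simulation of the kind you aim for exists in general.

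The paper does not attempt such a time-exact simulation; the observation is justified purely by counting. Lemma~\ref{lemma:rebalancing} gives that, through the end of its rebalancing phase, $v_2$ sends exactly $\id_2+2$ pulses to each of $v_1,v_3$ and receives exactly $\id_2+2$ from each, and Observation~\ref{obs:reducing} gives that on the reduced ring at least $\id_2+2$ pulses cross the merged edge in each direction during the competing phases of $v_1$ and $v_3$. Since during their first $\id_2+2$ iterations the behaviour of $v_1$ and $v_3$ depends only on how many pulses have arrived on each port (Observation~\ref{observation:obvious1}), their local views are consistent with the reduced ring, with the delivery delays across the merged edge of $R'$ chosen freely to reproduce the order in which pulses reach $v_1$ and $v_3$. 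If you want to salvage your schedule-based argument, you must weaken the matching requirement from exact global times to matching local sequences of $\rcvpulse$ events (equivalently, matching per-port pulse counts), which is the level of indistinguishability the rest of the section actually uses.
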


We are now ready to prove Proposition \ref{prop:redux}. 
\begin{proof}[Proof of Proposition \ref{prop:redux}]
  By Observation \ref{obs:untilrebalancing}, before node $v_2$
  completes its rebalancing phase, nodes $v_1$ and $v_3$ are not
  able to distinguish this execution from the one on the reduced
  ring. By Lemma \ref{lemma:nomaxnoleader}, node $v_2$ completes its rebalancing phase and it eventually enters the relaying
  phase. It is easy to see that, from this point onward, $v_2$ behaves as
  an asynchronous link between $v_1$ and $v_3$ as long as $v_2$ does not
  exit the loop at Lines~\ref{line-algo-unoriented-relaying-loop-beg}--\ref{line-algo-unoriented-relaying-end}.
\end{proof}

From Proposition \ref{prop:redux}, it follows that as long as no node terminates and there are at least two active nodes, the one with the minimum identifier will enter the relaying phase, and the other active nodes will behave as if this node were not present in the ring.

Observe that the global minimality of $\id_{2}$ is used only in the proof of Lemma~\ref{lemma:nomaxnoleader}, to establish that $v_2$ must enter the relaying phase, all the other lemmas, when needed, they only require the identifier of $v_2$ to be a local minimum among $v_1,v_2,v_3$. As a matter of fact, a node $v'$ may have an identifier that is a local minimum over a sufficiently long sequence of nodes, allowing it, in some executions, to enter the relaying phase before the global minimum. This does not affect the indistinguishability claimed in Proposition~\ref{prop:redux} that remains valid for $v'$ and its two neighbours.

We now show that no node terminates before the one with the maximum
identifier exits its competing phase. Moreover, we show that this
node eventually enters the global termination phase, triggering the
algorithm's termination with the election of the node with the
maximum identifier. We will show that each node terminates when it has
received three more pulses on one port than the other. In the next
lemma, we characterize when such an event can happen.

\begin{lemma}\label{lemma:sent-received-3-relay}
  During the execution, if at some point, some node $v$ has sent
  three more pulses on port $q$ than it has sent
  on port $1-q$, or if $v$ has sent three more
  pulses on port $q$ than it has received on
  port $q$, then either $v$ has executed Line~\ref{line-algo-unoriented-termination-beg},
  or $v$ is a relaying node and $v$ has previously received three more
  pulses on port $1-q$ than on port $q$.

  In any case, when this happens, the node $v$ does
  not send any pulses afterwards.
\end{lemma}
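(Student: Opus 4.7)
The plan is to monitor, for node $v$ throughout its execution, the three quantities $\sigma_0(v) - \sigma_1(v)$, $\sigma_0(v) - \rho_0(v)$ and $\sigma_1(v) - \rho_1(v)$, and to show that none of them reaches $3$ in absolute value until $v$ either executes Line~\ref{line-algo-unoriented-termination-beg} (in the leader branch) or exits the relaying loop of Lines~\ref{line-algo-unoriented-relaying-loop-beg}--\ref{line-algo-unoriented-relaying-end}. At each of these two moments, the ``previously received'' property claimed in the lemma will hold, and in both cases $v$ subsequently performs no further send. I would go through the algorithm phase by phase.

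In the competing phase (Lines~\ref{line-algo-unoriented-competing-beg}--\ref{line-algo-unoriented-competing-end}), a direct induction on the iteration index shows that after the $\sendall(1)$ of iteration $i$ one has $\sigma_0 = \sigma_1 = i$ and $\rho_0 = \rho_1 = i-1$, and each subsequent $\rcvpulse$ only restores one of the two differences to $0$. Hence $|\sigma_0 - \sigma_1| \leq 1$ and $|\sigma_q - \rho_q| \leq 1$ throughout this phase, and neither hypothesis can be triggered. In the solitude-checking phase (Lines~\ref{line-algo-unoriented-check-alone-beg}--\ref{line-algo-unoriented-check-alone-end}), all four counters coincide at entry; then $\sendpulses_0(2)$ raises $\sigma_0 - \sigma_1$ and $\sigma_0 - \rho_0$ to $2$, and the two following receives only decrease $\sigma_q - \rho_q$ by one on each port, so $|\sigma_0 - \sigma_1| \leq 2$ and $|\sigma_q - \rho_q| \leq 2$ there as well.

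In the leader branch, the $\sendpulses_0(1)$ at Line~\ref{line-algo-unoriented-termination-beg} raises $\sigma_0 - \sigma_1$ and $\sigma_0 - \rho_0$ simultaneously to exactly $3$, so both hypotheses of the lemma become true, with $q=0$ in each; Line~\ref{line-algo-unoriented-termination-end} is a pure receive, after which $v$ returns $\leader$ without sending again. In the non-leader branch, the rebalancing phase (Lines~\ref{line-algo-unoriented-rebalancing-beg}--\ref{line-algo-unoriented-rebalancing-end}) brings $\sigma_0 - \sigma_1$ back to $0$ and each $\sigma_q - \rho_q$ back to $0$, so no hypothesis is triggered before the relay loop begins. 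In the relaying phase, the body yields the invariant $\sigma_0 - \sigma_1 = \sigma_0 - \rho_0 = \diff$ and $\sigma_1 - \rho_1 = -\diff$, since an iteration with $q=0$ decreases $\sigma_0 - \sigma_1$, $\sigma_0 - \rho_0$ and $\diff$ by one and increases $\sigma_1 - \rho_1$ by one, while an iteration with $q=1$ does the reverse. The loop exits precisely when $|\diff| = 3$, so no hypothesis is satisfied strictly before that moment. At exit with $\diff = 3$, both hypotheses hold with $q=0$, and the equality $\rho_1 - \rho_0 = \diff = 3$ realises the ``previously received three more pulses on port $1-q$ than on port $q$'' clause; symmetrically with $q=1$ when $\diff = -3$. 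Afterwards, $v$ returns $\nonleader$ and performs no further send.

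The main obstacle will be the sign bookkeeping inside the relaying phase: showing that, at the single exit moment of the loop, the value of $q$ appearing in each of the two hypotheses matches the port direction appearing in the conclusion, and that all three relevant equalities become tight simultaneously. Once this alignment is set up, everything else reduces to direct tracking of counter updates within each phase.
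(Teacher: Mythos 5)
Your proposal is correct and follows essentially the same route as the paper's proof: phase-by-phase bookkeeping of sent/received counts (balanced in competing, $+2$ on port $0$ in solitude-checking, a third pulse only at the leader's termination line or rebalanced to zero otherwise), plus the relaying invariant that pulses sent on a port equal pulses received on the opposite port, so the difference can reach $3$ only after three extra receipts, exactly when $|\diff|=3$ and the node stops sending. The explicit tracking of $\sigma_0-\sigma_1$, $\sigma_q-\rho_q$ and their identification with $\diff$ is just a more detailed rendering of the same argument.
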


\begin{proof}
  During the competing phase, on each port, each node alternately 
  sends a pulse on each port and receives a pulse on each port.
  At the end of the competing phase, node with identifier $\id$
  has sent and received $\id$ pulses on each port.  At
  Line~\ref{line-algo-unoriented-check-alone-beg}, it sends two
  pulses on port $0$. Then, if the condition at
  Line~\ref{line-algo-unoriented-test-alone} is true, it sends a third
  pulse on port $0$ (i.e., it executes
  Line~\ref{line-algo-unoriented-termination-beg}) and stops sending
  pulses afterwards.

  Suppose now that the condition at Line~\ref{line-algo-unoriented-test-alone} is not satisfied. Then it
  means that at Line~\ref{line-algo-unoriented-check-alone-end}, $q = 0$ and at this point, $v$ has sent $\id+2$
  pulses on port $0$, has sent $\id$ pulses on port $1$, has
  received $\id+1$ pulses on port $0$, and has received $\id+1$
  pulses on port $1$. After the rebalancing phase (Lines~\ref{line-algo-unoriented-rebalancing-beg} to~\ref{line-algo-unoriented-rebalancing-end}),
  $v$ has sent and received $\id+2$ pulses on each port. 

  After that, it just relays the pulses in the order they arrive and
  stores the difference between the number of pulses sent on each port in the
  variable $\diff$. Thus, after each iteration of the loop of the
  relaying phase, the number of pulses it has sent on a port $q$
 is precisely the number of pulses it has received
  on port $1-q$.  Thus, it will send three more
  pulses on port $q$ than on port $1-q$ only if it has received three more pulses on port $1-q$
  than on port $q$. When this
  happens, $|\diff| = 3$ and node $v$ exits the loop and
  terminates.
\end{proof}

\begin{corollary}\label{cor:termination}
  If a node $v$ terminates the algorithm (i.e., executes
  Line~\ref{line-algo-unoriented-return-leader}
  or~\ref{line-algo-unoriented-return-nonleader}), then there is a
  node $v'$ that has executed
  Line~\ref{line-algo-unoriented-termination-beg}, and $v'$ will
  eventually terminate in the $\leader$ state.
\end{corollary}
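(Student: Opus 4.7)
The plan is to split on which return statement $v$ executes. In the easy case, $v$ returns $\leader$ at line~\ref{line-algo-unoriented-return-leader}; the only code path leading there passes through line~\ref{line-algo-unoriented-termination-beg}, so taking $v' = v$ immediately satisfies both claims.

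The harder case is when $v$ returns $\nonleader$ at line~\ref{line-algo-unoriented-return-nonleader}. Then $v$ must have exited the repeat loop with $|\diff(v)| = 3$, which by the relaying semantics (each iteration forwards to the opposite port, and $v$ entered relaying balanced by Lemma~\ref{lemma:rebalancing}) means that in total since the algorithm's start, $v$ has sent three more pulses on one port than on the other. The plan is to apply Lemma~\ref{lemma:sent-received-3-relay} and propagate backwards around the ring. Applied to $v$ itself, the lemma rules out $v$ executing line~\ref{line-algo-unoriented-termination-beg} (since $v$ returns $\nonleader$), so $v$ must be a relaying node that previously received three more pulses on the opposite port. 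Those excess pulses originated as sends by $v$'s neighbor, so I would let $v'$ be the chronologically first node in the ring that ever achieves a send-imbalance of three on some port. Applied to $v'$, Lemma~\ref{lemma:sent-received-3-relay} leaves only the possibility that $v'$ executed line~\ref{line-algo-unoriented-termination-beg}: the alternative would require $v'$ to be a relaying node with a strictly earlier receive imbalance of three, which via $\rho \le \sigma$ on the relevant edge would force an even earlier send imbalance of three at a neighbor, contradicting the minimality of $v'$.

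For the second claim of the corollary, once $v'$ has executed line~\ref{line-algo-unoriented-termination-beg} it blocks at line~\ref{line-algo-unoriented-termination-end} awaiting a pulse on port~$1$. Iterating Proposition~\ref{prop:redux} together with Corollary~\ref{cor:loser} shows that every node other than $v'$ eventually enters the relaying phase and thereafter simply forwards each received pulse to the opposite port. Consequently, the extra pulse $v'$ emits on port~$0$ at line~\ref{line-algo-unoriented-termination-beg} propagates around the ring in finite time, returns on port~$1$ of $v'$, unblocks line~\ref{line-algo-unoriented-termination-end}, and causes $v'$ to return $\leader$ at line~\ref{line-algo-unoriented-return-leader}.

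The hard part will be the minimality argument in the trace-back. Receive imbalances at a relaying node can be amplified by the scheduler's delivery order, so the step from ``prior receive imbalance at $v'$'' to ``strictly earlier send imbalance at some neighbor'' requires careful use of the edge accounting $\rho \le \sigma$ combined with the near-synchronization bound of Lemma~\ref{lemma:sync}. The goal is to show that, at the chronologically first node to achieve a send-imbalance of three, the ``relaying with earlier receive imbalance'' branch of Lemma~\ref{lemma:sent-received-3-relay} is unavailable, so line~\ref{line-algo-unoriented-termination-beg} is the only remaining explanation.
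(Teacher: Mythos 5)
Your skeleton is the same as the paper's: the leader case is immediate, and for the non-leader case the paper likewise traces the imbalance backwards around the ring via Lemma~\ref{lemma:sent-received-3-relay} together with a minimality argument (phrased there as a contradiction over the \emph{earliest cross-port receive imbalance}, whereas you minimize over the earliest cross-port \emph{send} imbalance). The problem is that the step you yourself flag as ``the hard part'' is a genuine gap, and the mechanism you propose for it does not work. At your minimal node $v'$, the alternative branch of Lemma~\ref{lemma:sent-received-3-relay} gives a prior receive imbalance: three more pulses received on port $1-q$ than on port $q$. This compares traffic on two \emph{different} edges, so ``$\rho\le\sigma$ on the relevant edge'' only tells you that the neighbour $w$ behind port $1-q$ has sent at least three more pulses to $v'$ than $v'$ has received from its \emph{other} neighbour; to conclude that $w$ has a cross-port send imbalance of three you would additionally need to bound $w$'s sends on its far edge by $v'$'s receives on $v'$'s far edge, and neither edge accounting nor Lemma~\ref{lemma:sync} (which concerns only the competing phase) gives any such cross-edge relation.

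The way to close the hop---and what the paper actually does---is to route it through the \emph{second} trigger condition of Lemma~\ref{lemma:sent-received-3-relay}, which is a single-edge quantity: a relaying node with cross-port receive imbalance three has, by the relaying invariant together with the balanced baseline of Lemma~\ref{lemma:rebalancing}, sent three more pulses to its other neighbour than it has received from that neighbour; edge accounting then shows that, at a strictly earlier time, this neighbour had sent three more pulses on that one edge than it had received on it, and applying the lemma to the neighbour under this second condition yields either Line~\ref{line-algo-unoriented-termination-beg} or a yet-earlier receive imbalance, contradicting minimality. Your send-imbalance minimality can be salvaged with this same two-step hop (since forwarding of the third excess pulse is instantaneous, a receive imbalance of three at a relaying node coincides with a send imbalance of three there), but as written the decisive step is missing and the tools you name are the wrong ones. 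Your handling of the first case and of the second claim---that the termination pulse circulates because every other node eventually relays, by iterating Proposition~\ref{prop:redux} and Corollary~\ref{cor:loser}---is consistent with the paper, which in effect defers that part to Corollary~\ref{cor:loser-all} and Lemma~\ref{lemma:termundir}.
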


\begin{proof}
  By contradiction, suppose that some node $v$ terminates and that
  no node has executed Line~\ref{line-algo-unoriented-termination-beg} before. Then $v$ necessarily
  terminates the algorithm at
  Line~\ref{line-algo-unoriented-return-nonleader}, and thus at some
  time $t$, $v$ has received three more pulses on some port $q$ than
  on port $1-q$. Among all such nodes, consider the node $v$
  such that this time $t$ is minimal. Let $v'$ be the neighbor of $v$
  behind port $q$. By Lemma~\ref{lemma:sent-received-3-relay}, $v$ is
  a relaying node, and $v$ has received three more pulses from
  $v'$ than it has sent to $v'$.  Consequently, there exists a time
  $t' <t$ where $v'$ has sent at least three more pulses to $v$ than
  it has received from $v$. By Lemma~\ref{lemma:sent-received-3-relay}
  and since we assumed that $v'$ has not executed Line~\ref{line-algo-unoriented-termination-beg}, it implies
  that at time $t'$, $v'$ has received three more pulses on one port
  than on the other. But this contradicts the definition of $v$ and
  $t$.
\end{proof}

By iteratively applying Corollary~\ref{cor:loser} and Proposition~\ref{prop:redux}, and by observing that, by Corollary~\ref{cor:termination}, no node terminates early, we obtain the following corollary:

\begin{corollary}\label{cor:loser-all}
  The node $v$ with the maximum identifier eventually becomes the only
  competing node. 
  Moreover, any other node eventually becomes a relaying node and
  cannot terminate the algorithm in the $\leader$ state.
\end{corollary}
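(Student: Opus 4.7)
The plan is to prove the corollary by induction on the ring size via iterated application of Proposition~\ref{prop:redux}. Define a descending sequence of rings $R^{(0)} \supsetneq R^{(1)} \supsetneq \cdots \supsetneq R^{(n-2)}$ as follows: $R^{(0)}$ is the original ring, and $R^{(k+1)}$ is obtained from $R^{(k)}$ by deleting the node $v^{(k+1)}$ of smallest identifier in $R^{(k)}$ and directly connecting its two neighbors, preserving the port labeling. The final ring $R^{(n-2)}$ contains exactly two nodes: $v^*$ (the node of maximum identifier in the original ring) and one surviving neighbor.

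The inductive claim is that, for every $k \in \{0,1,\ldots,n-2\}$, the execution of Algorithm~\ref{algo:unoriented} on $R^{(0)}$ is indistinguishable---from the viewpoint of every node in $R^{(k)}$---from a valid execution of the algorithm run on $R^{(k)}$. The base case $k=0$ is trivial. For the inductive step, I would apply Proposition~\ref{prop:redux} to $R^{(k)}$ with $v^{(k+1)}$ as its global minimum. The hypothesis that $v^{(k+1)}$ does not stop is secured by combining Corollary~\ref{cor:loser} (by which $v^{(k+1)}$ enters the rebalancing and then the relaying phase, rather than ever terminating in the $\leader$ state) with Corollary~\ref{cor:termination} (which forbids any node from terminating before some node executes Line~\ref{line-algo-unoriented-termination-beg}, and that node cannot be $v^{(k+1)}$). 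With these two ingredients in place, Proposition~\ref{prop:redux} yields indistinguishability between executions on $R^{(k)}$ and $R^{(k+1)}$, and chaining with the inductive hypothesis completes the step.

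Both assertions of the corollary follow from the claim at $k=n-2$. For the first assertion, the two-node view $R^{(n-2)} = \{v^*, w\}$ shows, via Lemma~\ref{lemma:nomaxnoleader} applied inside $R^{(n-2)}$, that $w$ (the local minimum there) exits the competing phase first and enters the relaying phase, leaving $v^*$ as the unique competing node in $R^{(n-2)}$ and, by indistinguishability, in $R^{(0)}$. For the second assertion, each $v^{(k+1)}$ is the global minimum of $R^{(k)}$, so Corollary~\ref{cor:loser} applied within $R^{(k)}$ (via the inductive claim) shows that $v^{(k+1)}$ becomes a relaying node and cannot terminate as $\leader$. The main obstacle is the formal transfer of Corollary~\ref{cor:termination} across the chain of indistinguishabilities: the corollary is stated for the actual execution on the original ring, and one must verify that the induced executions on each $R^{(k)}$ also satisfy it, so that the ``does not stop'' hypothesis of Proposition~\ref{prop:redux} remains valid throughout the reduction. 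A minor subtlety is the base ring $R^{(n-2)}$ of size $2$ (with $v_1 = v_3$), explicitly allowed by Proposition~\ref{prop:redux} but requiring slightly careful bookkeeping on the port labels of the surviving neighbor.
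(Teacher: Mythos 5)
Your proposal is correct and follows essentially the same route as the paper, which obtains the corollary precisely by iterating Proposition~\ref{prop:redux} together with Corollary~\ref{cor:loser}, using Corollary~\ref{cor:termination} to rule out early termination; you merely make the iteration explicit as an induction over the chain of reduced rings. The subtleties you flag (transferring Corollary~\ref{cor:termination} to the induced executions, and the size-2 base case) are handled implicitly in the paper and do not pose a genuine obstacle, since those statements are proved for arbitrary executions of the algorithm on any ring.
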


We now show that the node with the maximum identifier eventually enters the $\leader$ state and that, after this point, no more pulses are sent in the network, and every other node eventually enters the $\nonleader$ state.

\begin{lemma}\label{lemma:termundir}

  In a ring containing only one competing node $v$, the condition
  at Line~\ref{line-algo-unoriented-test-alone} is true for
  $v$. Moreover, $v$ will eventually execute
  Lines~\ref{line-algo-unoriented-termination-beg},
  \ref{line-algo-unoriented-termination-end},
  and~\ref{line-algo-unoriented-return-leader}, terminating the
  algorithm in the $\leader$ state. Furthermore, when $v$ executes Line~\ref{line-algo-unoriented-return-leader}, there are no pulses in transit and  for every other node $v'$, we have $|\diff(v')| = 3$ and thus  $v'$ eventually terminates in the $\nonleader$ state.

\end{lemma}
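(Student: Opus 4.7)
The plan is to combine an iterated application of Proposition~\ref{prop:redux} (to handle $v$'s own trace) with direct pulse-count bookkeeping (to handle the other nodes and the no-pulses-in-transit claim). First I would iterate Proposition~\ref{prop:redux} $n-1$ times, each time removing the current minimum-id node (which, in our setting, is always a relaying node that has not yet stopped). This reduces the actual ring to a virtual single-node ring consisting of $v$ alone with a self-loop linking port $0$ to port $1$. In this virtual ring, $v$ trivially completes its $\id_v$ iterations of the competing phase (each pulse sent on one port arrives on the other via the self-loop). The $2$ pulses sent on port $0$ at Line~\ref{line-algo-unoriented-check-alone-beg} both arrive on port $1$, so the wait at Line~\ref{line-algo-unoriented-check-alone-rec1} returns, and the wait at Line~\ref{line-algo-unoriented-check-alone-end} also returns with $q = 1$. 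Hence $v$ enters the $\leader$ branch, sends the termination pulse at Line~\ref{line-algo-unoriented-termination-beg}, receives it back at Line~\ref{line-algo-unoriented-termination-end}, and returns $\leader$ at Line~\ref{line-algo-unoriented-return-leader}. The indistinguishability chain then guarantees that the same $\rcvpulse$ trace occurs at $v$ in the actual ring, which establishes the first parts of the lemma.

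Second, I would analyze the other nodes directly in the actual ring. Throughout the solitude-check and leader phases, $v$ sends exactly $3$ pulses on port $0$ and none on port $1$. Since every relaying node $v'$ immediately forwards each received pulse on the opposite port, these $3$ pulses propagate in the forward direction around the ring and arrive at each relaying node on the same port (the one facing $v$'s port~$0$ along the forward orientation). Each arrival changes $\diff(v')$ by the same sign, so after the third pulse passes, $|\diff(v')| = 3$ and $v'$ exits the loop, terminating as $\nonleader$. The fact that the third pulse eventually returns to $v$ on port $1$ (enabling $v$'s return from Line~\ref{line-algo-unoriented-termination-end}) means that all three pulses have completed their round trip, so each relaying node has processed them by the time $v$ terminates.

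For the no-pulses-in-transit claim at the moment $v$ returns $\leader$, I would establish the per-node invariants $\sigma_0(w) = \rho_1(w)$ and $\sigma_1(w) = \rho_0(w)$ at $w$'s termination: for $v$, these hold since $\sigma_0(v) = \rho_1(v) = \id_v + 3$ and $\sigma_1(v) = \rho_0(v) = \id_v$, combining Observation~\ref{observation:obvious1} with the solitude/leader sends; for each relaying node, they follow from Lemma~\ref{lemma:rebalancing} combined with the relaying loop's one-in/one-out forwarding. Propagating these per-node balances around the ring then yields per-edge balances, so no pulses remain in transit. The main obstacle I anticipate is precisely this bookkeeping step: turning the per-node invariants into per-edge balances requires carefully accounting for the pulses sent by each relaying node during its own competing, solitude, and rebalancing phases and checking that they combine correctly with the forwarding contributions during relaying. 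A small case (e.g., $n = 3$) should make the accounting transparent, after which the general case follows by a summation argument along the ring.
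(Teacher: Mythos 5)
Your first paragraph matches the paper's argument: iterate Proposition~\ref{prop:redux} to conclude that $v$ cannot distinguish the real ring from a one-node ring, so its two solitude pulses return on port $1$, the test at Line~\ref{line-algo-unoriented-test-alone} succeeds, and $v$ runs Lines~\ref{line-algo-unoriented-termination-beg}--\ref{line-algo-unoriented-return-leader}. The gap is in how you handle the other nodes. Your claim that the three pulses sent by $v$ (two at Line~\ref{line-algo-unoriented-check-alone-beg}, one at Line~\ref{line-algo-unoriented-termination-beg}) drive every relaying node to $|\diff(v')|=3$ tacitly assumes that, when these pulses start circulating, every $v'$ has $\diff(v')=0$ and no other pulses are in transit anywhere in the ring. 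You never establish this baseline, and without it the argument fails in both directions: if some $\diff(v')$ were $-1$ at that moment, three same-sign increments leave it at $2$ and $v'$ never exits the relay loop; if it were $+1$, $v'$ reaches $|\diff|=3$ after only two of the pulses, exits the loop \emph{before} forwarding the third, and then $v$ never gets its termination pulse back at Line~\ref{line-algo-unoriented-termination-end}. The paper closes exactly this hole with an intermediate claim pinned at the moment $v$ exits its competing phase: at that instant every other node is a relaying node, each has sent and received equal numbers of pulses on each port (via Observation~\ref{observation:obvious1} and Lemma~\ref{lemma:rebalancing}), hence there are no pulses in transit and $\diff(v')=0$ for all $v'\neq v$; only then does it track the three pulses, obtaining $|\diff(v')|=2$ when $v$ passes Line~\ref{line-algo-unoriented-check-alone-end}, then $3$, and quiescence once the third pulse returns to $v$.

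Your third paragraph is aimed at the right kind of conservation statement (per-node balances $\sigma_p(w)=\rho_{1-p}(w)$ telescoped around the ring), and anchored at the moment $v$ finishes competing it would yield precisely the missing clean-state claim. But as written you anchor it at each node's termination time, and you explicitly defer the bookkeeping that turns per-node balances into per-edge balances (``a small case should make the accounting transparent''), which is the very step that needs to be proved. So the proposal is on the paper's track for the leader side, but the non-leader/quiescence side has a genuine missing step rather than just an omitted routine calculation.
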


\begin{proof}
  Suppose that there is a moment $t$ where the ring contains only one
  competing node $v$. Then, by Proposition~\ref{prop:redux},
  Corollary~\ref{cor:loser}, and Corollary~\ref{cor:termination}
  applied iteratively to all nodes of the ring different from $v$,
  node $v$ cannot distinguish the original ring from a ring
  containing only $v$.

  Thus during the competing phase, $v$ sends and receives $\id(v)$
  pulses to itself in each direction. In the solitude-checking
  phase, $v$ sends two pulses on port $0$ that are eventually
  delivered to itself on port $1$. Once $v$ has received both
  pulses, the condition at
  Line~\ref{line-algo-unoriented-test-alone} is true and $v$ enters
  the global termination phase. It sends another pulse to itself on
  port $0$ (Line~\ref{line-algo-unoriented-termination-beg}) that is
  eventually delivered on port $1$ and $v$ can then execute
  Lines~\ref{line-algo-unoriented-termination-end}
  and~\ref{line-algo-unoriented-return-leader} terminating in the
  $\leader$ state.

  Note that by Corollary~\ref{cor:loser}, when $v$ exits its competing
  phase, every other node $v'$ is a relaying node. Consequently,
  at this point, it has sent and received the same number of pulses
  on each port by Observation \ref{observation:obvious1}. Therefore,
  at this point, there are no pulses in transit in the network and,
  by Lemma~\ref{lemma:rebalancing}, for each node $v' \neq v$, we
  have $\diff(v') = 0$.

  Once $v$ has executed
  Line~\ref{line-algo-unoriented-check-alone-beg}, there are two
  pulses in transit in the network that are moving in the same
  direction and each node $v' \neq v$ relays them (through the same
  port) before they reach $v$ on port $1$. So when $v$ executes
  Line~\ref{line-algo-unoriented-check-alone-end}, for each node
  $v' \neq v$, we have $|\diff(v')| = 2$. Then, when $v$ executes
  Line~\ref{line-algo-unoriented-termination-beg}, it sends a third
  pulse in the same direction that will eventually reach $v$ on port
  $1$, at which point there are no pulses in transit.  In the
  meantime, every node $v' \neq v$ will have relayed it and its
  variable $\diff(v')$ will satisfy $|\diff(v')| = 3$. Thus, after
  $v'$ has relayed this third pulse, $v'$ will exit the loop in the
  relaying phase and it will terminate in the $\nonleader$ state.
\end{proof}

\begin{proof}[Proof of \Cref{thm:main2}]
It suffices to show that Algorithm~\ref{algo:unoriented} is a correct leader election algorithm for all unoriented rings and it has a message complexity of $O(n \id_{\max})$.

  The correctness of the algorithm follows directly from Corollary
  \ref{cor:loser-all} and Lemma \ref{lemma:termundir}.  Regarding the
  message complexity, take the node $v_{\max}$ with maximum
  identifier $\id_{\max}$, by Corollary \ref{cor:loser-all} and
  Lemma \ref{lemma:termundir}, this node will execute Lines \ref{line-algo-unoriented-init}--\ref{line-algo-unoriented-return-leader}. It
  is immediate that the number of pulses sent executing these lines
  is $4\id_{\max}+3$ (the factor 4 comes from the fact that we
   initially double each identifier and send pulses in both directions).  Consider now any other node
  $v_j$ with identifier $\id_{j}\neq \id_{\max}$, by Corollary
  \ref{cor:loser} and Lemma \ref{lemma:rebalancing}, $v_j$ first
  executes Lines \ref{line-algo-unoriented-init}--\ref{line-algo-unoriented-check-alone-end} and Lines \ref{line-algo-unoriented-rebalancing-beg}--\ref{line-algo-unoriented-rebalancing-end}, the number of pulses sent at
  these lines is $4\id_{j}+4$, and then it executes Lines \ref{line-algo-unoriented-relaying-loop-beg}--\ref{line-algo-unoriented-relaying-end}, but
  here it will only send a pulse when it receives one. Therefore,
  Lemma \ref{lemma:termundir} and the fact that $v_{\max}$ sends
  $4\id_{\max}+3$ pulses show that $v_j$ cannot send more than
  $\delta_{j}=(4\id_{\max}+3)-(4\id_{j}+4)$ pulses while
  relaying. This implies that the total number of pulses, and thus messages, sent by
  Algorithm \ref{algo:unoriented} is at most $n(4\id_{\max}+3)$.
\end{proof}

\section{Conclusions and Open Questions}
In this paper, we demonstrate that non-uniform  quiescently terminating content-oblivious leader election is achievable in general 2-edge-connected networks, and that uniform quiescently
 terminating content-oblivious leader election is achievable in unoriented rings. 
 
Previously, such leader election was only known to be possible in ring topologies---specifically, a quiescently terminating algorithm in oriented rings and a stabilizing algorithm in unoriented rings~\cite{content-oblivious-leader-election-24}. Consequently, we remove the need for a preselected leader in the general algorithm simulation result by Censor-Hillel, Cohen, Gelles, and Sela~\cite{fully-defective-22}, trading this requirement for the much weaker assumption of non-uniformity. For the special case of ring topologies, we provide a definitive answer on the computational equivalence of fully-defective communication and noiseless communication.

Several intriguing open questions remain. The foremost is whether 
\emph{uniform} content-oblivious leader election algorithm---one that requires no prior knowledge of the network---is possible for general 2-edge-connected toplogies.
So far, this has only been established for ring topologies—first for oriented rings~\cite{content-oblivious-leader-election-24} and later for unoriented rings (\Cref{thm:main2}). For general 2-edge-connected topologies, our current algorithm requires an upper bound $N$ on the number of nodes $n$, as node identifiers must be spaced at least $N$ apart to ensure that no node other than the leader $r$ satisfies the $\leader$ exit condition $\counter(v) = \id(v)$. This reliance on $N$ seems inherent to our counting-based approach, in which counters of adjacent nodes may differ by only one.
\color{black}

Another important open question is whether the efficiency of our algorithms can be improved. 
Our algorithm for general 2-edge-connected topologies is highly sequential due to its use of DFS.
 A common way to measure \emph{time complexity} of an asynchronous algorithm is to assume that each message takes one unit of time to be delivered~\cite{attiya2004distributed}. Under this model, the time complexity of our algorithm is $O(m \cdot N \cdot \idmin)$, matching its message complexity, as the DFS traverses the $m$ edges sequentially, and the traversal of each edge requires sending $O(N \cdot \idmin)$ pulses. It remains an open question whether we can reduce this time complexity by making the algorithm less sequential.

Much less is known about message complexity lower bounds for content-oblivious leader election. The only known result is a lower bound of $\Omega\left(n \cdot \log \frac{k}{n}\right)$, where $k$ is the number of distinct, assignable identifiers in the network, due to Frei, Gelles, Ghazy, and Nolin~\cite{content-oblivious-leader-election-24}. Currently, there remains a significant gap between this lower bound and the upper bounds, both in rings and in general 2-edge-connected networks. A key open question is whether the message complexity for general 2-edge-connected networks is inherently higher than that of ring topologies. If so, how can we leverage the structural properties of the hard instances to establish such a lower bound?

\printbibliography
\end{document}